\newtheorem{lemma}{\textbf{Lemma}}
\newtheorem{proposition}{\textbf{Proposition}}
\newtheorem{mydef}{\textbf{Definition}}
\newtheorem{assumption}{\textbf{Assumption}}
\begin{document}
%
% paper title
% Titles are generally capitalized except for words such as a, an, and, as,
% at, but, by, for, in, nor, of, on, or, the, to and up, which are usually
% not capitalized unless they are the first or last word of the title.
% Linebreaks \\ can be used within to get better formatting as desired.
% Do not put math or special symbols in the title.
\title{Proactive UAV Network Slicing for URLLC and Mobile Broadband Service Multiplexing}
%
%
% author names and IEEE memberships
% note positions of commas and nonbreaking spaces ( ~ ) LaTeX will not break
% a structure at a ~ so this keeps an author's name from being broken across
% two lines.
% use \thanks{} to gain access to the first footnote area
% a separate \thanks must be used for each paragraph as LaTeX2e's \thanks
% was not built to handle multiple paragraphs
%
%\author{\IEEEauthorblockN{Peng Yang\IEEEauthorrefmark{1}, Xing Xi\IEEEauthorrefmark{2}, Tony Q S Quek\IEEEauthorrefmark{1}, Jingxuan Chen\IEEEauthorrefmark{2}, Xianbin Cao\IEEEauthorrefmark{2} and Dapeng Wu\IEEEauthorrefmark{3}}
%
%\IEEEauthorblockA{\IEEEauthorrefmark{1}Singapore University of Technology and Design, Singapore, email:{\{peng\_yang,tonyquek\}}@sutd.edu.sg}
%\IEEEauthorblockA{\IEEEauthorrefmark{2}Beihang University, Beijing, China, email:\{xixing,chenjingxuan,xbcao\}@buaa.edu.cn}
%\IEEEauthorblockA{\IEEEauthorrefmark{3}University of Florida, Gainesville, FL 32611, USA, email:dpwu@ufl.edu}
%}

%\author{Peng Yang, Tony Q. S. Quek,~\IEEEmembership{Fellow,~IEEE}% <-this % stops a space
%\thanks{
%P. Yang and T. Q. S. Quek are with the Information Systems Technology and Design, Singapore University of Technology and Design, 487372 Singapore.
%}
%}

\author{Peng Yang, Xing Xi, Kun Guo, Tony Q. S. Quek,~\IEEEmembership{Fellow,~IEEE}, Jingxuan Chen, and

Xianbin Cao,~\IEEEmembership{Senior Member,~IEEE}% <-this % stops a space
\thanks{
P. Yang, K. Guo and T. Q. S. Quek are with the Information Systems Technology and Design, Singapore University of Technology and Design, 487372 Singapore.

X. Xi, J. Chen, and X. Cao are with the School of Electronic and Information Engineering, Beihang University, Beijing 100083, China, and also with the Key Laboratory of Advanced Technology, Near Space Information System (Beihang University), Ministry of Industry and Information Technology of China, Beijing 100083, China.

This paper was presented in part in the IEEE Global Communications Conference 2020 \cite{yang2020repeatedly}.
}
}

\maketitle

% As a general rule, do not put math, special symbols or citations
% in the abstract or keywords.
\begin{abstract}
The unmanned aerial vehicle (UAV) network that is convinced as a significant component of 5G and emerging 6G wireless networks is desired to accommodate multiple types of service requirements simultaneously. However, how to converge different types of services onto a common UAV network without deploying an individual network solution for each type of service is challenging. We tackle this challenge in this paper through slicing the UAV network, i.e., creating logical UAV networks customized for specific requirements. To this end, we formulate the UAV network slicing problem as a sequential decision problem to provide mobile broadband (MBB) services for ground mobile users while satisfying ultra-reliable and low-latency requirements of UAV control and non-payload signal delivery. This problem, however, is difficult to be directly solved mainly due to the sequence-dependent characteristic and the lack of accurate location information of mobile users and accurate and tractable channel gain models in practice. To overcome these difficulties, we propose a novel solution approach based on learning and optimization methods. Particularly, we develop a distributed learning method to predict mobile users' locations, where partial user location information stored on each UAV is utilized to train user location prediction networks. To achieve accurate channel gain models, we design deep neural networks (DNNs) that are trained by signal measurements at each UAV. To cope with the challenging sequence-dependent characteristic of the problem, we develop a Lyapunov-based optimization framework with provable performance guarantees to decompose the original problem into a sequence of separate optimization subproblems based on the learned results. Finally, an alternative optimization scheme joint with a successive convex approximation technique is exploited to solve these subproblems. Simulation results demonstrate the accuracy of the learning methods as well as the effectiveness of the Lyapunov-based optimization framework.
\end{abstract}
%(i.e., low-altitude unmanned aerial vehicles (UAVs) equipped with base stations)
%  so as to maximize the reward
%, the use of which as radio access platforms in 5G and beyond-5G wireless communication networks has received substantial attentions in research community.

% Note that keywords are not normally used for peerreview papers.
\begin{IEEEkeywords}
UAV network slicing, URLLC, mobile broadband service, learning and optimization
\end{IEEEkeywords}

% For peer review papers, you can put extra information on the cover
% page as needed:
% \ifCLASSOPTIONpeerreview
% \begin{center} \bfseries EDICS Category: 3-BBND \end{center}
% \fi
%
% For peerreview papers, this IEEEtran command inserts a page break and
% creates the second title. It will be ignored for other modes.
\IEEEpeerreviewmaketitle

\section{Introduction}
% The very first letter is a 2 line initial drop letter followed
% by the rest of the first word in caps.
%
% form to use if the first word consists of a single letter:
% \IEEEPARstart{A}{demo} file is ....
%
% form to use if you need the single drop letter followed by
% normal text (unknown if ever used by IEEE):
% \IEEEPARstart{A}{}demo file is ....
%
% Some journals put the first two words in caps:
% \IEEEPARstart{T}{his demo} file is ....
%
% Here we have the typical use of a "T" for an initial drop letter
% and "HIS" in caps to complete the first word.
\IEEEPARstart{5}{G} and emerging 6G wireless networks are expected to be highly agile and resilient and brace the capability of fast communication service recovery \cite{Gupta2016Survey} in case of network failure (e.g., infrastructure damage, flash crowd areas and remote areas). To achieve such an ambitious goal, the unmanned aerial vehicle (UAV) network has been considered as a significant component of 5G and 6G networks owing to its unique rapid response-ability and reduced vulnerability to natural disasters \cite{Cao2018Airborne}.

Meanwhile, 5G and 6G networks are convinced to accommodate different service requirements concerning communication latency, network throughput and communication reliability.
However, it is quite challenging to design a UAV network to satisfy these diverse service requirements simultaneously as UAVs have stringent size, weight and power consumption requirements. Fortunately, a UAV network can benefit from the network slicing characteristic of 5G and 6G networks via enabling virtually isolated on-board processing systems. In this way, multiple services can be converged onto a common UAV infrastructure, and the number of hardware components on UAVs can also be minimized, providing novel on-board system realizations \cite{garcia2019performance}.

\subsection{Prior works}
Recently, many research interests \cite{Sarah2018Autonomous,hellaoui2018aerial,garcia2019performance,xilouris2018uav,faraci2020design} have been paid to the UAV network slicing due to the significant role of the UAV network in 5G and 6G and the urgent requirement of improving the cost efficiency of providing diverse communication services by deploying UAVs.
For example,
the work in \cite{garcia2019performance} evaluated the performance of network slicing for UAV communications and demonstrated that the slicing was effective in terms of UAV payload slice and UAV control slice isolation.
A network slicing demo over 5G radio for UAV communications was also demonstrated in \cite{Sarah2018Autonomous}. In this demo, the network was virtually sliced into two slices where one slice was created for sending commands to a UAV, and the other slice was created to transmit payload from the UAV to a ground user.
%The work in \cite{hellaoui2018aerial} developed a UAV control system for LTE/5G networks that separated the UAV control plane and payload plane and enabled control links among UAVs in the air.
%The work in \cite{xilouris2018uav} discussed the architecture and possible applications of UAVs in the frame of a 5G network supporting network slicing and lightweight virtualization. Although the network slicing and its performance for aerial use cases were evaluated in \cite{Sarah2018Autonomous,hellaoui2018aerial,garcia2019performance,xilouris2018uav}, few of them investigated the issue of physically slicing the UAV network by orchestrating UAV network resources to enhance the service quality and improve the UAV resource utilization. However, two-timescale issue was not yet considered in the above work.

As control information delivery has the stringent requirement of ultra-reliable and low-latency communications (URLLC) \cite{ren2019achievable} and payload needs to be transmitted over high-speed and broadband links \cite{abbasi2020trajectory}, two types of slices, i.e., URLLC slices for control information delivery and mobile broadband (MBB) slices for payload transmission, can be envisioned \cite{garcia2019performance} in the UAV network.
During the past few years, a rich body of works \cite{ren2019achievable,pan2019joint,ranjha2019quasi,wang2020packet,chen2020power,abbasi2020trajectory,ding20203d,wu2020optimal,abbasfar2019} on URLLC-enabled UAV network and MBB-enabled UAV network had been published.
%For URLLC slices, it should configure network resources to satisfy the URLLC requirements of users. To this aim, many works on UAV assisted URLLC had been developed. For MBB slices, it should configure network resources to satisfy the MBB service for users. To this aim, many works on UAV assisted communication coverage had been developed. The above work [] proposed some methods of configuring network resources to satisfy different QoS requirements of diverse users based on two crucial assumptions: known users' trajectories and simple or complicated statistical channel gain models.

In the research domain of URLLC-enabled UAV network, many works \cite{ren2019achievable,pan2019joint,ranjha2019quasi,wang2020packet,chen2020power} studied how to guarantee the performance of URLLC links for control and/or non-payload information transmission.
%based on analytical tractable channel models.
For instance, the average achievable data rate of the URLLC link between a ground control station and a UAV under a three-dimensional (3D) channel was studied in \cite{ren2019achievable}, where, the URLLC link was established to deliver control information for UAV collision avoidance.
Besides, to satisfy URLLC requirements of UAV control and non-payload links, a joint blocklength allocation and UAV location optimization with a goal of minimizing the decoding error probability under a low-latency constraint was researched in \cite{pan2019joint}.
%The work in \cite{ranjha2019quasi} aimed at completing the task of deliver short URLLC instruction packets between ground Internet of Things (IoT) devices via a UAV network. To accomplish this task, a joint non-linear UAV location and URLLC packet blocklength optimization problem with an objective of minimizing the overall decoding error probability was formulated and solved.
%Besides, the average packet error probability and effective throughput of control signal delivery links under the URLLC requirement in UAV communications were studied in \cite{wang2020packet}.
In the research domain of MBB-enabled UAV network, most of the existing literatures \cite{abbasi2020trajectory,ding20203d,wu2020optimal,abbasfar2019} focused on the movement control and/or resource (e.g., transmit power, bandwidth) allocation of the UAV network towards MBB service coverage.
%under the key assumption of knowing the trajectories of ground users in advance.
For example, %the work in \cite{wu2020optimal} investigated the UAV-aided edge caching to assist terrestrial vehicular networks in delivering high speed and high-bandwidth content files. %To this aim, a joint caching and trajectory optimization problem to make decisions on content placement, content delivery, and UAV trajectory simultaneously was formulated and solved by a learning-based approach.
a joint UAV trajectory planning and transmit power allocation for a UAV network was researched to extend the communication coverage for two disconnected far ground vehicles in \cite{abbasi2020trajectory}.
Additionally, the work in \cite{ding20203d} investigated the UAV trajectory design and bandwidth allocation problem considering both the UAV's energy consumption and the service fairness among the ground mobile users, and a deep reinforcement learning based algorithm was used to solve this problem. %A deep reinforcement learning based algorithm was developed to control the moving direction and velocity of a UAV and allocate bandwidth for the UAV to provide energy-efficient and fair service coverage for ground users.

\subsection{Motivations and contributions}
However, to practically realize the vision of deploying the UAV network, the communication problem in the UAV network should be rationally formulated and effectively solved. To this aim, UAVs need to obtain location information of ground mobile users and require analytical channel gain models. Owing to such reasons, most of the above works \cite{ren2019achievable,pan2019joint,ranjha2019quasi,wang2020packet,chen2020power,abbasi2020trajectory,ding20203d,wu2020optimal,abbasfar2019} assumed that users' locations were known and adopted simplified channel gain models such as the isotropic radiation for antenna and the free space propagation model \cite{mudumbai2009medium} or complicated channel gain models like the probabilistic LoS model \cite{alhourani2014optimal} and angle-dependent channel parameters \cite{azari2018ultra}. The simplified models, however, may be inaccurate in practical environment as they do not have the slightest association with the local environment where UAVs are actually deployed. The model accuracy cannot be guaranteed in the local environment even exploiting sophisticated statistical models. This is because they can only simulate the channel gain in an average sense.
%these statistical models can only approximate the channel gain value in the average sense. In the local environment where UAVs are actually deployed, the actual channel gain value depends on the environmental parameters, distances between UAVs and users and LoS/NLoS connections in a sophistical way. Consequently, the accuracy of the statistical models cannot be guaranteed. What's more, these sophistical models make the UAV movement control problem highly non-convex. As as result, the control problem will be difficult to be efficiently solved. Besides, the adopted simple channel gain models such as isotropic radiation for antennas and/or free-space path loss channel model for aerial-ground links did not have the slightest association with the local environment where the UAVs are actually deployed.

To tackle the user location-related issue, the work in \cite{chen2017caching} proposed to use a learning method to predict ground users' locations. Based on the predicted locations, UAVs were deployed to provide MBB services for users. However, this work still adopted the probabilistic LoS model to calculate UAV-to-ground-user (UtG) channel gain values.
To address the channel gain model-related issue, a recent work in \cite{zeng2020simultaneous} proposed to use a deep neural network (DNN), which was trained based on raw signal measurements, to obtain numerical channel gain estimation values.
However, this work did not derive analytically tractable channel gain models; as a result, no theoretical analysis on the communication problem in the UAV network could be conducted.

Additionally, slicing the UAV network should tackle the non-trivial mismatch issue between the slice supply and slice demand. The creation and configuration of UAV slices (slice supply), which require the protocol configuration and resource orchestration and release, are time-consuming; however, services (especially URLLC services) cannot tolerate the delay of creating and configuring the slices.

To overcome the above issues, we propose in this paper to proactively slice the UAV network for URLLC and MBB service multiplexing. Our main contributions are summarized as follows:
%In this paper, we propose to slice the UAV network to provide energy-efficient and fair enhanced mobile broadband (MBB) services for ground MBB users (UEs).
%simultaneously investigate the enforcement of UAV slices and the optimization of UAV network resources for  provision.
%We can envision that there are at least two slices in UAV slicing, i.e., UAV control slice and UAV payload slice (specified as the MBB slice in this paper). Owing to the paper length limitation, we focus on the resource orchestration for MBB slice and leave the joint UAV control slice and payload slice investigation in the near future.
%The main contributions of this paper can be summarized as follows:
\begin{itemize}
\item Owing to the mobility of users and the limited UAV communication coverage, a time-varying UAV network is desired to be operated to improve mobile users' quality of service (QoS). Thus, we formulate the UAV network slicing problem as a sequential decision problem to provide energy-efficient and fair MBB services for ground mobile users while satisfying ultra-reliable and low-latency requirements of UAV control and non-payload information transmission.
This problem, however, is highly challenging to be solved via standard optimization methods mainly due to the lack of accurate users' locations and channel gain models, as well as the sequence-dependent characteristic.
% develop a simple mathematical expression of the uplink channel gain and downlink channel gain based on the results of online learning methods. The developed channel gain expressions capture actual channel gain information in sophistical urban environment;
\item A distributed learning method is exploited to predict users' locations as users' historical locations are scattered among UAVs and a UAV cannot rely solely on partial location information to predict users' locations. Besides, we propose to mitigate the mismatch issue between slice supply and slice demand by proactively slicing the UAV network.
Although the creation and configuration of slices are time-consuming, they can be performed proactively or in advance based on future users' locations.
    %Owing to the exploitation of the location prediction strategy, the UAV network slicing can be proactively performed based on future user locations.
%propose an online distributed learning method to accurately predicted locations of moving ground users;
\item We construct accurate and analytically tractable channel gain models based on estimation results of DNNs. This is because actual channel gain values depend on mobile users and flying UAVs in a rather sophistical manner and DNNs have the powerful non-linear function approximation ability.
%aiming at physically configuring UAV slices by planning UAV trajectories, orchestrating UAVs' transmit power and dynamically admitting users' slice access requests, we formulate the UAV slicing problem as a time sequence optimization problem with a goal of maximizing all users' achievable data rates while minimizing UAVs' total transmit power;
\item Inspired by the superiority of the Lyapunov approach in tackling sequential decision problems, we propose to decompose the formulated problem into multiple repeated optimization subproblems based on the learned results via a Lyapunov-based optimization framework, which is provably performance guaranteed.
    % Owing to the sequence-dependent feature, it is difficult to solve the formulated problem directly. Further, the formulated problem is confirmed as a mixed-integer-non-convex programming problem.
    %which decomposes the original problem into two repeatedly optimized subproblems, to mitigate this challenging problem.
\item The subproblems are confirmed to be mixed-integer-non-convex, which are difficult to be solved. To make them tractable, an alternative optimization scheme and a successive convex approximation (SCA) technique are exploited to handle the mixed-integer and non-convexity properties, respectively. %and an algorithm with provable performance guarantees is developed to mitigate the two subproblems repeatedly.
\item Finally, we conduct simulations to verify the accuracy of the learning methods and the effectiveness of the Lyapunov-based optimization framework as well.
\end{itemize}

\section{System Model}
As shown in Fig. \ref{fig_uav_slice_architecture}, this paper considers a communication coverage scenario by deploying a UAV network in urban environment\footnote{The proposed algorithm can be directly applied in many other types of environment even though the urban environment is considered here. Besides, although BSs are densely deployed in urban environment, it is still necessary to deploy a UAV network to recover partial communication coverage under emergency communication scenarios like flash crowd areas and terrestrial infrastructure malfunction areas.}. This scenario mainly includes a BS with array antenna configuration, $J$ single antenna UAVs, and $N^e$ pedestrians (or called ground mobile users) walking in a two-dimensional (2D) urban area of interest ${\mathbb R}^2$.
The BS is utilized to transmit ultra-reliable and low-latency control signals (e.g., UAV trajectories) to control the movement of UAVs and non-payload signals (e.g., UAV transmit power) to configure the UAV network through uplink wireless fading channels\footnote{The cooperation among UAVs can improve the UAV network resource utilization, we therefore study the case of controlling UAVs in a centralized manner. Yet, the communication scenario where BSs and UAVs cooperate to serve ground users is left for future research due to the space limitation. }.
The UAVs, the set of which is denoted by ${\mathcal J} = \{1, 2, \ldots, J\}$, acting as flying relays are deployed to perform a communication task, i.e., providing energy-efficient and fair service coverage for ground mobile users via downlink wireless fading channels.
%Without enforcing any assumption on knowing user trajectories,
%Unlike many researches \cite{abbasi2020trajectory,ding20203d,wu2020optimal,abbasfar2019,liu2018energy,wang2018spectrum,wu2018cooperative,Zhao2018Deployment} on UAV communications assuming known user trajectories, the trajectories of all users are unknown in advance and will be predicted by a distributed learning method in this paper.
\begin{figure}[!t]
\centering
\includegraphics[width=2.4in]{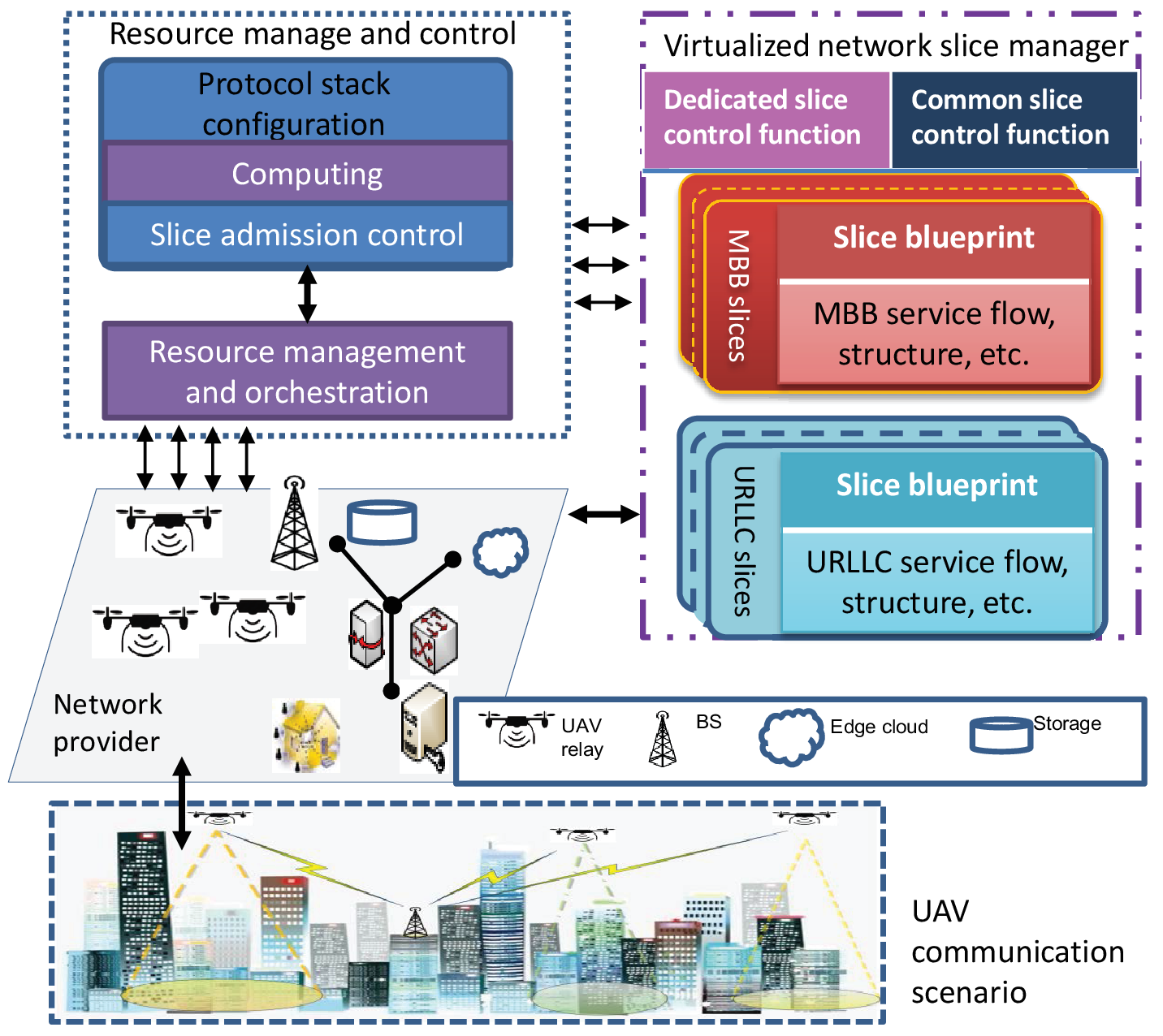}
\caption{A communication coverage scenario of a UAV network in urban environment and the UAV network slicing system architecture.}
\label{fig_uav_slice_architecture}
\end{figure}
%\begin{figure}[!t]
%\begin{minipage}[t]{0.45\textwidth}
%\centering
%\includegraphics[width=2.6in]{Urban_deployment_scenario.eps}
%\caption{A communication coverage scenario of a UAV network in urban environment.}
%\label{fig_uav_network}
%\end{minipage}
%\hspace{0.05\linewidth}
%\begin{minipage}[t]{0.45\textwidth}
%\centering
%\includegraphics[width=2.5in]{UAV_MBB_URLLC_architecture.eps}
%\caption{The architecture of the UAV network slicing system.}
%\label{fig_uav_slice_architecture}
%\end{minipage}
%\end{figure}
Besides, owing to the limited number of UAVs and the UAV's restricted communication coverage, the movement of UAVs will be continuously controlled to complete the communication task. To theoretically model the communication task, we assume that the time domain in the considered scenario is discretized into a sequence of time slots and consider that the task may last long enough, i.e., $t = 1, 2, \ldots$, even though the working time of UAVs is limited mainly due to the energy constraint in practice.

\subsection{UAV network slicing system architecture}
In the above scenario, the UAV network has to simultaneously support ultra-reliable and low-latency uplinks for UAV control and non-payload signal transmission and energy-efficient downlinks for payload transmission. To achieve this goal, we propose
to virtually isolate UAV network resources and functions customized for specific requirements using the concept of network slicing.
Particularly, the UAV network is logically divided into two types of slices, i.e., URLLC slices and MBB slices\footnote{In practice, constrained by a UAV's service capability, it may be difficult to provide enhanced mobile broadband (eMBB) services, which are generally provided by terrestrial cellular networks, for many ground users using a UAV. We therefore do not investigate the eMBB use case.}.
A UAV network slicing system architecture is shown Fig. \ref{fig_uav_slice_architecture}. The system is composed of three major components: network provider (NP), resource manage and control (RMC), virtualized network slice manager (VNSM).
Following the 3GPP management reference framework \cite{5G20155G}, we consider the UAVs as a network infrastructure resource domain. NP owns the UAVs and the BS as well according to the business relationships and stakeholder roles defined in \cite{devlic2017nesmo}.
For each UAV, it can be considered as a node of network function virtualization (NFV) infrastructure. Based on the virtualized network function (VNF) and/or the physical network function (PNF) \cite{rost2017network}, user-specific radio resource control is activated at each UAV.
RMC is responsible for configuring radio access network protocol stacks according to service requirements of slices.
For example, for slices with high reliability and low-latency requirements, lower frame error rates, reduced round trip time (RTT), shortened transmission time interval (TTI), and/or multi-point diversity schemes are desired to be applied.
%For slices with high capacity requirements,
RMC is also in charge of orchestrating and releasing resources on request for all UAVs by exposing the northbound interface to the VNSM.
VNSM is made available by NP via logically abstracting the physical infrastructure resources as virtual computing, storage, and networking resources.
Besides, VNSM operating on the top of the physical and/or virtualized infrastructure is responsible for creating, activating, maintaining, configuring, and releasing slices during the life cycle of them. Through the dedicated/common slice control function, VNSM will generate a network slice blueprint (i.e., a template) for each accepted network slice. The slice blueprint describes the structure, configuration, control signals, and service flows for instantiating and controlling the network slice instance of a type of service during its life cycle. The slice instance includes a set of network functions and resources to meet the end-to-end service requirements. However, the above slice creation and configuration processes are time-consuming.

%{Besides, the UAV network is virtually partitioned into $S^e$ slices, each of which is physically configured via planning UAVs' trajectories, orchestrating UAVs' transmit power and dynamically admitting users' slice access requests to serve all UEs according to their data rate requirements.
%There is a virtual UAV slicing coordinator (UAV-SC) in the system, which is responsible for managing UAV slices (e.g., activate or create, and configure slices) to satisfy the quality-of-service requirements of UEs served by each type of slice.
%Define ${\mathcal S}^e := \{1, 2, \ldots, S^e\}$ as the collection of all MBB slices and define ${\mathcal I}_s^e:= \{1, 2, \ldots, I_s^e\}$ as the set of MBB UEs served by $s \in {\mathcal S}^e$ with $I_s^e$ being the number of UEs in $s$.
%MBB UEs with different data rate requirements are served by different slices, UEs served by the same slice have the similar data rate requirement. We denote $C_s^{\rm th}$ as the data rate requirement of UEs in $s$.}

\subsection{URLLC slice model}
Before creating or activating a URLLC slice, the RMC must receive and admit a URLLC slice request. According to the network slice concept (from the QoS requirement viewpoint), the URLLC slice request is defined as below.
\begin{mydef}\label{urllc_slice_definition}
A URLLC slice request is characterized as a tuple $\{\tau_{s,u}^{req}, {\varepsilon _{s,u}^{req}}\}$ for a slice $s \in {\mathcal S}^u=\{1, 2, \ldots, |{\mathcal S}^u|\}$, where $\tau_{s,u}^{req}$ represents the transmission latency requirement of a data packet of control and non-payload signals in $s$, ${\varepsilon _{s,u}^{req}}$ denotes the codeword\footnote{A URLLC packet will usually be coded before transmission, and the generated codeword will be transmitted in the air interface such that the transmission reliability can be improved.} decoding error probability of the packet, $|{\mathcal S}^u|$ denotes the number of URLLC slices.
\end{mydef}
%The URLLC slices, the set of which is denoted by ${\mathcal S}^u = $, are created and configured to satisfy the reliability and low-latency requirements of the BS-to-UAV (BtU) control and non-payload signal delivery.

We assume that all URLLC slice requests can be accepted by the RMC. This assumption is rational because URLLC packets should be immediately served once arriving in the UAV network owing to the ultra-low-latency requirement. Even if all URLLC slice requests are accepted, URLLC slices will not occupy a lot of system resources as the number of UAVs is relatively small. Besides, UAVs that have the same reliable and low-latency requirement are served by the same URLLC slice, where ${\mathcal N}_s^u = \{1, 2, \ldots, N_s^u\}$ is the set of UAVs served by URLLC slice $s$ with $N_s^u$ being the number of UAVs served by slice $s \in {\mathcal S}^u$.
%Owing to the ultra-low-latency requirement URLLC traffic should be immediately scheduled upon arrival; thus, URLLC slice requests will always be accepted by the RAN-C in this paper. Then, coordinated beamformers will be correspondingly generated to cover UEs by way of unicast transmission at the beginning of each minislot.

\subsection{MBB slice model}
Similarly, an MBB slice can only be created after the MBB slice request is accepted, and we define the MBB slice request as follows.
\begin{mydef}\label{MBB_slice_res_def}
An MBB slice request can be characterized as a tuple $\{I_s^e, C_s^{\rm th}\}$ for any slice $s \in {\mathcal S}^e = \{1, 2, \ldots, |{\mathcal S}^e|\}$, where $I_s^e$ is the number of served ground users in $s$, and $C_s^{\rm th}$ is the data rate requirement of each served user in $s$, $|{\mathcal S}^e|$ is the number of MBB slices.
\end{mydef}
%For MBB slices, the set of which is denoted by ${\mathcal S}^e = $, they are created and configured to accommodate the fair and high-speed communication requirements of ground mobile users.

As an MBB slice is created and configured to serve users with the same data rate requirement, users with different data rate requirements will be served by diverse slices.
%In other words, network slices are splitted according to users' users served by the same slice have the same data rate requirement.
Ground users are partitioned into $|{\mathcal S}^e|$ groups according to their data rate requirements in this paper.
Let ${\mathcal N}_s^e = \{1, 2, \ldots, N_s^e\}$ denote the set of users with the data rate requirement of $C_s^{\rm th}$ for all $s \in {\mathcal S}^e$.
However, owing to the resource limitation on the UAV network and $N^e = \sum\nolimits_{s \in {{\mathcal S}^e}} {N_s^e} $ is much greater than $J$, we do not assume that all MBB slice requests should be accepted.
To this end, for UAV $j \in {\mathcal J}$ and user $i \in {\mathcal N}_s^e$, we use an indicator variable $a_{ij,s}(t) \in {\mathcal A}(t)$ to indicate if the MBB slice request of creating an MBB slice $s$ to serve user $i$ using UAV $j$ at time slot $t$ can be accepted, where ${\mathcal A}(t)$ is an acceptance set of MBB slice requests. We let $a_{ij,s}(t) = 1$ if the slice request is accepted; otherwise, $a_{ij,s}(t) = 0$.

%Next, we define a UAV slice access request set at time slot $t$ as ${\mathcal S}(t)$. For any $a_{ij,s}(t) \in {\mathcal S}(t)$, $a_{ij,s}(t) = 1$ indicates that a slice access request of USER $i$ in $s$ served by UAV $j$ is admitted by $j$ at $t$; otherwise, $a_{ij,s}(t) = 0$.

\subsection{Channel gain models}
The solution of the movement control problem of UAV requires the theoretical expressions of channel gain models.
%Before conducting the UAV movement control, the channel gain model should be investigated.
In the considered scenario, we focus on the modelling of the BtU and UtG channel gain models.

\subsubsection{BtU channel gain model}
As mentioned above, some existing simplified and complicated statistical channel gain models \cite{mudumbai2009medium,alhourani2014optimal,azari2018ultra} are inaccurate in the local environment where UAVs are actually deployed.
To tackle this issue, for the BS and a typical UAV $j \in {\mathcal J}$ in URLLC slice $s \in {\mathcal S}^u$, we model the DNN-based BtU channel gain at time slot $t$, denoted by $h_{j,s}^{\rm B}(t)$, as follows:
%Denote  as the channel gain between the BS and the UAV $j$ in URLLC slice $s \in {\mathcal S}^u$ at time slot $t$. Similar to the definition of the UtG channel gain, we define $h_{j,s}^{\rm B}(t)$ as follows:
\begin{equation}\label{eq:BtU_channel_gain}
h_{j,s}^{\rm B}(t) = G_{j,s}^{\rm B}(\bm x_{\rm B}^{\rm 3D}, \bm v_j^{\rm 3D}(t)) G_r \bar g_{j,s}^{\rm B}(\bm x_{\rm B}^{\rm 3D}, \bm v_j^{\rm 3D}(t)) f_{j,s}^{\rm B}(t) \buildrel \Delta \over = \frac{\theta_{j}^{\rm B}(t)}{{{D}_{j;{\rm B}}^2(t)}},
\end{equation}
where $G_{j,s}^{\rm B}(\bm x_{\rm B}^{\rm 3D}, \bm v_j^{\rm 3D}(t))$ denotes the transmitting antenna gain relying on BS antenna configurations such as the number of elements in antenna array and placement of these elements, $G_r$ is the receiving antenna gain, $\bm x_{\rm B}^{\rm 3D} = [x_{\rm B}, y_{\rm B}, g_{\rm B}]^{\rm T}$ is the 3D coordinate of the BS with $g_{\rm B}$ being the deployment altitude of the BS.
Besides, $\bar g_{j,s}^{\rm B}(\bm x_{\rm B}^{\rm 3D}, \bm v_j^{\rm 3D}(t))$ represents the path loss between the BS and UAV $j$ in $s$ at $t$, $f_{j,s}^{\rm B}(t)$ is a random variable denoting the small-scale fading, ${\theta_{j}^{\rm B}(t)} := \mu(\bm s^{\rm ul}(t)|\theta^{\mu_j} (t))$, where $\mu(\bm s^{\rm ul}(t)|\theta^{\mu_j} (t))$ is a DNN with $\bm s^{\rm ul}(t)$ being the DNN input and $\theta^{\mu_j} (t)$ being the DNN parameters, is a channel gain coefficient that will be obtained based on the DNN described in subsection V-A. ${{D}_{j;{\rm B}}(t)}= {{||{{\bm v}_j^{\rm 3D}}(t) - {{\bm x}_{\rm B}^{\rm 3D}}||_2}}$ is the distance between UAV $j$ and the BS at $t$.

The advantages of (\ref{eq:BtU_channel_gain}) are: the obtained channel gain can reflect the actual local environment parameters; the theoretical expression of the channel gain is not complicated, which will make the UAV communication problem analytically tractable.

\subsubsection{UtG channel gain model}
%As mentioned above, some existing simplified and complicated statistical channel gain models \cite{mudumbai2009medium,alhourani2014optimal,azari2018ultra} are inaccurate in the local environment where UAVs are actually deployed.
%Further, sophistical models will make the UAV movement control problem highly non-convex, and the resulting control problem will be difficult to be efficiently solved.
%To tackle this issue,
Similar to the definition of the BtU channel gain, for a typical UAV $j \in {\mathcal J}$ and a ground user $i \in {\mathcal N}_s^e$, $s \in {\mathcal S}^e$, we model the DNN-based UtG channel gain at time slot $t$ as follows:
%Denote  as the channel gain between the BS and the UAV $j$ in URLLC slice $s \in {\mathcal S}^u$ at time slot $t$. Similar to the definition of the UtG channel gain, we define $h_{j,s}^{\rm B}(t)$ as follows:
\begin{equation}\label{eq:UtG_channel_gain}
h_{ij,s}(t) = G_j G_r\bar g_{ij,s}(\bm x_{i,s}^{\rm 3D}(t), \bm v_j^{\rm 3D}(t)) f_{ij,s}(t) \buildrel \Delta \over = \frac{\theta_{ij,s}(t)}{{{D}_{ij,s}^2(t)}},
\end{equation}
% , which depends on the UAV antenna gain, the path loss and shadowing that critically rely on the local environment, and the small-scale fading as well,
where $G_j$ denotes the transmitting antenna gain of UAV $j$, $\bar g_{ij,s}(\bm x_{i,s}^{\rm 3D}(t), \bm v_j^{\rm 3D}(t))$ represents the path loss between UAV $j$ and user $i$ at time slot $t$, $f_{ij,s}(t)$ is a random variable accounting for the small-scale fading.
$\bm x_{i,s}^{\rm 3D}(t) = [x_{i,s}(t), y_{i,s}(t), g_{i,s}]^{\rm T}$ represents the 3D coordinates of user $i$ with $g_{i,s}$ being the height of user $i$.
$\bm v_j^{\rm 3D}(t) = [x_{j}(t), y_{j}(t), g_j(t)]^{\rm T}$ is the 3D coordinates of UAV $j$ at $t$ with $g_j(t)$ being the deployment altitude of UAV $j$.
$\theta_{ij,s}(t) := Q(\bm s^{\rm dl}(t)|\theta^{Q_j}(t))$, where $Q(\bm s^{\rm dl}(t)|\theta^{Q_j}(t))$ denotes a DNN with $\bm s^{\rm dl}(t)$ being the DNN input and $\theta^{Q_j}(t)$ being the DNN parameters, is the channel gain coefficient that will be determined based on the DNN described in subsection V-B.
${{D}_{ij,s}(t)}= {{||{{\bm v}_j^{\rm 3D}}(t) - {{\bm x}_{i,s}^{\rm 3D}(t)}||_2}}$ is the distance between UAV $j$ and user $i$ at $t$.

%{Denote the location of a user $i \in {\mathcal I}_s^e$, $s \in {\mathcal S}^e$ as  and the horizontal location of UAV $j$ at time slot $t$ as ${\bm{x}}_{j}(t) = [x_{j}(t), y_{j}(t)]^{\rm T} \in {{{\cal X}}}(t)$. We assume that all UAVs fly at the same and fixed altitude $g_{j}(t)$, $j \in {\cal J}$.
%Besides, considering that the transmit power of UAVs can be dynamically allocated to increase users' receiving power as well as alleviate interference, we also investigate the orchestration of UAV transmit power for energy-efficient service coverage.}

Additionally, we denote $\bm x_{i,s}(t) = [x_{i,s}(t), y_{i,s}(t)]^{\rm T}$ as the horizontal location of user $i$ at $t$ and denote the horizontal location of the BS by $\bm x_{\rm B} = [x_{\rm B}, y_{\rm B}]^{\rm T}$.
We consider the case that all UAVs are deployed at the same and fixed altitude in this paper and leave the movement control problem of UAVs in 3D space for future research.
The horizontal location of UAV $j$ is denoted by $\bm v_j(t) = [x_{j}(t), y_{j}(t)]^{\rm T} \in {{{\cal X}}}(t)$, where ${{{\cal X}}}(t)$ is the set of UAVs' horizontal locations at $t$.

\section{Problem Formulation}
In this section, we formulate the problem of slicing a UAV network to provide energy-efficient and fair communication coverage for ground mobile users as an optimization problem. To this aim, we first enforce intra-slice constraints and inter-slice constraints and then define the objective function of the optimization problem. With aforementioned system model, constraints and the objective function, the optimization problem is formulated.

\subsection{URLLC slice constraints}
To support URLLC services, %In this subsection, we investigate the URLLC slice constraints. In URLLC slices, URLLC devices need to successfully transmit and decode data packets with low-latency and high reliability \cite{series2015imt}.To meet the stringent communication requirements,
the constraints on some crucial performance indicators like transmit data rates of transmitters in a network should be rigorously satisfied.
In wireless communication networks, Shannon formula is usually leveraged to quantize the transmit data rate of a transmitter. However, in the network supporting the URLLC transmission, Shannon formula cannot be utilized. This is because Shannon formula is estimated under the crucial assumption of transmitting packets of enough long blocklength; yet, the length of a URLLC packet is typically very short to satisfy the stringent low-latency requirement. To tackle this issue, like \cite{Yang2020Joint,Ren2020Joint}, we assume that the fading channel is a quasi-static Rayleigh fading channel over a time slot and changes independently. Then, the rate formula in finite blocklength regime in \cite{Ren2020Joint} is exploited to approximate the achievable data rate of transmitting URLLC packets from the BS to UAV $j$ in URLLC slice $s \in {\mathcal S}^u$ under the given transmission latency $\tau_{s,u}^{\rm req}$ and codeword decoding error probability $\varepsilon _{s,u}^{\rm req}$ \cite{Ren2020Joint}, i.e.,
\begin{equation}\label{rate_urllc}
R_{j,s}^u(t) \approx \frac{{W_{j,s}^u(t)}}{{\ln 2}}\left[ {\ln \left( {1 + \frac{{{h_{j,s}^{\rm B}(t)}p_{j,s}^{\rm B}(t)}}{{{N_0}W_{j,s}^u(t)}}} \right) - \sqrt {\frac{{{V_{j,s}(t)}}}{{\tau_{s,u}^{\rm req} W_{j,s}^u(t)}}} {Q^{ - 1}}(\varepsilon _{s,u} ^{\rm req} )} \right],
\end{equation}
where $p_{j,s}^{\rm B}(t)$ is the transmit power of the BS when connecting to UAV $j$ at slot $t$, $W_{j,s}^u(t)$ is the system bandwidth allocated to UAV $j$ in URLLC slice $s$,
$N_0$ is the noise power spectral density, $Q^{ - 1}\left(  \cdot  \right)$ is the inverse of Q-function, and ${V_{j,s}(t)} = 1 - {1 \mathord{\left/
 {\vphantom {1 {{{\left( {1 + \frac{{{h_{j,s}^{\rm B}(t)}p_j^{\rm B}(t)}}{{{N_0}W_{j,s}^u(t)}}} \right)}^2}}}} \right.
 \kern-\nulldelimiterspace} {{{\left( {1 + \frac{{h_{j,s}^{\rm B}(t)p_{j,s}^{\rm B}(t)}}{{ {N_0}W_{j,s}^u(t)}}} \right)}^2}}}$ is the channel dispersion. {Note that, to ensure the low-latency requirements of UAVs, a frequency division multiple access (FDMA) technique is applied to achieve the URLLC inter-slices and intra-slices isolation.}

When the received signal-to-noise ratio (SNR) is higher than 5 dB, ${{V_{j,s}(t)}}$ can be very accurately approximated as $1$ \cite{schiessl2015delay}. On the other hand, even in low SNR regime, since ${{V_{j,s}(t)}} < 1$, we can obtain the upper bound of the minimum required transmit power by substituting ${{V_{j,s}(t)}} = 1$ into (\ref{rate_urllc}). If the upper bound value is applied in optimizing resource allocation, then the requirements on transmission latency, codeword decoding error probability in the URLLC communication can be satisfied. Besides, to satisfy the low-latency requirement, the minimum data rate is $R_{j,s}^u(t) = b_{s,u}^{\rm req} / \tau_{s,u}^{\rm req}$, where $b_{s,u}^{\rm req}$ denotes the number of bits to be transmitted within $\tau_{s,u}^{\rm req}$. Then, by activating the transmit data rate condition and approximating $V_{j,s}(t) = 1$, the mathematical expression of the required transmit power from the BS to UAV $j$ that satisfies the transmission latency and codeword decoding error probability requirements can be derived as
\begin{equation}\label{power_urllc_app}
p_{j,s}^{\rm B}(t) = \frac{{{N_0}W_{j,s}^u(t)}}{{h_{j,s}^{\rm B}(t)}}\left\{ {\exp \left[ {\frac{{{b_{s,u}^{\rm req}}\ln 2}}{{{\tau _u^{\rm req}}W_{j,s}^u(t)}} + \frac{{Q^{ - 1}\left( {{\varepsilon _{s,u}^{\rm req}}} \right)}}{{\sqrt {{\tau _{s,u}^{\rm req}}W_{j,s}^u(t)} }}} \right] - 1} \right\}.
\end{equation}

%Let $\bm{p}_{bs} = \{p_k^{bs}, \forall k \in {\mathcal{S}_{uav}} \}$ denote the transmit power vector of the GBS.
Let $p({W^u(t)})$ denote the total required transmit power of the BS for connecting to all UAVs with given $W^u(t) = \sum_{s,j} W_{j,s}^u(t)$, where we lighten the notation $\sum_{s,j} W_{j,s}^u(t)$ for $\sum_{s \in {\mathcal S}^u}\sum_{j\in{\mathcal N}_s^u} W_{j,s}^u(t)$. The similar lightened notation is adopted throughout the rest of the paper to simplify the description.
Since the maximum transmit power denoted by $p_{\rm B}^{\rm max}$ of the BS is limited, we have the following transmit power constraint
%we lighten the notation $\sum\nolimits_{i,s}{a_{ij,s}(t)}$ for $\sum\nolimits_{s \in {\mathcal S}^e}{\sum\nolimits_{i \in {\mathcal I}_s^e}{a_{ij,s}(t)}}$
\begin{equation}\label{con_urllc_power}
p({W^u(t)}) = \sum\nolimits_{s,j} {p_{j,s}^{\rm B}(t)} \le p_{\rm B}^{\rm max}.
\end{equation}

\subsection{MBB slice constraints}
%The MBB slice constraints are imposed in this subsection.
Owing to the movement of UAVs, user $i$ for all $i \in {\mathcal I}_s^e$, $s \in {\mathcal S}^e$ may be in the communication ranges of several UAVs at slot $t$.
We assume that at $t$, a user can be served by at most one UAV, and a UAV is allowed to deliver MBB traffic to at most one user due to its limited service capability.
In this way, the upper layer network slice configuration (e.g., protocol stack configuration, slice blueprint generation), which is time-consuming, can be proactively identified by RMC and VNSM to accommodate the data rate requirements of admitted users \cite{ni2019end}. Mathematically, we have
\begin{equation}\label{eq:slice_request_indicator}
\sum\nolimits_{j \in {\mathcal J}} {{a_{ij,s}}(t)}  \le 1,\forall i \in {\mathcal N}_s^e, s \in {\mathcal S}^e, \text{ } \sum\nolimits_{i,s}{{{a_{ij,s}}(t)}}  \le 1,\forall j \in {\mathcal J}
\end{equation}
%where we lighten the notation $\sum\nolimits_{i,s}{a_{ij,s}(t)}$ for $\sum\nolimits_{s \in {\mathcal S}^e}{\sum\nolimits_{i \in {\mathcal I}_s^e}{a_{ij,s}(t)}}$.

Based on the above slice request acceptance condition, the system bandwidth, denoted by $W^e(t)$, allocated to MBB slices at $t$ can be fully reused by each ground mobile user. In this case, for a user $i \in {\mathcal N}_s^e$, $s \in {\mathcal S}^e$, we denote its received signal-to-interference-plus-noise ratio from UAV $j$ at time slot $t$ by ${\rm SINR}_{ij,s}(t)$, which can be expressed as
\begin{equation}\label{eq:MBB_sinr}
{{\rm SINR}_{ij,s}}(t) = \frac{{{p_j}(t){h_{ij,s}}(t)}}{{N_0 W^e(t) + {I_{ij,s}}(t)}},
\end{equation}
where $p_j(t) \in {\mathcal P}(t)$ is the instantaneous transmit power of UAV $j$ at $t$ with ${\mathcal P}(t)$ being a set of possible values of all UAVs' transmit power, ${I_{ij,s}}(t) = \sum\nolimits_{k \in {\mathcal J}\backslash \{ {j}\} } {{p_k}(t){h_{ik,s}}(t)} $ is the interference caused by other UAVs,

%FDMA technique will consume much resources, and MBB traffic has the wide bandwidth requirement. Thus, the FDMA technique is not leveraged in MBB slices. In this way, the allocated bandwidth $W^e(t)$ can be occupied by each ground user.
%each ground user or each UAV? I have the following problem: How to allocation bandwidth in one UAV to its served users? If all users occupied all the bandwidth in one UAV, which technique is adopted for data transmission among these users at the same time and frequency? Here, you should further elaborate the bandwidth allocation and interference model. To support MBB slices, full frequency reuse is adopted among UAVs (or groud users?). That is,

%slicing system can support a great number of ground users.
%Further, owing to the inherently shared nature of radio channel and the potential interference that any transmitter may have on any receiver, slicing a RAN may be particularly challenging.

%For USER $i$, we say that it is served by UAV $j$ at slot $t$ and let $b_{ij,s}(t) = 1$, if its achievable data rate (denoted by $u_i(t)$) from UAV $j$ at slot $t$ is greater than its required data rate (denoted by $u_{i,s}^{th}$), i.e., $u_i(t) \ge u_{i,s}^{th}$, where $u_{i,s}^{th}$ is a constant; otherwise, we let $b_{ij,s}(t) = 0$.

The time average transmit power of UAV $j$ during the first $t$ time slots can be written as ${\bar p_j}(t) = \frac{1}{t}\sum\nolimits_{\tau  = 1}^t {{p_j}(\tau )}$. Except for the transmit power, UAVs are subject to inherent circuit power consumption mainly including power consumption of mixers, frequency synthesizers, and digital-to-analog converters. Denote $p_{j}^{c}$ as the circuit power of $j$ during a time slot, we then model the total power consumption of $j$ at $t$ as
\begin{equation}\label{eq:power_total_at_t}
p_j^{\rm tot}(t) = {p_j}(t) + p_j^c,
\end{equation}
which is upper-bounded by the maximum instantaneous total power ${\hat p_j}$, i.e., $p_j^{tot}(t) \le {\hat p_j}$.
Accordingly, the time average total power consumption of UAV $j$ during the first $t$ time slots can be written as
\begin{equation}\label{eq:8}
\bar p_j^{\rm tot}(t) = {\bar p_j}(t) + p_j^c,
\end{equation}
which is constrained by $\bar p_j^{\rm tot}(t) \le {\tilde p_j}$, and ${\tilde p_j}$ is the maximum time average total power consumption of UAV $j$.

% It means that the number of admitted slot during a time slot $t$ is not greater than one.

%However, this paper can be extend to the case of including multiple UEs in a slice by group these UEs into one class.

%For MBB traffic, the blocklength may be sufficient long and the decoding error probability may be significantly reduced by resorting to transmission schemes of low-rate codes that have enough redundancy.
We then leverage Shannon formula to quantify the achievable data rate $u_{i,s}(t)$ (in Mbps) of user $i \in {\mathcal I}_s^e$, $s\in {\mathcal S}^e$ at $t$ as given by
\begin{equation}\label{eq:achievable_data_rate}
{u_{i,s}}(t) = \sum\nolimits_{j \in {\mathcal J}} {{a_{ij,s}}(t)W^e(t){{\log }_2}\left( {1 + {{\rm SINR} _{ij,s}}(t)} \right)}.
\end{equation}

During the first $t$ time slots, the time average achievable data rate of user $i$ can be written as ${\bar u_{i,s}}(t) = \frac{1}{t}\sum\nolimits_{\tau  = 1}^t {{u_{i,s}}(\tau )}$.
As users require the minimum time average achievable data rates in practical communication scenarios, we present a constraint to guarantee that user $i$'s minimum data rate requirement is satisfied, i.e.,
\begin{equation}\label{eq:average_rate_cons}
{{\bar u}_{i,s}}(t) \ge C_s^{\rm th}, \forall i \in {\mathcal N}_s^e, s\in {\mathcal S}^e.
\end{equation}

\subsection{Physical resource and UAV movement constraints}
%{The bandwidth constraint, UAV movement constraint}
% the path planning constraint
During the flight, the distance between two consecutive waypoints on a UAV trajectory will be constrained by the UAV's maximum speed. As such, the mathematical expression of the waypoint distance constraint can be written as
$ {|| {{{\bm v}_j}(t) - {{\bm v}_j}(t - 1)} ||_2^2} \le e_{\max }^2$, where $e_{\max }$ is the UAV's maximum flight distance during a time slot.
Additionally, for collision avoidance, the distance between any two UAVs at each slot should not be less than a safety distance. Mathematically, the expression can be written as ${|| {{{\bm v }_j}(t) - {{\bm v}_k}(t)} ||_2^2} \ge d_{\min }^2$, where $d_{\min }$ is the minimum safety distance.

%Finally, we investigate the bandwidth constraints. The total bandwidth allocated to the MBB and URLLC multiplexing is limited. Denote the total bandwidth as $W^{total}$. Since the bandwidth $W^{e}$ and $W^{u}$ are orthogonal, the bandwidth constraints can be expressed as
Besides, since the MBB and the URLLC service provisions are considered and network bandwidth resources allocated to MBB and URLLC slices are separated in the frequency plane to achieve the inter-slice isolation, the network bandwidth constraint can be written as
\begin{equation}\label{total_bandwidth}
{W^u}(t) + {W^{e}(t)} = W^{\rm tot},
\end{equation}
where $W^{\rm tot}$ denotes the total system bandwidth.

\subsection{Objective function and problem formuation}
%In this subsection, the objective function and the problem is formulated.
Define $\phi ( \{{{\bar{u}}_{i,s}}(t)\} )={\sum\nolimits_{i,s}{{{\log }_{2}}(1+{{{\bar{u}}}_{i,s}}}(t))}$ as a proportional fairness function of time average achievable data rates across all ground users. The maximization of $\phi (\{{{\bar{u}}_{i,s}}(t)\})$ will lead to that of users' time average achievable data rates as well as UAVs' fair coverage.
Our goal is to achieve an energy-efficient and fair MBB service provision by physically configuring the UAV {network} including optimizing the acceptance set of MBB slice requests ${\mathcal A}(t)$, controlling UAVs' movement ${\mathcal X}(t)$, and optimizing UAV transmit power ${\mathcal P}(t)$ during the whole period of the communication task. Combining with the above analysis, we can formulate the UAV network slicing problem as a sequential decision problem presented as below
\begin{subequations}\label{eq:original_problem}
\begin{alignat}{2}
& \mathop {\rm Maximize }\limits_{{\mathcal A}(t),{\mathcal P}(t),{{{\mathcal X}}}(t)} {\mkern 1mu} \mathop {\lim \inf }\limits_{t \to \infty } ( {\phi ( \{{{\bar{u}}_{i,s}}(t)\} ) - \rho \sum\nolimits_{j \in {\mathcal J}} {\bar p_j^{\rm tot}(t)} } ) \\
&{\rm s.t:} \text{ } \mathop {\lim \inf }\limits_{t \to \infty } {{\bar u}_{i,s}}(t) \ge C_s^{\rm th},\forall i,s \\
& \quad \mathop {\lim \sup }\limits_{t \to \infty } \bar p_j^{\rm tot}(t) \le {{\tilde p}_j},\forall j \\
& \quad  p_j^{\rm tot}(t) \le {{\hat p}_j},\forall j,t \\
& \quad {\left\| {{{\bm v}_j}(t) - {{\bm v}_j}(t - 1)} \right\|^2} \le e_{\max }^2{\rm{,}}\forall j,t \\
& \quad {\left\| {{{\bm v}_j}(t) - {{\bm v}_k}(t)} \right\|^2} \ge d_{\min }^2{\rm{,}}\forall j,k \ne j,t \\
& \quad {a_{ij,s}}(t) \in \{ 0,1\} ,\forall i,s,j,t \\
& \quad W_{j,s}^u(t) > 0, {\rm{ }}\forall j,s,t \\
& \quad (\ref{con_urllc_power}),(\ref{eq:slice_request_indicator}),(\ref{total_bandwidth}),
\end{alignat}
\end{subequations}
where {${{\bm v}_{j}}(0)$ represents the initial horizontal location of UAV $j$}, $\rho$ is a non-negative coefficient that weighs a trade-off between the system revenue and the power consumption.
%\footnote{Although the working time of UAVs is limited mainly due to the power constraint in practice, we consider that the working time of UAVs may last long enough, without loss of generality.}. %(\ref{eq:original_problem}b) constrains users' minimum time average achievable data rates over $t$ time slots; (\ref{eq:original_problem}c), (\ref{eq:original_problem}d), (\ref{eq:original_problem}j) are UAVs' transmit power constraints; (\ref{eq:original_problem}e), (\ref{eq:original_problem}f), (\ref{eq:original_problem}j) are network slice request constraints; (\ref{eq:original_problem}g)-(\ref{eq:original_problem}h) are UAVs' trajectory constraints.
% The critic of this problem,
% the computational complexity of calculating the problem t \to \infinite, the direct optimization will be unrealistic/inpractical

However, the solution of (\ref{eq:original_problem}) is highly challenging. This is because the locations of ground mobile users are unknown at each time slot and analytically tractable channel gain models are not obtained.
Besides, the sequence-dependent characteristic of (\ref{eq:original_problem}) significantly hinders its solution.
% is a some standard optimization methods cannot be directly utilized to solve a sequential decision problem.
%simultaneously includes logarithmic-quadratic terms, non-convex terms, time-average terms, continuous and integer variables, it is a mixed-integer-non-convex sequential decision problem that may be NP-hard or even undecidable \cite{lee2011mixed}.
To solve such a challenging problem, we first propose to predict ground mobile users' locations using a distributed learning method. We then design an online learning method to estimate channel gain coefficients. Based on the predicted users' locations and the estimated channel gain coefficients, we construct analytically tractable channel gain models. Finally, with the predicted users' locations and constructed channel gain models, we exploit a Lyapunov-based optimization framework to solve the sequential decision problem. The procedures of solving (\ref{eq:original_problem}) are elaborated in the following three sections.

\section{Distributed learning for users' location prediction}
Since users' locations may continuously change as time elapses, dynamic slice creation and configuration should be enabled to improve the QoS of users. Considering the time-consuming slice creation and configuration,
(\ref{eq:original_problem}) should be proactively solved on the basis of predicted users' locations to achieve that goal.
To predict users' locations, it is necessary to apply a machine learning method.
Yet, a user's location information may be scattered in multiple UAVs owing to the movement of UAVs, and each UAV can only collect partial information of users' locations after a period of time. A UAV, however, cannot accurately predict users' locations based on partial location information.
Thus, some centralized methods performed at the BS may be exploited to accurately predict users' locations. However,
%As a result, some centralized machine learning methods performed at a typical UAV cannot predict the entire location information of a user without knowing the user's location information stored at other UAVs.
they require a large amount of raw user location-related data exchange among the BS and all UAVs, which will consume lots of network resources.
To tackle this issue, we develop a distributed learning method. In this method, although each UAV will learn locally, it can obtain the global prediction model by exchanging users' location prediction models rather than large amounts of raw data between it and the BS. Moreover, owing to the local learning and the exchange of prediction models, the BS and UAVs can predict each user's locations. With the predicted users' locations, the BS can effectively control the movement of UAVs via solving (\ref{eq:original_problem}). UAVs can construct UtG channel gain models with the predicted users' locations.

%The introduction of ESN, why the distributed ESN is leveraged in this paper to predict the locations of users.
%{Answer the two problems: 1) what's the problem or disadvantages? 2) why the technique is used?}
%To achieve the goal of accurately predicting entire locations of all users, the distributed learning method must allow all UAVs to train their local learning models with their locally collected user location information and then force these UAVs to cooperate to generate a global learning model.
%After sharing the global learning model, each UAV can predict the entire location information of a user.
%In subsection V-A, we describe the components of the ESN-based learning algorithm and present the procedure of the distributed ESN learning algorithm for user's location prediction in subsection V-B. At the end of this section, we explain why the distributed learning method can be implemented online.

\subsection{Components of echo state network}
In this paper, an echo state network (ESN)-based learning method is exploited to train a location prediction model because the training process of the ESN is simple and fast and the ESN can effectively perform sequence-dependent data mining \cite{scardapane2016a}.
An ESN is a recurrent neural network which can be partitioned into four components: agent, input, ESN model, and output as specified below:
\begin{itemize}
\item \textbf{Agent:} There are $J+1$ agents separately located on $J$ UAVs and the BS as shown in Fig. \ref{fig_Distributed_ESN_architecture}. Each UAV agent will train ESN models locally and then send trained local ESN models to the BS agent for aggregation.
%For each UAV agent, it will train a local ESN model for every user.
For the BS agent, it will aggregate received local ESN models and then broadcast the aggregated (or global) ESN model to all UAV agents.
%The ESN can be used to predict a ground user's location; and a typical user's location should be predicted by all UAVs. Therefore, each UAV should train $N^e$ ESNs. For the agent on the BS, it will aggregate the local ESN models (or location prediction models) and broadcast the aggregated ESN model to all UAVs.
\item \textbf{Input:} For a typical UAV $j$ and user $i \in {\mathcal N}_s^e$, $s \in {\mathcal S}^e$, the input set of the ESN is defined as ${\bm V_{ij,s}}(t) = \{{\bm x_{i,s}}(t - Q),...,{\bm x_{i,s}}(t)\}$ where $Q$ is the number of users' location samples. The $Q$ location samples $\{{\bm x_{i,s}}(t - Q),...,{\bm x_{i,s}}(t-1)\}$ are used to train an ESN model related to user $i$. ${\bm x_{i,s}}(t)$ is used to predict user $i$'s location.
\item \textbf{ESN model:} For a typical UAV $j$, one of its local ESN models is leveraged to correlate the $\bm x_{i,s}(t)$ with user $i$'s predicted location. As a single ESN, which has one hidden layer as shown in Fig. \ref{fig_Distributed_ESN_architecture} connecting the input and the output, can quickly converge, we regard it as a location prediction model. Consequently, a local ESN model includes an input weight matrix $\bm W_{\rm in}^r$, a recurrent matrix $\bm W_{\rm r}^r$ and an output weight matrix $\bm W_{j,t}$. %Note that, $\bm W_{\rm in}^r$ and $\bm W_{\rm r}^r$ will be randomly generated, and $\bm W_{j,t}$ will be locally trained and sent to the BS for aggregation.
\item \textbf{Output:} For a typical UAV $j$ and user $i \in {\mathcal N}_s^e$, $s \in {\mathcal S}^e$, the output of the ESN is defined as a matrix $\bm {\hat Y}_{ij,s}(t) = [\bm {\hat y}_{{i,s}}(t+1),\ldots,\bm {\hat y}_{i,s}(t+K)]$ where $K$ is the number of future time slots. $\bm {\hat y}_{i,s}(t+k)$ is the predicted location of user $i$ at time slot $t+k$. For the BS agent, the predicted output matrix is denoted by $\bm {\hat Y}_{{i,s}}^{\rm B}(t) = [\bm {\hat y}_{i,s}^{\rm B}(t+1),\ldots,\bm {\hat y}_{i,s}^{\rm B}(t+K)]$ for all $i \in {\mathcal N}_s^e$, $s \in {\mathcal S}^e$.
    %According to the predicted user location, (\ref{eq:original_problem}) will be solved to configure the MBB slices.
\end{itemize}

\subsection{Distributed ESN learning for users' location prediction}
In this subsection, the procedure of training all the local ESN models in a distributed way and then forcing these models to the global model at convergence is presented in detail.
Fig. \ref{fig_Distributed_ESN_architecture} shows the training processes of all UAVs and the interrelationship between UAVs and the BS.
\begin{figure}[!t]
\centering
\includegraphics[width=2.8in]{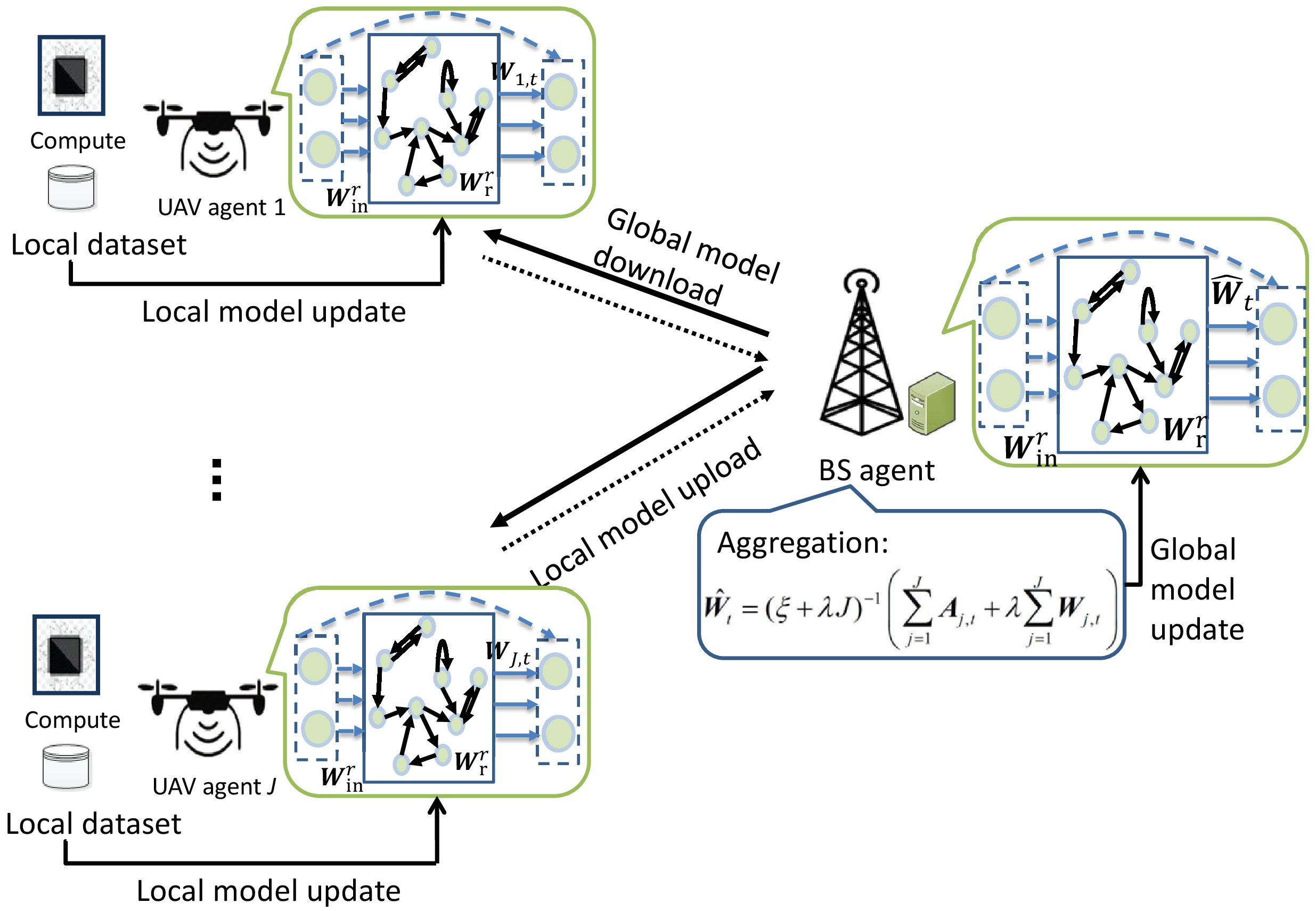}
\caption{The training processes of all UAVs and interrelationships between UAVs and the BS.}
\label{fig_Distributed_ESN_architecture}
\end{figure}

For user $i \in {\mathcal N}_s^e$, $s \in {\mathcal S}^e$, an input vector $\bm x_{i,s}(t) \in {\mathbb R}^{N_i}$, where $N_i$ represents the dimension of the vector, is fed to a reservoir with a dimension of $N_r$, whose internal state $\bm q_{i,s}(t-1) \in {\mathbb R}^{N_r}$ is updated as follows
\begin{equation}\label{eq:state}
\bm q_{i,s}(t) = f_{res}\left( \bm W_{\rm in}^r \bm x_{i,s}(t) + \bm W_{\rm r}^r \bm q_{i,s}(t-1) \right),
\end{equation}
where $\bm W_{\rm in}^r \in {\mathbb R}^{N_r \times N_i}$ and $\bm W_{\rm r}^r \in {\mathbb R}^{N_r \times N_r}$ are random matrices uniformly distributed in the interval $(0, 1)$, $f_{res}$ is a suitably defined non-linear function (e.g., $tanh(\cdot)$).

The predicted output of the ESN at $t$ is given by
\begin{equation}\label{eq:output}
{\hat {\bm y}_{i,s}}(t+1) = \bm W_{\rm in}^o \bm x_{i,s}(t) + \bm W_{\rm r}^o \bm q_{i,s}(t),
\end{equation}
where $\bm W_{\rm in}^o \in {\mathbb R}^{N_o \times N_i}$, $\bm W_{\rm r}^o \in {\mathbb R}^{N_o \times N_r}$ are trained based on the location samples. %, $f_{out}(\cdot)$ is an invertible non-linear function.

To train local ESN models on UAV $j$, we should be provided with a sequence of $Q$ target input-output pairs $\{(\bm x_{i,s}(t-Q), {\bm y}_{i,s}(t-Q+1)), \ldots, (\bm x_{i,s}(t-1), {\bm y}_{i,s}(t))\}$ with ${\bm y}_{i,s}(t)$ representing the real 2D position of user $i \in {\mathcal N}_s^e$, $s\in {\mathcal S}^e$, at $t$.
%$\{{\bm y}_{i,s}(t-Q+1),\ldots, {\bm y}_{i,s}(t)\}$ are stored on each UAV.
Besides, training ESN models requires updated users' locations. To this end, each user will broadcast Beacon messages containing location and time information every $T_p$ time slots to refresh its location stored on UAVs that can receive the Beacon messages.
Define the hidden matrix $\bm X_{ij,st}$ as
\begin{equation}\label{eq:hidden_matrix}
\bm X_{ij,st} = \left[ {\begin{array}{*{20}{c}}
{{\bm x_{i,s}^{\rm T}}(t-1){\bm q_{i,s}^{\rm T}}(t-1)}\\
 \vdots \\
{{\bm x_{i,s}^{\rm T}}(t-Q){\bm q_{i,s}^{\rm T}}(t-Q)}
\end{array}} \right].
\end{equation}

The optimal output weight matrix at time slot $t$ is then achieved by solving the regularized least-square problem:
\begin{equation}\label{eq:least_square_prob}
\bm W_t^{\star} = \mathop {\arg \min }\limits_{\bm W_t \in {R^{({N_i} + {N_r}) \times {N_o}}}} \frac{1}{2} \sum_{j=1}^{J}|| {\bm X_{ij,st} \bm W_t - \bm Y_{ij,st}} ||^2 + \frac{\xi}{2} || \bm W_t ||^2,
\end{equation}
where $\bm W_t = [\bm W_{\rm in}^o, \bm W_r^o]^{\rm T}$, $\xi \in {\mathbb R}_{+}$ is a positive scalar known as regularization factor, $\bm Y_{ij,st} = [\bm y_{i,s}(t)^{\rm T};\ldots;\bm y_{i,s}(t-Q+1)^{\rm T}] \in {\mathbb R}^{Q \times N_o}$.
%and $\bm x_i(t) \in {\mathbb R}^{N_o}$ is the target output vector at $t$.

As $\{\bm X_{ij,st}\}$ and $\{\bm Y_{ij,st}\}$ for all $i$, $j$, $s$, $t$ are locally collected, we adopt an alternating direction method of multipliers (ADMM) \cite{boyd2011distributed} to address (\ref{eq:least_square_prob}). This is because the ADMM method is a distributed method and it can quickly aggregate the locally obtained results to a global result at the convergence.
%For each subproblem, it can be optimized based on local information. After optimizing the subproblems locally, the obtained local results will be aggregated to a global result.

Particularly, by enforcing $\bm W_{j,t} = \hat {\bm W}_t$, we can obtain the augmented Lagrangian function of (\ref{eq:least_square_prob})
\begin{equation}\label{eq:lagrangian_ADMM}
\begin{array}{l}
{\mathcal L}(\bm W_{j,t}, \hat {\bm W}_t) = \frac{1}{2}\sum\limits_{j = 1}^J {||{\bm X_{ij,st}}{\bm W_{j,t}} - {\bm Y_{ij,st}}|{|^2}}  + \frac{\xi }{2}||{\hat {\bm W}}_t|{|^2} +
\sum\limits_{j = 1}^J {\rm tr}\left({{\bm A_{j,t}^{\rm T}} \left( {{\bm W_{j,t}} - \hat {\bm W}_t} \right)}\right)  + \frac{\lambda }{2}\sum\limits_{j = 1}^J {||{{\bm W}_{j,t}} - \hat {\bm W}_t|{|^2}},
\end{array}
\end{equation}
where ${\bm A_{j,t}} \in {\mathbb R}^{(N_i + N_r) \times N_o}$ is a Lagrangian multiplier matrix. $\{\bm W_{j,t}\}$ can be considered as a family of local variables, and $\hat {\bm W}_t$ can be regarded as the global consensus variable, $\lambda$ is a Lagrangian multiplier.

%(\ref{eq:lagrangian_ADMM}) can then be decomposed into $J$ subproblems. For the $j$-th subproblem,
Next, an iterative framework is developed to mitigate (\ref{eq:lagrangian_ADMM}). By zero-forcing the derivative of ${\mathcal L}(\bm W_{j,t}, \hat {\bm W}_t)$ over $\bm W_{j,t}$ and $\hat {\bm W}_t$ at the $r$-th iteration, we can obtain the evolutionary forms of $\bm W_{j,t}^{(r)}$ and $\hat {\bm W}_t^{(r)}$, respectively,
\begin{equation}\label{eq:evolution_wj}
\begin{array}{l}
\bm W_{j,t}^{(r+1)} = {\left( {\bm X_{ij,st}^{\rm T}{\bm X_{ij,st}} + \lambda \bm I} \right)^{ - 1}}\left( {\bm X_{ij,st}^{\rm T}{\bm Y_{ij,st}} + \lambda \hat {\bm W}_t^{(r)} - {\bm A_{j,t}^{(r)}}} \right),
\end{array}
\end{equation}
where $\bm I \in {\mathcal Z}^{(N_i + N_r) \times (N_i + N_r)}$ is an identity matrix, and
\begin{equation}\label{eq:evolution_hatw}
\begin{array}{l}
\hat {\bm W}_t^{(r+1)}= {(\xi  + \lambda J)^{ - 1}} \left( {\sum\nolimits_{j = 1}^J {\bm A_{j,t}^{(r)}}  + \lambda \sum\nolimits_{j = 1}^J {\bm W_{j,t}^{(r+1)}} } \right).
\end{array}
\end{equation}

To update $\bm A_{j,t}^{(r+1)}$, the gradient ascend method is leveraged with
\begin{equation}\label{eq:lagrangian_aj}
\bm A_{j,t}^{(r+1)} = {\bm A}_{j,t}^{(r)} + \eta \left( {\bm W}_{j,t}^{(r+1)} - \hat {\bm W}_t^{(r+1)} \right),
\end{equation}
where $\eta$ is a constant step.

By exploring the ADMM, the distributed ESN learning method can be implemented for users' location predication. Summarily, $J$ UAVs will separately calculate (\ref{eq:evolution_wj}) and (\ref{eq:lagrangian_aj}). The BS calls (\ref{eq:evolution_hatw}) to aggregate the local variables $\{\bm W_{j,t}\}$. The local Lagrangian multipliers $\{\bm A_{j,t}\}$ are updated to drive $\{\bm W_{j,t}\}$ into their global consensus. After a limited number of iterations, $\{\bm W_{j,t}\}$ will converge to the global consensus variable $\bm W_t^{\star} = \hat {\bm W}_t$.
With the obtained $\bm W_t^{\star}$, each UAV and the BS will be able to predict the entire locations of all users. The main steps of predicting users' locations are given in Algorithm \ref{alg_user_loc_prediction}.
\begin{algorithm}[!htp]
\caption{Distributed ESN learning for users' location prediction}
\label{alg_user_loc_prediction}
\begin{algorithmic}[1]
%\STATE \textbf{Input:} Given a set of clusters $c$, set $u$, $k_{upperbound}$
\STATE {\textbf {Input:}} Training data set (local), $\{\bm V_{ij,s}(t)\}$ $\forall i \in {\mathcal N}_s^e$, $s\in {\mathcal S}^e$, $j \in {\mathcal J}$, the maximum number of iterations $r_{\rm max}$.
\STATE {\textbf {Output:}} The predicted location $\bm {\hat Y}_{i,s}^{\rm B}(t)$, $\bm {\hat Y}_{ij,s}(t)$ for all $i$, $s$. %Optimal output weight matrix $\bm W_t^{\star}$
\STATE Each UAV agent $j$ generates $N^e$ local ESN models including $N^e$ matrices $\bm W_{\rm in}^r$ and $N^e$ matrices $\bm W_{\rm r}^r$ for predicting $N^e$ users' locations, respectively.
\STATE Each UAV agent $j$ gathers hidden matrices $\{\bm X_{ij,st}\}$ and teacher data $\{\bm Y_{ij,st}\}$ from $\{\bm V_{ij,s}(t)\}$ for all $i$, $s$.
\STATE Initialize $\bm A_{j,t}^{(1)} = \bm 0$ and $\hat {\bm W}_{t}^{(1)} = \bm 0$.
\FOR {$r = 1:r_{\rm max}$}
\FOR {Each UAV agent $j$ in parallel}
\STATE Agent $j$ computes $\bm W_{j,t}^{(r+1)}$ using (\ref{eq:evolution_wj}) and transmits $\bm W_{j,t}^{(r+1)}$ to the BS agent.
\ENDFOR
\STATE The BS agent computes $\hat {\bm W}_t^{(r+1)}$ using (\ref{eq:evolution_hatw}) and broadcasts $\hat {\bm W}_t^{(r+1)}$ to all UAV agents.
\FOR {Each UAV agent $j $ in parallel}
\STATE Agent $j$ computes $\bm A_{j,t}^{(r+1)}$ using (\ref{eq:lagrangian_aj}) and transmits $\bm A_{j,t}^{(r+1)}$ to the BS agent.
\ENDFOR
\STATE {Break if it converges or $r = r_{\rm max}$}
%\IF {Convergence or reach the maximum iteration times $r_{\rm max}$}
%\STATE Break.
%\ENDIF
%\STATE Update $r$ += 1.
\ENDFOR
\STATE The BS agent can then obtain user $i$'s, $\forall i$, predicted locations $\bm {\hat Y}_{i,s}^{\rm B}(t)$ by iteratively assigning $\bm x_{i,s}(t+1) = \bm {\hat y}_{i,s}^{\rm B}(t)$ and calling (\ref{eq:output}) for $K$ times. Similarly, each UAV can achieve all users' predicted locations.
%\STATE \textbf{Output:} The locations of drone-cells.
\end{algorithmic}
\end{algorithm}

\subsection{Convergence of the distributed ESN learning method}
In this subsection, we will analyze the convergence of Algorithm \ref{alg_user_loc_prediction}. The following lemma shows that Algorithm \ref{alg_user_loc_prediction} is convergent.
\begin{lemma}\label{lem:ADMM_ESN}
    Denote $(\bm W_{j,t}^{\star}, \hat {\bm W}_t^{\star})$ as the optimal solutions to (\ref{eq:lagrangian_ADMM}). For the distributed ESN learning method, $\forall r \in \mathbb{Z}^+$, $j \in {\mathcal J}$, we have that ${{\mathcal L}(\bm W_{j,t}^{(r)}, \hat {\bm W}_t^{(r)})} $ is bounded and
    \begin{equation}\label{eq:Lagrangian_drift}
         {\mathcal L} (\bm W_{j,t}^{\star}, \hat {\bm W}_t^{\star}) = \mathop {\lim }\limits_{r \to \infty } {\mathcal L} (\bm W_{j,t}^{(r)}, \hat {\bm W}_t^{(r)}).
    \end{equation}
\end{lemma}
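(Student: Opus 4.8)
The plan is to recognize (\ref{eq:least_square_prob}) as a two-block separable convex program and to invoke the convergence machinery of consensus ADMM \cite{boyd2011distributed}, cast in the Lyapunov-drift style used elsewhere in the paper. First I would rewrite (\ref{eq:least_square_prob}) in canonical form: set $f(\{\bm W_{j,t}\}) = \tfrac12\sum_{j}\|\bm X_{ij,st}\bm W_{j,t} - \bm Y_{ij,st}\|^2$ and $g(\hat{\bm W}_t) = \tfrac{\xi}{2}\|\hat{\bm W}_t\|^2$, so that (\ref{eq:least_square_prob}) is $\min\, f(\{\bm W_{j,t}\}) + g(\hat{\bm W}_t)$ subject to $\bm W_{j,t} - \hat{\bm W}_t = \bm 0$ for all $j$. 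Both $f$ and $g$ are closed, proper and convex (indeed $g$ is $\xi$-strongly convex and $f$ is a convex quadratic), and the feasible set is a nonempty affine subspace, so strong duality holds and the unaugmented Lagrangian admits a saddle point $(\bm W_{j,t}^{\star}, \hat{\bm W}_t^{\star}, \bm A_{j,t}^{\star})$. I would also note that the updates (\ref{eq:evolution_wj})--(\ref{eq:evolution_hatw}) are precisely the blockwise minimizers of the augmented Lagrangian (\ref{eq:lagrangian_ADMM}), each well defined because $\bm X_{ij,st}^{\rm T}\bm X_{ij,st}+\lambda\bm I$ is positive definite and $\xi + \lambda J > 0$.

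Next I would establish a Lyapunov function for the iteration. Taking (for now) the dual step equal to the penalty, define $V^{(r)} := \tfrac{1}{\lambda}\sum_j\|\bm A_{j,t}^{(r)} - \bm A_{j,t}^{\star}\|^2 + \lambda\sum_j\|\hat{\bm W}_t^{(r)} - \hat{\bm W}_t^{\star}\|^2$. Using optimality of the $\bm W$- and $\hat{\bm W}$-updates, convexity of $f$ and $g$, the saddle-point inequalities, and the dual update (\ref{eq:lagrangian_aj}), one derives the descent relation $V^{(r+1)} \le V^{(r)} - \lambda\sum_j\|\bm W_{j,t}^{(r+1)}-\hat{\bm W}_t^{(r+1)}\|^2 - \lambda\sum_j\|\hat{\bm W}_t^{(r+1)}-\hat{\bm W}_t^{(r)}\|^2$. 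Monotone nonincreasingness of $V^{(r)}$ immediately gives boundedness of $\{\bm A_{j,t}^{(r)}\}$ and $\{\hat{\bm W}_t^{(r)}\}$; boundedness of $\{\bm W_{j,t}^{(r)}\}$ then follows since the right-hand side of (\ref{eq:evolution_wj}) is a continuous (affine) map of bounded quantities, and hence $\mathcal{L}(\bm W_{j,t}^{(r)}, \hat{\bm W}_t^{(r)})$ is bounded by continuity. Telescoping the descent relation yields $\sum_r\|\bm W_{j,t}^{(r+1)}-\hat{\bm W}_t^{(r+1)}\|^2 < \infty$ and $\sum_r\|\hat{\bm W}_t^{(r+1)}-\hat{\bm W}_t^{(r)}\|^2 < \infty$, so the primal residual $\bm W_{j,t}^{(r)}-\hat{\bm W}_t^{(r)} \to \bm 0$ and the successive change $\hat{\bm W}_t^{(r)}-\hat{\bm W}_t^{(r-1)} \to \bm 0$.

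To finish, I would pass to the limit. Because the primal residual vanishes and $\{\bm A_{j,t}^{(r)}\}$ is bounded, the linear coupling term and the quadratic penalty term in (\ref{eq:lagrangian_ADMM}) both tend to zero, so $\lim_r \mathcal{L}(\bm W_{j,t}^{(r)},\hat{\bm W}_t^{(r)}) = \lim_r \big(f(\{\bm W_{j,t}^{(r)}\}) + g(\hat{\bm W}_t^{(r)})\big)$. Standard ADMM objective convergence -- obtained by combining optimality of the $\hat{\bm W}$-update with the saddle-point inequalities and the vanishing residual -- shows this limit equals the optimal value $f(\{\bm W_{j,t}^{\star}\}) + g(\hat{\bm W}_t^{\star})$; and since $(\bm W_{j,t}^{\star},\hat{\bm W}_t^{\star})$ is feasible we have $\mathcal{L}(\bm W_{j,t}^{\star},\hat{\bm W}_t^{\star}) = f(\{\bm W_{j,t}^{\star}\}) + g(\hat{\bm W}_t^{\star})$, which gives (\ref{eq:Lagrangian_drift}).

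The main obstacle is the dual update (\ref{eq:lagrangian_aj}): it uses a generic step $\eta$ that is a priori distinct from the penalty coefficient $\lambda$, whereas the clean telescoping above is the textbook argument for $\eta = \lambda$. The real work is therefore either to argue (or simply assume, as the paper's constants allow) that $\eta = \lambda$ so the standard analysis applies verbatim, or to run the generalized-ADMM argument, which needs a step-size condition of the form $0 < \eta < c\,\lambda$ for an explicit constant $c$ (established via positive semidefiniteness of an enlarged Lyapunov matrix) and a check that the chosen parameters satisfy it. A secondary, minor point is to make precise the meaning of $\mathcal{L}(\bm W_{j,t}^{(r)},\hat{\bm W}_t^{(r)})$, which still depends on the current multiplier $\bm A_{j,t}^{(r)}$; since that dependence enters only through terms controlled by the (vanishing) residual, it does not affect the stated limit.
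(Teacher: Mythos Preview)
Your proposal is correct and follows essentially the same route as the paper: both reduce the lemma to the standard convergence theory of consensus ADMM and cite \cite{boyd2011distributed} for the core argument. The paper's own proof is in fact only a brief sketch --- it states that boundedness of $\mathcal{L}$ follows from boundedness of the iterates, asserts a Lipschitz-type inequality on successive $\mathcal{L}$-values, and then explicitly defers the remainder to the ADMM convergence proofs in \cite{boyd2011distributed} and a companion reference --- so your outline is substantially more detailed than what appears in the paper. In particular, your flag about the dual step $\eta$ versus the penalty $\lambda$ is a genuine technicality the paper does not address; your suggested resolutions (take $\eta=\lambda$, or invoke a generalized-ADMM step-size condition) are the right way to close that gap.
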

\begin{proof}
Please refer to Appendix A.% in the technical report \cite{yang2020Proactive}.
\end{proof}

%In Algorithm \ref{alg_user_loc_prediction}, the sharing of local ESN models, which belong to \emph{non-payload}, between UAVs and the BS is conducted in a synchronous manner. For this manner, all UAVs update and upload their local models to the BS during a time interval $\Delta t$. The BS averages out all local models after which all UAVs download the global model. This manner can improve the prediction accuracy of the ESN models.
 %Time information is also embedded in the messages to ensure that users' locations stored in UAVs are the latest.

%users will take the initiative to send beacon messages to UAVs at a regular interval.
%Besides, users will periodically send their information including locations for registration. }
%{Explain the feasibility of performing the algorithm online}
%Updating users' locations, the Beacon messages are sent to refresh the locations saved in each UAV. The time information carried in the packets is used to identify the latest locations of all users.
%In each iteration, any UAV (say UAV i) broadcasts a HELLO message which contains its position information. Its one-hop neighbors can receive this message to get UAV i¡¯s position information and similarly UAV i can obtain the position information of all its one-hop neighbors. Using the obtained local information, UAV i calculates the resultant force according to (12) and then get the desired velocity according to (14). The above steps are repeated until the stop condition is satisfied

\section{Online channel gain coefficient estimation}
Even though we obtain predicted locations of all ground users, it is still impossible to theoretically analyze (\ref{eq:original_problem}) without closed-form expressions of BtU and UtG channel gains.
%In the environment where UAVs are deployed, the obtaining of the expressions of channel gains needs to consider practical 3D BS antenna pattern, path loss, and the frequent variations of LoS and NLoS connections, which depend on the actual building or terrain distributions. Further, the path loss, LoS and NLoS connections are closely related to the time-varying UAV and users' locations in a rather sophistical way.
%This makes it suitable for generating the numerical values of these channel gains and unsuitable for deriving their closed-form expressions, if not impossible.
%Even if complicated channel gain expressions are derived, if not impossible, the resulting (\ref{eq:original_problem}) will be highly non-convex and thus difficult to be effectively mitigated.
To this end, we correlate channel gain values with channel gain coefficients in a simple manner (refer to (\ref{eq:UtG_channel_gain}) and (\ref{eq:BtU_channel_gain})). In this way, analytically tractable channel gain models can be achieved if the channel gain coefficients can be numerically generated.
In this paper, DNNs, which are of practical usage for complicated function approximations, are leveraged to generate numerical values of the channel gain coefficients.
%artificial neural networks (ANNs) are leveraged to generate numerical values of the channel gain coefficients. ANNs, which consist of interconnected neurons with each neuron computing a weighted sum of their input data and then utilizing an activation function to produce its output, have been universally considered as a powerful non-linear function approximation technique.
%The universal approximation theorem claims that one hidden layer with enough hidden neurons suffices to approximate any continuous function \cite{Sutton2018reinforcement}. Nevertheless, it has been found that ANNs with a deep architecture (i.e., DNNs) including many hidden layers are of more practical usage for complicated function approximations. Therefore, DNNs are exploited in this paper to produce the numerical values of UtG and BtU channel gain coefficients.
%The following subsections present the procedures of estimating channel gain coefficients using DNNs.

\subsection{BtU channel gain coefficient estimation}
In this subsection, we start to describe the procedure of estimating BtU channel gain coefficient using a DNN from the viewpoint of the design of input and output space and network parameters. It is noteworthy that the BS can simultaneously connect to all UAVs and the channel gain coefficient of each BtU link should be estimated. Therefore, the BS is imposed to construct and train a DNN for each UAV.
%As shown in Fig. \ref{fig_BtU_DNN_architecture},
For a typical UAV $j \in {\mathcal N}_s^u$, $s \in {\mathcal S}^u$, the input space, network parameters, and output spaces of the constructed DNN are presented as follows:
\begin{itemize}
\item \textbf{Input space $\bm s^{\rm ul}(t)$:} At each time slot $t$, we set $\bm s^{\rm ul}(t) = [{\bm v}_j^{\rm 3D}(t); {\bm x}_{\rm B}^{\rm 3D}; a_{j,s}^{\rm ul}(t) ]$. $a_{j,s}^{\rm ul}(t)$ is an indicator variable indicating whether there is an LoS link between the BS and UAV $j$ at time slot $t$. $a_{j,s}^{\rm ul}(t) = 1$ if an LoS link exists; otherwise, $a_{j,s}^{\rm ul}(t) = 0$. This is because the BtU channel gain is closely-related to the locations of the BS and UAV $j$ and the LoS/NLoS connection between them. In an area with given building locations and heights, the presence/absence of LoS link between the BS and a UAV can be exactly determined by checking whether the link connecting the BS and the UAV is blocked or not by any building.
\item \textbf{Network parameters $\theta^{\mu_j} (t)$:} We consider two fully-connected hidden layers. The \emph{ReLU} function is utilized as the activation function in the hidden layers. Besides, the network parameters are initialized by a Xavier initialization scheme \cite{glorot2010understanding}.
\item \textbf{Output space $o_j(t)$:} $o_j(t)$ is the estimated output value of the output layer. We set $\hat o_j(t) = \hat {\theta}_{j}^{\rm B}(t)$ with $\hat {\theta}_{j}^{\rm B}(t)$ being the target channel gain coefficient value, which can be obtained by signal measurements, between the BS and UAV $j$.
The acquisition of signal measurements is practically available based on the existing cellular communication protocols like the reference signal received power (RSRP) {for the signal receiving power measurement}, received signal strength indicator (RSSI) {for the total receiving power measurement}  \cite{zeng2020simultaneous}.
Besides, the \emph{linear} function is considered as the activation function in the output layer.
\end{itemize}
%\begin{figure}[!t]
%\centering
%\includegraphics[width=2.5in]{BS_UAV_Uplink_DNN.eps}
%% where an .eps filename suffix will be assumed under latex,
%% and a .pdf suffix will be assumed for pdflatex; or what has been declared
%% via \DeclareGraphicsExtensions.
%\caption{The architecture of a DNN for BtU channel gain coefficient estimation.}
%\label{fig_BtU_DNN_architecture}
%\end{figure}

To effectively train the DNN, the experience replay technique \cite{mnih2015human} is exploited. This is due to the two special characteristics of the channel gain estimation issue: 1) the collected input data $\bm s^{\rm ul}(t)$ incrementally arrive as UAV $j$ moves to new locations, instead of all made available at the beginning of the training; 2) as UAV $j$ collects input data as it flies, those input data obtained at consecutive time slots are typically correlated, which may result in the oscillation or divergence of the DNN.
%First, the batch update has a reduced complexity than using the entire set of data samples. Second, the reuse of historical data reduces the variance of t during the iterative update. Third, the random sampling fastens the convergence by reducing the correlation in the training samples.
Specifically, at time slot $t$, a new training sample $(\bm s^{\rm ul}(t), \hat o_j(t))$ is added to a database or replay memory. When the memory is filled, the newly generated sample replaces the oldest one. We randomly choose a minibatch of training samples $\{(\bm s^{\rm ul}(\tau), \hat o_j(\tau))| \tau \in {\mathcal T}_t\}$ from the database, where ${\mathcal T}_t$ is a set of time slot indices. The network parameters $\theta^{u_j}(t)$ are trained using the ADAM method \cite{Kingma2014Adam} to reduce the averaged differences of squares, as
\begin{equation}\label{eq:mse}
L({\theta ^{{\mu _j}}(t)}) = \frac{1}{{|{{\mathcal T}_t}|}}\sum\nolimits_{\tau  \in {{\mathcal T}_t}} {{{\left( {{\hat {o}_j}(t) -  o_j(t)} \right)}^2}}.
\end{equation}

The following Algorithm \ref{alg_BtU_channel_gain} summarizes the steps of training the DNN for BtU channel gain estimation.
\begin{algorithm}[!htp]
\caption{DNN for BtU channel gain estimation}
\label{alg_BtU_channel_gain}
\begin{algorithmic}[1]
%\STATE \textbf{Input:} Given a set of clusters $c$, set $u$, $k_{upperbound}$
\STATE {\textbf {Initialize:}} DNN $\mu(\bm s^{\rm ul}(t)|\theta^{\mu_j} (t))$ with network parameters $\theta^{\mu_j} (t)$.
\STATE {\textbf {Initialize:}} Replay buffer $R$ with capacity $C$ and minibatch size $|{{\mathcal T}_t}|$
\FOR{each slot $t = 1, 2, \ldots$}
\STATE Calculate the input space $\bm s^{\rm ul}(t)$ according to the locations of the BS and UAV $j$.
\STATE The BS measures signals to obtain the target channel gain coefficient $\hat o_j(t)$.
\STATE Store the transition $(\bm s^{\rm ul}(t), \hat o_j(t))$ in the buffer $R$.
\STATE If $t \ge |{{\mathcal T}_t}|$, sample a random minibatch of $|{{\mathcal T}_t}|$ transitions $(\bm s^{\rm ul}(m), \hat o_j(m))$ from $R$.
\STATE Update the network parameters ${\theta ^{{\mu _j}}(t)}$ by minimizing the loss $L({\theta ^{{\mu _j}}(t)})$ using the ADAM method.
%\STATE Update t += 1.
\ENDFOR
%\STATE \textbf{Output:} The locations of drone-cells.
\end{algorithmic}
\end{algorithm}

In Algorithm \ref{alg_BtU_channel_gain}, the
%DNN will be trained based on a minibatch of randomly selected samples at each time slot. The
training process at each time slot is fast via calling the ADAM method to reduce the value of $L({\theta ^{{\mu _j}}(t)})$ of a minibatch of randomly selected samples.
%the DNN is trained based on  and the DNN is required to converge at the final time slot of the training instead of each time slot.
Thus, Algorithm \ref{alg_BtU_channel_gain} can be performed online. %Besides, to train the DNN, the signal measurement should be performed.
%measurements for downlink signal measurements and Received Interference power, Thermal noise power for Uplink signal measurements.

\subsection{UtG channel gain coefficient estimation}
%To solve (\ref{eq:original_problem}), a channel gain matrix among all UAVs and all users are required to be constructed at each time slot $t$.
%However, as constrained in (\ref{eq:slice_request_indicator}), a UAV can only connect to at most one ground user at each time slot $t$.
Owing to the movement of UAVs and ground users, each UAV will have the possibility of connecting each ground user after a long enough period of time. Besides, the UtG channel gain is closely-related to the environment where UAVs are deployed and the locations of UAVs and users, regardless of exclusive characteristics (e.g., user profile information and device types) of ground users.
Therefore, each UAV is imposed to construct and train a unique DNN to estimate the UtG channel gain coefficients between it and all ground users in this paper.
Particularly, for a typical UAV $j$, we design the input space, network parameters, and output space of the DNN as follows:
\begin{itemize}
\item \textbf{Input space $\bm s^{{\rm dl}}(t)$:} At each time slot $t$, we set $\bm s^{{\rm dl}}(t) = [{\bm {\tilde x}}^{\rm 3D}(t); {\bm v}_j^{\rm 3D}(t); a_{j}^{\rm dl}(t) ]$. ${\bm {\tilde x}}^{\rm 3D}(t)$ is the 3D coordinate of the user connecting to UAV $j$. $a_{j}^{\rm dl}(t)$ is an indicator variable indicating whether there is an LoS link between UAV $j$ and its connected user at time slot $t$. $a_{j}^{\rm dl}(t) = 1$ if the LoS link exists; otherwise, $a_j^{\rm dl}(t) = 0$. The approach of determining $a_j^{\rm dl}(t)$ is similar to that of obtaining $a_{j,s}^{\rm ul}(t)$.
    %In an area with given building locations and heights, the presence/absence of LoS link with a user connected to UAV $j$ can be exactly determined by checking whether the line connecting UAV $j$ and the user is blocked or not by any building.
\item \textbf{Network parameters $\theta^{Q_j} (t)$:} We consider two fully-connected hidden layers.
%The first hidden layer has 512 neurons, and the second hidden layer has 256 neurons.
The \emph{ReLU} is used as an activation function in hidden layers. The network parameters are initialized by a Xavier initialization scheme.
\item \textbf{Output space $o_{ij}^{{\rm dl}}(t)$:} $o_{ij}^{{\rm dl}}(t)$ is the output value of the output layer. We set the target output value $\hat o_{ij}^{{\rm dl}}(t) = \hat \theta_{ij}(t)$ with $\hat \theta_{ij}(t)$ being the target channel gain value between UAV $j$ and its connected user $i$ at $t$. Similarly, $\hat \theta_{ij}(t)$ can be obtained by signal measurements.
Besides, the \emph{linear} function is considered as the activation function in the output layer.
\end{itemize}

Likewise, the experience replay technique joint with the ADAM method is used to help train the DNN $Q(\bm s^{{\rm dl}}(t)|\theta^{Q_j}(t))$. The main steps of training the DNN for UtG channel gain estimation are similar to Algorithm \ref{alg_BtU_channel_gain}; yet, each UAV will measure signals for the DNN training. Thus, we omit the summarization of the main steps for brevity.

\section{Lyapunov-Based Optimization Framework}
\subsection{Lyapunov decomposition}
%{Answer the two problems: 1) what's the problem or disadvantages? 2) why the technique is used and its advantages?}
%The sequence-dependent characteristic of (\ref{eq:original_problem}) significantly hinders its solution.
Given the predicted users' locations and estimated channel gain coefficients, it is still difficult to solve (\ref{eq:original_problem}). This is because (\ref{eq:original_problem}) is a sequential decision problem.
%we can then attempt to solve (\ref{eq:original_problem}) by developing some optimization methods. However, some standard optimization methods cannot be directly utilized to solve (\ref{eq:original_problem}). This is because they require that users' locations and channel gain coefficients during the whole time slots can be accurately determined. However, for most of the prediction methods, the long-term prediction is always non-trivial. The longer the prediction duration, the less accurate the predicted result may be.
%Besides, the accuracy of DNNs for channel gain coefficient estimation cannot be improved without continuously accumulating real experience.
%The provided service by the UAV network may be energy-inefficient and unfair when highly inaccurate locations of users are leveraged to solve (\ref{eq:original_problem}).
%What's more, fresh signal measurements should be added into the replay buffer for training the DNNs. Thus, it is impossible to obtain the channel gain values in advance.
%more real experience is accumulated
%Therefore, we propose to decompose the sequential decision problem into multiple repeated optimization subproblems.
To tackle this challenge, we propose to decompose the sequential decision problem into multiple repeated optimization subproblems via the Lyapunov approach.
%Particularly, we leverage the Lyapunov approach to decouple the time average terms in the objective function and constraints.
%optimization problems. In this paper, we resort to the Lyapunov approach to decompose (\ref{eq:original_problem}) by decoupling its time average terms in the objective function and constraints using virtual evolution queues.
Particularly, let ${\bm \gamma} (t)=({{\gamma }_{1,1}}(t),\ldots ,{{\gamma }_{N_{|{\mathcal S}^e|}^e,|{\mathcal S}^e|}}(t))$ be an auxiliary data rate vector with {$0\le {{\gamma }_{i,s}}(t)\le u_{i,s}^{\rm max}(t)$, $\forall i \in {\mathcal N}_s^e, s\in {\mathcal S}^e, t$}. Define $g(t)=\phi (\bm{\gamma} (t)) = {\sum\nolimits_{i,s}{\log_2(1 + \gamma_{i,s}(t))}}$. The following lemma shows a transformed problem of (\ref{eq:original_problem}) and presents the conditions required to enforce the time average constraints of the transformed problem.
\begin{lemma}\label{lem:lemma_equivalent}
{\rm The original problem (\ref{eq:original_problem}) can be equivalently transformed into the following problem.
% Thus, (\ref{eq:original_problem}) can be transformed into (\ref{eq:Jensen_problem})
\begin{subequations}\label{eq:Jensen_problem}
\begin{alignat}{2}
& \mathop {\rm Maximize }\limits_{{\mathcal A}(t),{\mathcal P}(t),{{{\mathcal X}}}(t),{\bm {\gamma} (t)}} {\mkern 1mu} \text{ } \mathop {\lim \inf }\limits_{t \to \infty } \left( {\bar g(t) - \rho \sum\nolimits_{j \in {\mathcal J}} {{{{\bar p}_j^{\rm tot}}(t)}} } \right) \\
&{\rm s.t:} \quad {\mathop {\lim \inf }\limits_{t \to \infty } [{{\bar u}_{i,s}}(t) - {{\bar \gamma }_{i,s}}(t)] \ge 0,\forall i, s} \\
& \quad \mathop {\lim \inf }\limits_{t \to \infty } [{{\bar u}_{i,s}}(t) - C_s^{\rm th}] \ge 0,\forall i, s \allowdisplaybreaks[4] \\
& \quad \mathop {\lim \sup }\limits_{t \to \infty } [{{\tilde p}_j} - {{\bar p}_j^{\rm tot}}(t)] \ge 0,\forall j \\
& \quad 0 \le {\gamma _{i,s}}(t) \le u_{i,s}^{\rm max }(t), \forall i,s,t \\
& \quad {\rm (\ref{eq:original_problem}d)-(\ref{eq:original_problem}i).}
\end{alignat}
\end{subequations}
%where $\bar \gamma_{i,s}(t)$ is the time average of $\gamma_{i,s}(t)$.

For all $i \in {\mathcal N}_s^e$, $s \in {\mathcal S}^e$, $j \in {\mathcal J}$, introduce three families of virtual queue variables $\{Q_{i,s}(t)\}$, $\{Z_{i,s}(t)\}$, $\{H_j(t)\}$, and update them with
\begin{equation}\label{eq:Queue_ui}
{Q_{i,s}}(t ) = {Q_{i,s}}(t-1) + C_s^{\rm th} - {u_{i,s}}(t-1),
\end{equation}
\begin{equation}\label{eq:Queue_Z}
{Z_{i,s}}(t ) = {Z_{i,s}}(t-1) + {\gamma _{i,s}}(t-1) - {u_{i,s}}(t-1),
\end{equation}
\begin{equation}\label{eq:Queue_H}
{H_j}(t ) = {H_j}(t-1) + p_j^{\rm tot}(t-1) - {\tilde p_j}.
\end{equation}
If the following mean-rate stability conditions can be held
\begin{equation}\label{eq:Queue_EQ}
\mathop {\lim }\nolimits_{t \to \infty } {{{\mathbb E}\{{{[{Q_{i,s}}(t)]}^ + }\}}}/{t} = 0,\text{ }\mathop {\lim }\nolimits_{t \to \infty } {{{\mathbb E}\{{{[{Z_{i,s}}(t)]^+} }\}}}/{t} = 0,\text{ }\mathop {\lim }\nolimits_{t \to \infty } {{{\mathbb E}\{{{[{H_j}(t)]}^ + }\}}}/{t} = 0
\end{equation}
%\begin{equation}\label{eq:Queue_EZ}
%\mathop {\lim }\nolimits_{t \to \infty } {{{\mathbb E}\{{{[{Z_{i,s}}(t)]^+} }\}}}/{t} = 0,
%\end{equation}
%\begin{equation}\label{eq:Queue_EH}
%\mathop {\lim }\nolimits_{t \to \infty } {{{\mathbb E}\{{{[{H_j}(t)]}^ + }\}}}/{t} = 0,
%\end{equation}
where the non-negative operation ${[x]^ + } = \max \{ x,0\}$, then the time average constraints of (\ref{eq:Jensen_problem}) can be satisfied.}
\end{lemma}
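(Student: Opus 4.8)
The plan is to prove Lemma~\ref{lem:lemma_equivalent} in two parts: first the equivalence of (\ref{eq:original_problem}) and (\ref{eq:Jensen_problem}), then the claim that mean-rate stability of the virtual queues (\ref{eq:Queue_ui})--(\ref{eq:Queue_H}) enforces the time-average constraints of (\ref{eq:Jensen_problem}).

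For the equivalence, I would argue in both directions. Suppose $\{{\mathcal A}(t),{\mathcal P}(t),{\mathcal X}(t)\}$ is feasible for (\ref{eq:original_problem}). Construct a candidate for (\ref{eq:Jensen_problem}) by choosing $\gamma_{i,s}(t) = C_s^{\rm th}$ for all $i,s,t$ (this is admissible since $0\le C_s^{\rm th}\le u_{i,s}^{\rm max}(t)$ whenever the original rate constraint is satisfiable); then $\bar\gamma_{i,s}(t)=C_s^{\rm th}$, so (\ref{eq:Jensen_problem}b) reduces to the original rate constraint (\ref{eq:original_problem}b), (\ref{eq:Jensen_problem}c) holds, and $\bar g(t)=\sum_{i,s}\log_2(1+C_s^{\rm th})$. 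Conversely, given a feasible tuple for (\ref{eq:Jensen_problem}), constraint (\ref{eq:Jensen_problem}b) says $\liminf_{t\to\infty}[\bar u_{i,s}(t)-\bar\gamma_{i,s}(t)]\ge 0$; combining this with Jensen's inequality applied to the concave function $\phi$, namely $\bar g(t)=\frac1t\sum_{\tau}\phi(\bm\gamma(\tau))\le \phi(\bar{\bm\gamma}(t))=\sum_{i,s}\log_2(1+\bar\gamma_{i,s}(t))$, together with monotonicity of $\log_2$, yields
\[
\liminf_{t\to\infty}\bar g(t)\le \liminf_{t\to\infty}\sum\nolimits_{i,s}\log_2(1+\bar u_{i,s}(t)) = \liminf_{t\to\infty}\phi(\{\bar u_{i,s}(t)\}).
\]
Hence any objective value achievable by (\ref{eq:Jensen_problem}) is dominated by the corresponding objective of (\ref{eq:original_problem}) evaluated at the same control policy, so the optimal values coincide and an optimizer of one yields an optimizer of the other. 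The introduction of $\bm\gamma(t)$ is precisely the standard auxiliary-variable trick that linearizes the nonlinear utility-of-time-average into a time-average-of-utility amenable to drift-plus-penalty.

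For the second part, I would invoke the standard relationship between queue stability and time-average constraint satisfaction. Summing each virtual-queue recursion telescopically gives, for example, $Q_{i,s}(t)-Q_{i,s}(0)=\sum_{\tau=0}^{t-1}\bigl(C_s^{\rm th}-u_{i,s}(\tau)\bigr)$, so that $\frac{Q_{i,s}(t)}{t}\ge \frac1t\sum_{\tau=0}^{t-1}(C_s^{\rm th}-u_{i,s}(\tau)) = C_s^{\rm th}-\bar u_{i,s}(t)$ up to the vanishing $Q_{i,s}(0)/t$ term (and $Q_{i,s}(t)\le [Q_{i,s}(t)]^+$). Taking expectations, dividing by $t$, and letting $t\to\infty$, the mean-rate stability hypothesis $\lim_{t\to\infty}{\mathbb E}\{[Q_{i,s}(t)]^+\}/t=0$ forces $\liminf_{t\to\infty}{\mathbb E}\{\bar u_{i,s}(t)\}\ge C_s^{\rm th}$, which is (\ref{eq:Jensen_problem}c); the identical argument applied to $\{Z_{i,s}(t)\}$ gives (\ref{eq:Jensen_problem}b), and applied to $\{H_j(t)\}$ gives (\ref{eq:Jensen_problem}d). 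I would present this as a short lemma-style computation, being careful with the index shift in the recursions and with the $[\cdot]^+$ bookkeeping.

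I expect the main obstacle to be the rigor of the equivalence direction rather than the queue-stability argument, which is essentially bookkeeping. Specifically, the subtle point is showing that no optimal solution of (\ref{eq:Jensen_problem}) does strictly better than (\ref{eq:original_problem}) --- one must ensure the Jensen step is tight in the sense that restricting to constant $\gamma_{i,s}(t)$ (or, more generally, $\gamma_{i,s}(t)$ equal to its long-run average) loses nothing, and that the $\liminf$/$\limsup$ operators interchange appropriately with the sums over the finite index sets $i,s,j$. I would handle this by noting the index sets are finite so $\liminf$ of a finite sum equals the sum is not generally true, but the one-sided inequality used above ($\liminf\sum \le \sum\liminf$ fails; rather $\liminf\sum\ge\sum\liminf$ with equality not needed) is the correct direction for the domination argument, and the reverse construction with explicit $\gamma_{i,s}(t)=C_s^{\rm th}$ supplies a matching feasible point, closing the loop without needing any interchange to be an equality.
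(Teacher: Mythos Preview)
Your queue-stability argument in the second part is fine and in fact more explicit than the paper's own proof, which simply cites the drift-plus-penalty literature for this step.

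The gap is in your equivalence argument. You correctly use Jensen and monotonicity to show that any feasible point of (\ref{eq:Jensen_problem}) gives an objective value no larger than the objective of (\ref{eq:original_problem}) under the same control policy; that establishes $\text{opt}(\ref{eq:Jensen_problem})\le \text{opt}(\ref{eq:original_problem})$. But your ``reverse construction'' with $\gamma_{i,s}(t)=C_s^{\rm th}$ does \emph{not} close the loop. With that choice, $\bar g(t)=\sum_{i,s}\log_2(1+C_s^{\rm th})$, which in general is strictly smaller than $\phi(\{\bar u_{i,s}(t)\})=\sum_{i,s}\log_2(1+\bar u_{i,s}(t))$ whenever $\bar u_{i,s}(t)>C_s^{\rm th}$. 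So you have exhibited a feasible point of (\ref{eq:Jensen_problem}), but not one that attains the optimal value of (\ref{eq:original_problem}); you have not shown $\text{opt}(\ref{eq:Jensen_problem})\ge \text{opt}(\ref{eq:original_problem})$.

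The fix --- and this is exactly what the paper does --- is to take an optimal policy for (\ref{eq:original_problem}) with optimal time-average rates $\bar u_{i,s}^{\star}$ and set the auxiliary variables to the constant $\gamma_{i,s}(t)=\bar u_{i,s}^{\star}$ for every $t$. Then $g(t)=\phi(\{\bar u_{i,s}^{\star}\})$ is constant, hence $\bar g(t)=\phi(\{\bar u_{i,s}^{\star}\})$, and constraint (\ref{eq:Jensen_problem}b) holds with equality at the optimum. This shows (\ref{eq:Jensen_problem}) can achieve exactly the optimal value of (\ref{eq:original_problem}), completing the equivalence. Your instinct that a constant choice of $\gamma$ suffices is right; you just picked the wrong constant.
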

\begin{proof}
Please refer to Appendix B. % in the technical report \cite{yang2020Proactive}.
\end{proof}

% The critic of this problem, the equilibrium between the (Jensen problem and original problem).
% 1) the introduction of $\gamma_i(t)$ incurs that the feasible region of Jensen problem is widen.
% 2) when $\gamma_i(t)$
Then the following question pops up: \emph{how to solve (\ref{eq:Jensen_problem}) with the virtual queues?}

For simplicity, we assume that all virtual queues are initialized to be zero and define a Lyapunov function $L\left( t \right)$ as a sum of square of all the three virtual queues ${{[{{Q}_{i,s}}(t)]}^{+}}$, ${[{{{Z}_{i,s}}(t)}]^+}$ and ${{[{{H}_{j}}(t)]}^{+}}$ (divided by 2 for convenience) at $t$, i.e.,
${L(t)} \text{ } { \buildrel \Delta \over = \text{ } \frac{1}{2}\sum\nolimits_{i,s} {{{({{[{Q_{i,s}}(t)]}^ + })}^2}}  + \frac{1}{2}\sum\nolimits_{i,s} {{{({{[{Z_{i,s}}(t)]^+} })}^2}}  }$ $ + {\frac{1}{2}\sum\nolimits_{j \in {\mathcal J}} {{{({{[{H_j}(t)]}^ + })}^2}} }$.
$L(t)$ is a scalar measure of constraint violations. Intuitively, if the value of $L(t)$ is small, the absolute values of all queues are small; otherwise, the absolute value of at least one queue is great.
Additionally, we define a drift-plus-penalty function as $\Delta (t)-V\left( g(t)-\rho \sum\nolimits_{j \in {\mathcal J}}{ {{p}_{j}^{\rm tot}}(t)} \right)$, where $\Delta (t) = L(t + 1) - L(t)$ represents a Lyapunov drift, $-V\left( g(t)-\rho \sum\nolimits_{j \in {\mathcal J}}{{{p}_{j}^{\rm tot}}(t)} \right)$ is a \emph{penalty}, and $V$ is a non-negative penalty coefficient that weighs a trade-off between the constraint violations and the optimality. In this way, the solution of sequential decision problem (\ref{eq:Jensen_problem}) can be implemented by \emph{repeatedly minimizing the drift-plus-penaly function under all non-time average constraints of (\ref{eq:Jensen_problem}) at each time slot $t$.} %So far, we transform the sequential decision problem (\ref{eq:original_problem}) into a repeatedly optimized problem.

Besides, the following lemma presents the upper bound of the drift-plus-penaly function value.
\begin{lemma}\label{lemma:1}
{\rm At each time slot $t$, the upper bound of the value of the drift-plus-penalty function $\Delta (t)-V( g(t)-$ $\rho \sum\nolimits_{j \in {\mathcal J}}{{{p}_{j}^{\rm tot}}(t)} )$ can be expressed as (\ref{eq:upper_bound}) with $B{\rm{ }} \buildrel \Delta \over = {\sum\nolimits_{i,s} {{{(u_{i,s}^{\max })}^2}}}  + \sum\nolimits_{j \in {\mathcal J}} {{{(\hat p_j)}^2}/2}$.
\begin{equation}\label{eq:upper_bound}
\begin{array}{l}
\Delta (t) - V\left( {g(t) - \rho \sum\nolimits_{j \in {\cal J}} {p_j^{\rm tot}(t)} } \right) \le B +
\sum\nolimits_{i,s} {{{[{Q_{i,s}}(t)]}^ + }C_s^{\rm th}} - \sum\nolimits_{j \in {\cal J}} {{{[{H_j}(t)]}^ + }\left( {{{\tilde p}_j} - p_j^c} \right)}  +
V\rho \sum\nolimits_{j \in {\cal J}} p_j^c - \\
\quad V\phi ({\bm \gamma} (t)) + \sum\nolimits_{i,s} {{{[{Z_{i,s}}(t)]^+} }{\gamma _{i,s}}(t)}   +
\sum\nolimits_{j \in {\cal J}} {\{ V\rho  + {{[{H_j}(t)]}^ + }\} {p_j}(t)}  -
\sum\nolimits_{i,s} {\{ {{[{Q_{i,s}}(t)]}^ + } + {{[{Z_{i,s}}(t)]^+} }\} {u_{i,s}}(t)}.
\end{array}
\end{equation}
%\newcounter{mytempeqncnt}
%\begin{figure*}[!t]
%\normalsize
%\setcounter{mytempeqncnt}{\value{equation}}
%\setcounter{equation}{\ref{eq:Queue_EH}} %{\ref{eq:Queue_EH}}
%\begin{equation}\label{eq:upper_bound}
%\begin{array}{l}
%\Delta (t) - V\left( {g(t) - \rho \sum\nolimits_{j \in {\mathcal J}} {{{p_j^{\rm tot}}(t)}} } \right) \le  {B +  {\sum\nolimits_{i,s}{{{[{Q_{i,s}}(t)]}^ + }C_s^{\rm th} }} - \sum\nolimits_{j \in {\mathcal J}} {{{[{H_j}(t)]}^ + }\left( {{{\tilde p}_j} - p_j^c} \right)}  + V\rho \sum\nolimits_{j \in {\mathcal J}} {p_j^c} }\\
%\qquad \qquad \qquad \qquad \qquad \qquad  { - V\phi ({\bm \gamma}(t)) + {\sum\nolimits_{i,s}{{{[{Z_{i,s}}(t)]^+} }{\gamma _{i,s}}(t)}} }\\
%\qquad \qquad \qquad \qquad \qquad \qquad  {\begin{array}{*{20}{l}}
%{ + \sum\nolimits_{j \in {\mathcal J}} {\{ V\rho  + {{[{H_j}(t)]}^ + }\} {p_j}(t)}  - {\sum\nolimits_{i,s}{\{ {{[{Q_{i,s}}(t)]}^ + } + {{[{Z_{i,s}}(t)]^+} }\} {u_{i,s}}(t)}} }
%\end{array}}
%\end{array}
%\end{equation}
%\setcounter{equation}{\value{mytempeqncnt}}
%\setcounter{equation}{\ref{eq:upper_bound}} %{22}
%\hrulefill
%\vspace*{4pt}
%\end{figure*}
}
\end{lemma}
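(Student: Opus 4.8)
The plan is to derive the bound in (\ref{eq:upper_bound}) by the standard drift-plus-penalty bookkeeping: first obtain a per-slot bound on the Lyapunov drift $\Delta(t)$ by squaring each virtual-queue update, then add the penalty term $-V(g(t)-\rho\sum_{j}p_j^{\rm tot}(t))$, and finally collect terms. First I would bound each of the three queue contributions to $\Delta(t)$ separately. For the queue $Q_{i,s}(t)$, using the update (\ref{eq:Queue_ui}) and the elementary inequality $([x+d]^+)^2 \le (x^+)^2 + d^2 + 2 x^+ d$ valid for $x^+ = [x]^+$ (or more simply, observing that $[\,[Q]^+ + C_s^{\rm th} - u_{i,s}\,]^+ \le [Q]^+ + C_s^{\rm th}$ combined with $u_{i,s}\ge 0$), I get
\begin{equation*}
\tfrac12\big(([Q_{i,s}(t+1)]^+)^2 - ([Q_{i,s}(t)]^+)^2\big) \le \tfrac12\big(C_s^{\rm th}-u_{i,s}(t)\big)^2 + [Q_{i,s}(t)]^+\big(C_s^{\rm th}-u_{i,s}(t)\big),
\end{equation*}
and since $0\le u_{i,s}(t)\le u_{i,s}^{\max}$ the square term is at most $(u_{i,s}^{\max})^2$. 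Doing the analogous computation for $Z_{i,s}(t)$ with update (\ref{eq:Queue_Z}), I pick up $\tfrac12(\gamma_{i,s}(t)-u_{i,s}(t))^2 \le (u_{i,s}^{\max})^2$ (using $0\le\gamma_{i,s}(t)\le u_{i,s}^{\max}(t)$ and $0\le u_{i,s}(t)\le u_{i,s}^{\max}$) plus the cross term $[Z_{i,s}(t)]^+(\gamma_{i,s}(t)-u_{i,s}(t))$; and for $H_j(t)$ with update (\ref{eq:Queue_H}), I pick up $\tfrac12(p_j^{\rm tot}(t)-\tilde p_j)^2 \le \tfrac12(\hat p_j)^2$ (using $0\le p_j^{\rm tot}(t)\le\hat p_j$) plus $[H_j(t)]^+(p_j^{\rm tot}(t)-\tilde p_j)$.

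Next I would sum these three bounds over all $i\in{\mathcal N}_s^e$, $s\in{\mathcal S}^e$ and $j\in{\mathcal J}$; the constant terms aggregate precisely into $B \buildrel\Delta\over= \sum_{i,s}(u_{i,s}^{\max})^2 + \sum_{j\in{\mathcal J}}(\hat p_j)^2/2$. Then I substitute $p_j^{\rm tot}(t) = p_j(t)+p_j^c$ (equation (\ref{eq:power_total_at_t})) into the $H_j$ cross term to split it as $[H_j(t)]^+ p_j(t) - [H_j(t)]^+(\tilde p_j - p_j^c)$, and substitute $g(t)=\phi(\bm\gamma(t))$ and again $p_j^{\rm tot}(t)=p_j(t)+p_j^c$ into the penalty term, giving $-V\phi(\bm\gamma(t)) + V\rho\sum_j p_j(t) + V\rho\sum_j p_j^c$. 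Adding the penalty to the summed drift bound and grouping the coefficients of $p_j(t)$ (namely $V\rho + [H_j(t)]^+$) and of $u_{i,s}(t)$ (namely $[Q_{i,s}(t)]^+ + [Z_{i,s}(t)]^+$, noting the $u_{i,s}(t)$ term appears with a minus sign from both the $Q$ and $Z$ cross terms), the right-hand side matches (\ref{eq:upper_bound}) line by line.

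The only genuinely delicate point is the very first inequality — controlling $([Q(t+1)]^+)^2$ in terms of $([Q(t)]^+)^2$ when the update (\ref{eq:Queue_ui}) is written directly on $Q_{i,s}(t)$ rather than on $[Q_{i,s}(t)]^+$. I would handle this with the standard queueing lemma: for any real $x$ and any $d$, $([\,x\,]^+ + d)^2$-type manipulations, or more cleanly, the fact that the Lyapunov function is defined through $[\cdot]^+$ while the \emph{same} $[\cdot]^+$ appears on the right-hand side of (\ref{eq:upper_bound}), so I only ever need $[Q_{i,s}(t)]^+ \le [Q_{i,s}(t-1)]^+ + C_s^{\rm th}$ and its analogues — i.e.\ the $[\cdot]^+$ at time $t+1$ is dominated by the $[\cdot]^+$ at time $t$ plus the nonnegative input, which follows because $[\max\{0,a\}+b-c]^+\le \max\{0,a\}+b$ whenever $b,c\ge 0$. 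Everything else is a routine expansion of squares and regrouping, and the boundedness claims needed for Lemma~2 (e.g.\ finiteness of $B$) hold because $u_{i,s}^{\max}$ and $\hat p_j$ are finite by assumption.
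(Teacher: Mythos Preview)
Your proposal is correct and follows essentially the same route as the paper's proof: bound each squared virtual-queue increment, sum, add the penalty, and regroup. The only stylistic difference is in how the ``delicate point'' you flag is handled: the paper establishes the key inequality
\[
\tfrac12\big([Q_{i,s}(t{+}1)]^+\big)^2 \le \tfrac12\big([Q_{i,s}(t)]^+\big)^2 + [Q_{i,s}(t)]^+\big(C_s^{\rm th}-u_{i,s}(t)\big) + \tfrac12\big(C_s^{\rm th}-u_{i,s}(t)\big)^2
\]
by an explicit three-case split on the signs of $Q_{i,s}(t{+}1)$ and $Q_{i,s}(t)$, whereas you invoke the equivalent general fact $([x+d]^+)^2 \le ([x]^+)^2 + 2[x]^+d + d^2$ directly. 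Either way one arrives at the same per-queue bound, and the subsequent summation (with $\tfrac12(C_s^{\rm th}-u_{i,s})^2+\tfrac12(\gamma_{i,s}-u_{i,s})^2\le (u_{i,s}^{\max})^2$ and $\tfrac12(p_j^{\rm tot}-\tilde p_j)^2\le \tfrac12\hat p_j^2$) and regrouping using $p_j^{\rm tot}=p_j+p_j^c$ and $g(t)=\phi(\bm\gamma(t))$ are identical to what the paper does.
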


\begin{proof}
Please refer to Appendix C. % in the technical report \cite{yang2020Proactive}.
\end{proof}

In (\ref{eq:upper_bound}), the right-hand-side expression is the upper bound of the drift-plus-penalty. As such, the minimization of the drift-plus-penalty can be approximated by minimizing its upper bound.
%We therefore mitigate (\ref{eq:original_problem}) by greedily minimizing the upper-bounded value of the drift-plus-penalty function at each $t$.
%Meanwhile, at each $t$, the upper-bound value can be decomposed into three independent terms including a constant term, an auxiliary variable term and a term consisting of UAVs' transmit power and MBB users' achievable data rates.
Further, the upper bound can be partitioned into two independent terms related to $\bm \gamma(t)$ and other sets of decision variables, i.e., ${\mathcal A}(t)$, ${\mathcal P}(t)$, ${\mathcal X}(t)$, respectively.
Therefore, we can summarize the Lyapunov-based optimization framework of mitigating (\ref{eq:original_problem}) as follows.
\begin{itemize}
\item At each $t$, observe $Q_{i,s}(t)$, ${Z_{i,s}}(t)$, ${{H}_{j}}(t)$ for all $i \in {\mathcal N}_s^e$, $s \in {\mathcal S}^e$, $j \in {\mathcal J}$.
\item Choose ${{\gamma _{i,s}}(t)}$ for each user $i$ to mitigate (\ref{eq:gamma_related_problem})
\begin{subequations}\label{eq:gamma_related_problem}
\begin{alignat}{2}
& \mathop {\rm Minimize }\limits_{{\bm {\gamma}} (t)} {\mkern 1mu} \text{ } \mathop  - V\phi ({\bm {\gamma} (t)}) + {\sum\nolimits_{i,s}{{{[{Z_{i,s}}(t)]^+} }{\gamma _{i,s}}(t)}} \\ &{\rm {s.t:} \text{ }} \quad 0 \le \gamma_{i,s}(t) \le u_{i,s}^{\max}(t).
\end{alignat}
\end{subequations}
\item Given UAVs' locations ${{\mathcal {X}}(t-1)}$, choose ${{\mathcal {A}}(t)}$, ${{\mathcal {P}}(t)}$, and ${{ {{\mathcal X}}}(t)}$ to mitigate (\ref{eq:subproblem_BPX})
\begin{subequations}\label{eq:subproblem_BPX}
\begin{alignat}{2}
& \mathop {\rm Minimize }\limits_{{\mathcal {A}}(t),{\mathcal {P}}(t),{{{\mathcal X}}(t)}} {\mkern 1mu} \text{ }  \sum\nolimits_{j \in {\mathcal J}} {\{ V\rho  + {{[{H_j}(t)]}^ + }\} {p_j}(t)}  - {\sum\nolimits_{i,s}{\{ {{[{Q_{i,s}}(t)]}^ + } + {{[{Z_{i,s}}(t)]^+} }\} {u_{i,s}}(t)}}   \\
& {\rm s.t:} \quad {\rm (\ref{eq:original_problem}d)-(\ref{eq:original_problem}i)}.
\end{alignat}
\end{subequations}
\item Compute $u_{i,s}(t)$ using (\ref{eq:achievable_data_rate}). Update three virtual queues using (\ref{eq:Queue_ui}), (\ref{eq:Queue_Z}), and (\ref{eq:Queue_H}).
\end{itemize}
%Analyze the complexity of solving the two problems. The problem contains logarithmic-quadratic terms and continuous-binary variables. Therefore, it is still NP-hard. It may impossible to find efficient methods to achieve its optimal solution.

\subsection{Solution to subproblem (\ref{eq:gamma_related_problem})}
The proportional fairness function $\phi (\bm \gamma (t))$ is a separable sum of individual logarithmic functions. Thus, the mitigation of (\ref{eq:gamma_related_problem}) is equivalent to a separate selection of the individual auxiliary variable {${{\gamma }_{i,s}}(t)\in [ 0,{u_{i,s}^{\max }(t)} ]$} for each user $i \in {\mathcal N}_s^e$, $s\in{\mathcal S}^e$ that minimizes the convex function $-V{{\log }_{2}}(1+{{\gamma }_{i,s}}(t))+{{[{{Z}_{i,s}}(t)]}^{+}}{{\gamma }_{i,s}}(t)$. In consequence, the closed-form solution to (\ref{eq:gamma_related_problem}) can be written as
\begin{equation}\label{eq:compute_gamma}
{\gamma _{i,s}}(t) = \left\{ {\begin{array}{*{20}{l}}
{u_{i,s}^{\max }(t), \quad {{{[{Z_{i,s}}(t)]^+} } = 0},}\\
{\min \left\{ {{{\left[ {\frac{V}{{{{[{Z_{i,s}}(t)]^+} }\ln 2}} - 1} \right]}^ + },\;u_{i,s}^{\max }(t)} \right\},\text{ } {\rm otherwise.}}
\end{array}} \right.
\end{equation}
%Next, we proceed to discuss how to mitigate the problem (13) via a linear approximation mechanism.

%Why we solve this problem using the following schemes, because of high dimensionality, we reduce its dimensionality; first decompose the problem into three independent subproblems; according to obtained solutions, we develop an iterative method to mitigate the problem.

%We can know that the problem (13) is a non-linear integer programming (NLIP) problem, where $u_{i,s}(t)$ is complicate non-linear functions of $b(t)$, $p(t)$ and $x(t)$. The exhaustive algorithm for solving the problem (13) has a complexity of $O\left( \prod\limits_{i\in {\mathcal I}}{|}{{A}_{i}}(t)|*** \right)$, which increases exponentially with the increasing number of users. Although heuristic algorithms (e.g., genetic algorithm) can be leveraged to mitigate this problem, it may require a long processing time due to its slow convergence rate. To accelerate the optimization process, we propose an efficient solution method of the problem (13) using a *** mechanism in the next section. Meanwhile, we derive the closed-form solutions of the problem (12).

\subsection{Solution to subproblem (\ref{eq:subproblem_BPX})}
%Although the Lyapunov approach can effectively decompose the challenging time-coupled original problem (\ref{eq:original_problem}), (\ref{eq:subproblem_BPX}) is still difficult to be solved due to the existence of high-dimensional variables.
To make (\ref{eq:subproblem_BPX}) easier to tackle, we attempt to reduce its variable dimension and decompose it into several independent subproblems. The key observations on (\ref{eq:subproblem_BPX}) are as follows: in (\ref{eq:subproblem_BPX}), network resources allocated to URLLC slices and MBB slices should be simultaneously optimized; the system bandwidth allocated to URLLC slices and MBB slices, however, is only coupled at the total bandwidth constraint {(\ref{total_bandwidth})}. We therefore decompose (\ref{eq:subproblem_BPX}) into a subproblem of URLLC slice resource allocation and a subproblem of MBB slice resource allocation via decoupling {(\ref{total_bandwidth})}.

\subsubsection{Subproblem formulation of URLLC slice resource allocation}
In (\ref{con_urllc_power}), the URLLC bandwidth $W^u(t)$ and the total transmit power $p(W^u(t))$ are correlated. By referring to (\ref{con_urllc_power}), the following lemma, which reflects the relationship between $p(W^u(t))$ and $W^u(t)$, can be derived.
\begin{lemma}\label{lem:power_bandwidth}
$p(W^u(t))$ satisfies the following properties \cite{sun2018optimizing}:
\begin{itemize}
\item $p(W^u(t))$ decreases with $W^u(t)$ when $0 < W^u(t) \le {W_u^{{\rm th}}}$, where $W_u^{{\rm th}}$ is the unique solution that minimizes $p(W^u(t))$.
\item The derivative of $p(W^u(t))$, denoted by $\frac{{\partial p({W^u}(t))}}{{\partial {W^u}(t)}}$, satisfies
\begin{equation}\label{der_y}
\frac{{\partial p({W^u}(t))}}{{\partial {W^u}(t)}}\left\{ {\begin{array}{*{20}{l}}
{ < 0,}&{0 < {W^u}(t) \le W_u^{{\rm{th}}}}\\
{ = 0,}&{{W^u}(t) = W_u^{{\rm{th}}}}\\
{ > 0,}&{{W^u}(t) > W_u^{{\rm{th}}}}
\end{array}} \right.
\end{equation}
\end{itemize}
\end{lemma}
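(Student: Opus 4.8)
The plan is to reduce the statement about $p(W^u(t))$ to the corresponding claim for a single BtU link and then lift it back through the optimal URLLC-bandwidth split. Using (\ref{power_urllc_app}) I would first write $p_{j,s}^{\rm B}(t)=\tfrac{N_0}{h_{j,s}^{\rm B}(t)}\,\omega_s\!\bigl(W_{j,s}^u(t)\bigr)$, where $\omega_s(W):=W\bigl(e^{\psi_s(W)}-1\bigr)$, $\psi_s(W):=\alpha_s W^{-1}+\beta_s W^{-1/2}$, $\alpha_s:=b_{s,u}^{\rm req}\ln 2/\tau_{s,u}^{\rm req}>0$, and $\beta_s:=Q^{-1}(\varepsilon_{s,u}^{\rm req})/\sqrt{\tau_{s,u}^{\rm req}}>0$ (positivity of $\beta_s$ uses $\varepsilon_{s,u}^{\rm req}<1/2$, which any URLLC target obeys). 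Reading $p(W^u(t))$ as the minimum of $\sum_{s,j}\tfrac{N_0}{h_{j,s}^{\rm B}(t)}\omega_s(W_{j,s}^u(t))$ over all splits $W_{j,s}^u(t)>0$ with $\sum_{s,j}W_{j,s}^u(t)=W^u(t)$, the goal becomes: (a) show each $\omega_s$ (hence each $\tfrac{N_0}{h_{j,s}^{\rm B}(t)}\omega_s$) is strictly decreasing then strictly increasing with a unique minimizer $W_s^\star$; and (b) show $p(\cdot)$ inherits this shape.

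For (a) I would differentiate, obtaining $\omega_s'(W)=e^{\psi_s(W)}\bigl(1-\alpha_s W^{-1}-\tfrac{\beta_s}{2}W^{-1/2}\bigr)-1$, so that $\operatorname{sign}\omega_s'(W)=\operatorname{sign}\bigl(1-\alpha_s W^{-1}-\tfrac{\beta_s}{2}W^{-1/2}-e^{-\psi_s(W)}\bigr)$. The key device is the decreasing change of variable $\zeta:=W^{-1/2}\in(0,\infty)$, which turns the bracket into $\Theta_s(\zeta):=1-\alpha_s\zeta^2-\tfrac{\beta_s}{2}\zeta-e^{-(\alpha_s\zeta^2+\beta_s\zeta)}$. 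I would then record $\Theta_s(0)=0$, $\Theta_s'(0)=\beta_s/2>0$, $\Theta_s(\zeta)\to-\infty$ as $\zeta\to\infty$, and the crucial fact that for every $\zeta>0$
\[ \Theta_s''(\zeta)=2\alpha_s\bigl(e^{-(\alpha_s\zeta^2+\beta_s\zeta)}-1\bigr)-(2\alpha_s\zeta+\beta_s)^2e^{-(\alpha_s\zeta^2+\beta_s\zeta)}<0, \]
both summands being negative on $\zeta>0$. Strict concavity of $\Theta_s$ makes $\Theta_s'$ strictly decreasing, so it changes sign exactly once (from $+$ to $-$); hence $\Theta_s$ increases then decreases, and together with $\Theta_s(0)=0$ and $\Theta_s(\zeta)\to-\infty$ this pins down a \emph{unique} $\zeta_s^\star>0$ with $\Theta_s>0$ on $(0,\zeta_s^\star)$ and $\Theta_s<0$ on $(\zeta_s^\star,\infty)$. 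Undoing the substitution with $W_s^\star:=(\zeta_s^\star)^{-2}$ gives exactly $\omega_s'<0$ on $(0,W_s^\star)$, $\omega_s'(W_s^\star)=0$, and $\omega_s'>0$ on $(W_s^\star,\infty)$, which is the single-link version of (\ref{der_y}).

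For (b) I would use that $\omega_s(W)\to\infty$ as $W\downarrow 0$, so the minimizing split for budget $W^u(t)$ is interior and satisfies the KKT stationarity $\tfrac{N_0}{h_{j,s}^{\rm B}(t)}\omega_s'\!\bigl(W_{j,s}^{u\star}(t)\bigr)=\lambda$ for a common multiplier $\lambda=\lambda(W^u(t))$, while the envelope theorem yields $p'(W^u(t))=\lambda(W^u(t))$. Feeding in the sign pattern from (a): $\lambda\le 0$ forces $W_{j,s}^{u\star}(t)\le W_s^\star$ for every link, hence $W^u(t)=\sum_{s,j}W_{j,s}^{u\star}(t)\le\sum_{s,j}W_s^\star=:W_u^{\rm th}$, and conversely; therefore $\operatorname{sign}\lambda(W^u(t))=\operatorname{sign}\bigl(W^u(t)-W_u^{\rm th}\bigr)$, which is precisely (\ref{der_y}), and it exhibits $W_u^{\rm th}=\sum_{s\in{\mathcal S}^u}\sum_{j\in{\mathcal N}_s^u}W_s^\star$ as the unique minimizer of $p(\cdot)$, so $p(\cdot)$ strictly decreases on $(0,W_u^{\rm th}]$.

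I expect step (a) to be the main obstacle: $\omega_s$ is \emph{not} convex (one checks $\omega_s''(W)<0$ for all large $W$), so the usual ``convex $\Rightarrow$ unimodal'' shortcut is unavailable and one must argue directly that $\omega_s'$ changes sign only once. The substitution $\zeta=W^{-1/2}$ is exactly what makes this provable, since it replaces $\operatorname{sign}\omega_s'$ by a function with unconditionally negative second derivative. A lesser point requiring care is justifying the envelope step in (b) for the not-strictly-convex aggregate objective; this can be handled via directional differentiability of parametric minimum-value functions, or simply imported from \cite{sun2018optimizing}.
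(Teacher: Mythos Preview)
The paper does not prove this lemma at all: it is stated with the citation \cite{sun2018optimizing} and then used, with no argument given in the text or appendices. So there is nothing to compare your proposal against on the paper's side; your write-up already goes well beyond what the paper offers.

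On the substance of your argument: part~(a) is correct and the substitution $\zeta=W^{-1/2}$ is a clean way to get strict concavity of the auxiliary function $\Theta_s$, which indeed forces a single sign change of $\omega_s'$; the endpoint checks $\Theta_s(0)=0$, $\Theta_s'(0)=\beta_s/2>0$, $\Theta_s(\zeta)\to-\infty$, and the computation of $\Theta_s''$ are all right. Part~(b) is sound in outline, but two points deserve to be made explicit. First, the paper never actually says that $p(W^u(t))$ is defined as the \emph{minimum} of $\sum_{s,j}p_{j,s}^{\rm B}(t)$ over splits $\{W_{j,s}^u(t)\}$ summing to $W^u(t)$; you are (reasonably) inferring this from the surrounding optimization, but it should be stated as an assumption of your proof. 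Second, because $\omega_s$ is not convex, the KKT equation $\tfrac{N_0}{h_{j,s}^{\rm B}(t)}\omega_s'(W_{j,s}^{u\star})=\lambda$ can have multiple solutions in $W_{j,s}^{u\star}$ for a given $\lambda>0$, so uniqueness of the optimal split (and hence classical differentiability of $p$) is not automatic; you already flag this, and handling it via one-sided envelope derivatives or by importing the regularity from \cite{sun2018optimizing} is the right fix. With those two caveats addressed, your proposal would constitute a complete proof where the paper gives none.
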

This lemma shows that there is a value $W_u^{\rm th}$ to minimize the total required BS transmit power.
In consequence, minimizing $W^u(t)$ while satisfying the BS transmit power constraint will not decrease the objective function value of (\ref{eq:subproblem_BPX}).
On the contrary,
finding out the minimum $W^u(t)$ satisfying the BS transmit power constraint can lead to the increase of the value of $W^e(t)$. As $u_{i,s}(t)$ monotonously increases with $W^e(t)$ \cite{sun2018optimizing}, the maximum value of the objective function of (\ref{eq:subproblem_BPX}) can be obtained when the minimum $W^u(t)$ is achieved.
It indicates that the optimal solution can still be obtained even if the subproblem decomposition scheme is performed.

Therefore, we can formulate the subproblem of URLLC slice resource allocation at slot $t$ as follows:
\begin{subequations}\label{eq:URLLC_subproblem}
\begin{alignat}{2}
& \mathop {\min }\limits_{{W^u(t)},\{ p_{j,s}^{\rm B}(t) \} } \quad {{W^u}(t)} \allowdisplaybreaks[4]  \\
& {\rm s.t:} \quad 0 < {W^u}(t) \le W^{\rm tot}, \forall t \\
& \quad {\rm (\ref{con_urllc_power}) \text{ } is \text{ } satisfied}.
\end{alignat}
\end{subequations}

Based on the second property in Lemma \ref{lem:power_bandwidth}, we can conclude that (\ref{eq:URLLC_subproblem}) is non-convex, which is challenging to be solved.
To solve the problem efficiently, a binary search method \cite{sun2018optimizing} was executed on the condition $\frac{{\partial p(W^u(t))}}{{\partial {{\bar W}^u(t)}}} = 0$ to obtain ${W_u^{\rm th}(t)}$.
%\subsection{Solution to subproblem (\ref{eq:URLLC_subproblem})}
%Based on the properties in Lemma \ref{lem:power_bandwidth}, ${W_u^{\rm th}(t)}$ can be obtained by calling
Besides, the mitigation of (\ref{eq:URLLC_subproblem}) is equivalent to obtaining $W^u(t)$ satisfying $p(W^u(t)) = {p_{\rm B}^{\rm max}}$ constrained on ${W^u(t)} \in (0,W_{ub}^u(t)]$ with $W_{ub}^u(t) = \min \left\{ {{W_u^{\rm th}(t)},W^{\rm tot}} \right\}$. Similarly, the binary search method is leveraged to obtain the minimum URLLC bandwidth $W_{opt}^u(t)$ via searching $W^u(t)$ in the interval $(0,W_{ub}^u(t)]$.

As $p(W_{opt}^u(t)) = p_{\rm B}^{\max}$, we have
\begin{equation}\label{eq:opt_URLLC_bandwidth}
p_{j,s}^{\rm B}(t) = {{p_{\rm B}^{\rm max}} \mathord{\left/
 {\vphantom {{p_{\rm B}^{\rm max}} {\left( {{{\left| {h_{j,s}^{\rm B}(t)} \right|}^2}\sum\nolimits_{l \in {S_{uav}}} {\frac{1}{{{{\left| {h_l^u} \right|}^2}}}} } \right)}}} \right.
 \kern-\nulldelimiterspace} {\left( {{{\left| {h_{j,s}^{\rm B}(t)} \right|}^2}\sum\nolimits_{s \in {\mathcal S}^u}\sum\nolimits_{l \in {\mathcal N}_s^u} {{{{{\left| {h_{l,s}^{\rm B}(t)} \right|}^{-2}}}}} } \right)}}.
\end{equation}

%a low-complexity binary search method presented in subsection VII-B can be utilized.
\subsubsection{Subproblem of MBB slice resource allocation}
%In this subsection, the MBB subproblem is formulated.
Given $W^u(t)$, the subproblem of MBB slice resource allocation can be formulated as follows:
\begin{subequations}\label{eq:MBB_subproblem}
\begin{alignat}{2}
& \mathop {\rm Minimize }\limits_{{\mathcal {A}}(t),{\mathcal {P}}(t),{{{\mathcal X}}(t)}} {\mkern 1mu} \text{ }  \sum\nolimits_{j \in {\mathcal J}} {\{ V\rho  + {{[{H_j}(t)]}^ + }\} {p_j}(t)}  - \nonumber \\
& \qquad \qquad {\sum\nolimits_{i,s}{\{ {{[{Q_{i,s}}(t)]}^ + } + {{[{Z_{i,s}}(t)]^+} }\} {u_{i,s}}(t)}}   \\
& {\rm s.t:} \quad {\rm (\ref{eq:original_problem}d)-(\ref{eq:original_problem}g),(\ref{eq:slice_request_indicator}),(\ref{total_bandwidth})}.
\end{alignat}
\end{subequations}

In (\ref{eq:MBB_subproblem}), there are continuous variable sets ${\mathcal P}(t)$, ${\mathcal X}(t)$ and 0-1 variable set ${\mathcal A}(t)$. Besides, the constraint (\ref{eq:original_problem}f) is non-convex over $\bm v_j(t)$, $\forall j$. (\ref{eq:MBB_subproblem}a) is non-convex over $\bm v_j(t)$ and $p_j(t)$, $\forall j$. Therefore, (\ref{eq:MBB_subproblem}) is a challenging mixed-integer-non-convex programming problem.

%To solve such a challenging problem, an alternative optimization scheme, which is described in detail in subsection VII-C, can be exploited.
%We therefore develop to mitigate (\ref{eq:MBB_subproblem}) in subsection VII-C.
%includes logarithmic-quadratic-terms and continuous and integer variables. Besides, the constraint (\ref{eq:original_problem}f) is non-convex; thus, (\ref{eq:subproblem_BPX}) is a mixed-integer-non-convex programming problem that is difficult to be mitigated directly \cite{lee2011mixed}. To mitigate this problem, an iterative optimization scheme is exploited .
%{centralized solution framework, justify the feasibility of the centralized solution framework: 1) the UAV must be controlled; 2) the optimization problem is coupled by the interferences caused by the other UAVs; 3) the BS agent can predict the entire locations of all users; }
%\subsection{Solution to Subproblem (\ref{eq:MBB_subproblem})}
The alternative optimization scheme has been shown to be an effective scheme of solving mixed-integer programming problems \cite{abbasi2020trajectory}. We therefore adopt this type of scheme to mitigate (\ref{eq:MBB_subproblem}) and first attempt to optimize the acceptance of MBB slice requests. Supported by the optimal slice access enforcement, UAV movement control and UAV transmit power optimization are then performed.
%dynamically and efficiently configured according to users' data rate requirements.
%Owing to the space limitation we omit the procedure of implementing the RAN slicing; the reader can refer to \cite{foukas2017orion} for more details.
%that is acceptance optimization of network slice requests, UAV location optimization, UAV transmit power control. Based on the obtained solutions, we then propose an iterative algorithm to optimize these subproblems alternatively.
%Supported by network slicing, network resources can be dynamically and efficiently allocated to logical network slices according to the corresponding QoS demands.
%\subsubsection{Acceptance optimization of slice requests}
%Owing to the limitation on UAV's MBB service capability (e.g., limited by transmit power and bandwidth), and the high QoS requirements of MBB users as well, the RAN operator cannot accept every incoming slice request. Thus, it has to properly select and admit slice requests to maximize MBB users' total achievable data rates.

\emph{a) Acceptance optimization of slice requests:} Given UAVs' locations ${\mathcal X}(t)$ and UAV transmit power ${\mathcal P}(t)$, the acceptance set of slice requests of (\ref{eq:MBB_subproblem}) can be optimized by mitigating the following problem
%{must mention that this acceptance optimization is identified by the users' QoS requirements}
%\begin{subequations}\label{eq:UE_association_problem}
%\begin{alignat}{2}
%& \mathop {\rm Maximize }\limits_{{\mathcal {A}}(t)} {\mkern 1mu} \text{ } {\sum\nolimits_{i,s} {\sum\nolimits_{j \in {\mathcal J}} {{c_{ij,s}(t)}{a_{ij,s}}(t)} } }  \\
%& {\rm s.t: \text{ }  \quad} {\rm (\ref{eq:slice_request_indicator}),(\ref{eq:original_problem}g).}
%\end{alignat}
%\end{subequations}
\begin{equation}\label{eq:UE_association_problem}
\mathop {\rm Maximize }\limits_{{\mathcal {A}}(t)} {\mkern 1mu} \text{ } {\sum\nolimits_{i,s} {\sum\nolimits_{j \in {\mathcal J}} {{c_{ij,s}(t)}{a_{ij,s}}(t)} } }  \quad {\rm s.t: \text{ }} {\rm (\ref{eq:slice_request_indicator}),(\ref{eq:original_problem}g).}
\end{equation}
where ${c_{ij,s}(t)} = \{ {[{Q_{i,s}}(t)]^ + } + {[{Z_{i,s}}(t)]^ + }\}W^e(t){\log _2}\left( {1 + {\rm SINR}_{ij,s}(t)} \right)$.

Note that at the initial time slot $t = 1$, all weights $\{c_{ij,s}(1)\}$ equal to zero since all virtual queues are initialized to be zero. To tackle this issue, we define the weight $c_{ij,s}(1)$ as $c_{ij,s}(1) = W^e(t)\log_2 ( {1 + {\rm SINR}_{ij,s}(1)} )
$. (\ref{eq:UE_association_problem}) is a linear integer programming problem and can be efficiently solved by some optimization tools such as MOSEK.% \cite{MOSEK}.

%\subsubsection{UAV movement control}
\emph{b) UAV movement control:} As the constraint (\ref{eq:original_problem}f) is non-convex over $\bm v_j(t)$, $\forall j,t$, the SCA technique \cite{scutari2017parallel} is exploited to tackle the non-convexity.
%SCA is an efficient technique to solve various types of non-convex optimization problems \cite{scutari2017parallel}. The main idea of SCA is that a locally tight approximation of the original problem is performed at each iteration to produce a tight convex objective function and constraint sets. That is,
The key idea of SCA is to solve a sequence of convex optimization problems with different initial points to obtain an approximate solution to a non-convex optimization problem instead of solving the non-convex problem directly. In this paper, we first utilize some approximation functions to locally approximate non-convex (\ref{eq:MBB_subproblem}a) and (\ref{eq:original_problem}f) based on the following assumption.
\begin{assumption}\label{assumption_1}
A function $\tilde f:{\mathcal X} \times {\mathcal K} \to {\mathbb R}$ is called the approximation
function for the non-convex function $f(x)$ ($x \in {\mathcal X}$), when the following conditions hold \cite{scutari2017parallel}:
\begin{itemize}
\item $\tilde f(\cdot, \cdot)$ is continuous in ${\mathcal X} \times {\mathcal K}$.
\item $\tilde f(\cdot, x^{(r)})$ is convex in ${\mathcal X}$ for all $x^{(r)} \in {\mathcal K}$.
\item Function value consistency: $\tilde f(x^{(r)}, x^{(r)}) = f(x^{(r)})$ for all $x^{(r)} \in {\mathcal X}$.
\item Gradient consistency: $\frac{{\partial \tilde f(x,{x^{(r)}})}}{{\partial x}}{|_{x = {x^{(r)}}}} = \nabla f(x){|_{x = {x^{(r)}}}}$ for all $x^{(r)} \in {\mathcal K}$.
\item Upper bound: $f(x) \le \tilde f(x, x^{(r)})$ for all $x \in {\mathcal X}$, $x^{(r)} \in {\mathcal K}$.
\item $\frac{{\partial \tilde f(\cdot, \cdot)}}{{\partial x}}$ is continuous in ${\mathcal X} \times {\mathcal K}$.
\end{itemize}
\end{assumption}

Then, given the UAV transmit power ${\mathcal P}(t)$, UAVs' locations ${{\mathcal {X}}(t-1)}$ at the previous time slot $t-1$, and the acceptance set of slice requests ${\mathcal A}(t)$, the following proposition shows a method of controlling the movement of UAVs.
\begin{proposition}\label{lemma:lemma_uav_location}
{\rm By exploring the SCA technique, UAVs' locations at $t$ can be obtained by mitigating the following convex optimization problem:
\begin{subequations}\label{eq:UAV_location_equal_problem_approximate}
\begin{alignat}{2}
& \mathop {\rm Maximize }\limits_{{\mathcal {X}}(t), \{\eta_{i,s}(t)\}, \{B_{ik,s}(t)\}} {\mkern 1mu} \text{ } {{\sum\nolimits_{i,s} {\{ {{[{Q_{i,s}}(t)]}^ + } + {{[{Z_{i,s}}(t)]^+} }\} {\eta _{i,s}}(t)} }}  \\
& {\rm s.t:} \sum\nolimits_{j \in {\cal J}} {a_{ij,s}}(t){W^e}(t)\left( D_{i,s}^{(r)}(t) - \sum\nolimits_{k \in {\cal J}} E_{ik,s}^{(r)}(t)\left( ||{{\bm v}_k}(t) - {{\bm x}_{i,s}}(t)|{|^2} - ||{\bm v}_k^{(r)}(t) - {{\bm x}_{i,s}}(t)|{|^2} \right) \right) + \nonumber \\
& \qquad \qquad \sum\nolimits_{j \in {\cal J}} {{a_{ij,s}}(t){W^e}(t){{\tilde R}_{ij,s}}(t)}  \ge {\eta _{i,s}}(t),\forall i,t \allowdisplaybreaks[4] \\
& \quad {B_{ik,s}(t)} \le  {|| {{\bm v}_k^{(r)}(t) - {\bm x}_{i,s}(t)} ||^2} + 2{( {{\bm v}_k^{(r)}(t) - {\bm x}_{i,s}(t)} )^{\rm T}}\left( {{{\bm v}_k}(t) - {{\bm x}_{i,s}(t)}} \right),\forall i,k \ne j,t \\
& \quad - {|| {{\bm v}_j^{(r)}(t) - {\bm v}_k^{(r)}(t)} ||^2} + 2{( {{\bm v}_j^{(r)}(t) - {\bm v}_k^{(r)}(t)} )^{\rm T}} \left( {{{\bm v}_j}(t) - {{\bm v}_k}(t)} \right) \ge d_{\rm min}^2, \forall j,k \ne j,t \\
& \quad {\rm (\ref{eq:original_problem}e) \text{ } is \text{ } satisfied,}
\end{alignat}
\end{subequations}
where $\eta_{i,s}(t)$ and ${B_{ik,s}(t)}$ are slack variables, $D_{i,s}^{(r)}(t) = {\log _2}( {{N_0W^e(t)} + \sum\limits_{k \in {\mathcal J}} {\frac{{{{p_k}(t)\theta _{ik,s}(t)}}}{{{|g_k(t)-g_{i,s}|^2} + ||{\bm v}_k^{(r)}(t) - {{\bm x}_{i,s}(t)}|{|_2^2}}}} } )$, $E_{ik,s}^{(r)}(t) = {{\frac{{{{p_k}(t)\theta _{ik,s}(t)}}}{{{{\left( {{|g_k(t)-g_{i,s}|^2} + ||{\bm v}_k^{(r)}(t) - {{\bm x}_{i,s}(t)}|{|_2^2}} \right)}^2}{2^{D_{i,s}^{(r)}(t)}\ln2}}}}}$, ${\tilde R_{ij,s}}(t) =  - {\log _2}( {{N_0W^e(t)} + \sum\nolimits_{k \in {\mathcal J}\backslash \{ j\} } {\frac{{{{p_k}(t)\theta _{ik,s}(t)}}}{{{|g_k(t)-g_{i,s}|^2} + {B_{ik,s}}(t)}}} } )$, and ${\bm v}_{j}^{(r)}(t)$, ${\bm v}_{k}^{(r)}(t)$ are given locations at the $r$-th iteration.}
\end{proposition}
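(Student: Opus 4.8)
The plan is to recognize (\ref{eq:UAV_location_equal_problem_approximate}) as the per-iteration convex surrogate produced by one round of successive convex approximation (SCA) applied to the UAV-movement subproblem distilled from (\ref{eq:MBB_subproblem}) once $\mathcal{A}(t)$ and $\mathcal{P}(t)$ are held fixed. First I would drop from (\ref{eq:MBB_subproblem}a) the terms that do not involve $\mathcal{X}(t)$: since every $p_j(t)$ is now a constant, minimizing (\ref{eq:MBB_subproblem}a) over $\mathcal{X}(t)$ is equivalent to \emph{maximizing} $\sum_{i,s}\{[Q_{i,s}(t)]^+ + [Z_{i,s}(t)]^+\}u_{i,s}(t)$ subject only to the waypoint constraint (\ref{eq:original_problem}e) and the collision constraint (\ref{eq:original_problem}f). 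I would then introduce the slack variables $\eta_{i,s}(t)$, impose $\eta_{i,s}(t)\le u_{i,s}(t)$, and maximize $\sum_{i,s}\{[Q_{i,s}(t)]^+ + [Z_{i,s}(t)]^+\}\eta_{i,s}(t)$ instead; because the weights are non-negative, this epigraph reformulation is exact. Substituting (\ref{eq:MBB_sinr})--(\ref{eq:achievable_data_rate}) together with the channel model (\ref{eq:UtG_channel_gain}), $u_{i,s}(t)=\sum_j a_{ij,s}(t)W^e(t)\log_2(1+\mathrm{SINR}_{ij,s}(t))$, in which each $\log_2(1+\mathrm{SINR}_{ij,s}(t))$ is a difference of two logarithms: $\log_2\bigl(N_0W^e(t)+\sum_k \tfrac{p_k(t)\theta_{ik,s}(t)}{|g_k(t)-g_{i,s}|^2 + \|\bm v_k(t)-\bm x_{i,s}(t)\|_2^2}\bigr)$ minus the same expression with the inner sum taken over $k\neq j$. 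The two obstacles to convexity are this difference-of-logs sitting inside $\eta_{i,s}(t)\le u_{i,s}(t)$, and the reverse-convex collision constraint (\ref{eq:original_problem}f).

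For the rate constraint I would construct a concave minorant of $u_{i,s}(t)$ by treating the two log-terms separately. The positive log-term I would replace by its first-order Taylor expansion in the squared-distance variables $d_k=\|\bm v_k(t)-\bm x_{i,s}(t)\|_2^2$ about the current iterate $\bm v_k^{(r)}(t)$; a direct differentiation shows this equals $D_{i,s}^{(r)}(t)-\sum_k E_{ik,s}^{(r)}(t)\bigl(\|\bm v_k(t)-\bm x_{i,s}(t)\|_2^2-\|\bm v_k^{(r)}(t)-\bm x_{i,s}(t)\|_2^2\bigr)$, which is concave in $\mathcal{X}(t)$ because each $d_k$ is convex and each coefficient $E_{ik,s}^{(r)}(t)$ is strictly positive. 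The negative log-term I would handle with the auxiliary variables $B_{ik,s}(t)$: impose $B_{ik,s}(t)\le d_k$, relaxed by a first-order expansion of $d_k$ to a linear inequality of the form (\ref{eq:UAV_location_equal_problem_approximate}c); then $B_{ik,s}(t)\le d_k$ forces $\tfrac{p_k\theta_{ik,s}}{|g_k-g_{i,s}|^2+B_{ik,s}}\ge\tfrac{p_k\theta_{ik,s}}{|g_k-g_{i,s}|^2+d_k}$, so $\tilde R_{ij,s}(t)=-\log_2\bigl(N_0W^e(t)+\sum_{k\neq j}\tfrac{p_k\theta_{ik,s}}{|g_k-g_{i,s}|^2+B_{ik,s}}\bigr)$ underestimates the negative log-term and is concave in the $B_{ik,s}(t)$. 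Combining the two pieces reproduces exactly the left-hand side of (\ref{eq:UAV_location_equal_problem_approximate}b), so replacing $\eta_{i,s}(t)\le u_{i,s}(t)$ by (\ref{eq:UAV_location_equal_problem_approximate}b)--(\ref{eq:UAV_location_equal_problem_approximate}c) is a convex inner restriction. For the collision constraint I would substitute $\|\bm v_j(t)-\bm v_k(t)\|_2^2$ by its supporting hyperplane at $(\bm v_j^{(r)}(t),\bm v_k^{(r)}(t))$, which lower-bounds the true squared distance, giving the linear constraint (\ref{eq:UAV_location_equal_problem_approximate}d); the waypoint constraint (\ref{eq:original_problem}e) is already convex and is carried over unchanged. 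It then remains to verify that the approximations so constructed meet the six requirements of Assumption \ref{assumption_1} — continuity, partial convexity, value and gradient consistency at $\bm v_k^{(r)}(t)$, and the global over/under-estimation property — after which (\ref{eq:UAV_location_equal_problem_approximate}), whose objective is linear in $\{\eta_{i,s}(t)\}$ and all of whose constraints are now convex, is a valid SCA surrogate, and the convergence theory of \cite{scutari2017parallel} guarantees that its solutions stay feasible for the original subproblem and converge, as $r\to\infty$, to a stationary point of it.

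I expect the main obstacle to be the global minorization property of the linearized positive log-term, i.e., that $\log_2\bigl(N_0W^e(t)+\sum_k \tfrac{p_k(t)\theta_{ik,s}(t)}{|g_k(t)-g_{i,s}|^2+d_k}\bigr)$ dominates its tangent plane in $\bm d=(d_k)$, equivalently that this map is jointly convex in $\bm d$. The clean argument is a log-convexity one: each $d_k\mapsto \tfrac{p_k\theta_{ik,s}}{|g_k-g_{i,s}|^2+d_k}$ is log-convex, since its logarithm equals a constant minus $\log(|g_k-g_{i,s}|^2+d_k)$, which is convex; a positive constant is trivially log-convex; and a sum of log-convex functions is log-convex, so the argument of the outer logarithm is log-convex, which is precisely the asserted joint convexity. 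The same fact makes $\log_2\bigl(N_0W^e(t)+\sum_{k\neq j}\tfrac{p_k\theta_{ik,s}}{|g_k-g_{i,s}|^2+B_{ik,s}}\bigr)$ convex in the $B_{ik,s}(t)$ so that $\tilde R_{ij,s}(t)$ is concave, and the strict monotonicity of that expression in each $B_{ik,s}(t)$ ensures the slack inequalities (\ref{eq:UAV_location_equal_problem_approximate}c) are active at an optimum, so that no optimality is sacrificed by introducing the $B_{ik,s}(t)$. Once these convexity and monotonicity facts are secured, the remaining steps — the differentiations that identify $D_{i,s}^{(r)}(t)$ and $E_{ik,s}^{(r)}(t)$, and the value/gradient consistency checks at the expansion point — are routine.
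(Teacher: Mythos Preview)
Your proposal is correct and follows essentially the same route as the paper's proof: isolate the $\mathcal{X}(t)$-subproblem, introduce the slacks $\eta_{i,s}(t)$, split $\log_2(1+\mathrm{SINR})$ into a difference of logs, linearize the positive log in the squared distances, handle the negative log via the auxiliary $B_{ik,s}(t)$ together with a linearized squared-distance bound, and linearize the collision constraint. The one notable difference is that you actually \emph{prove} the joint convexity of the positive log-term in the squared-distance vector via a log-convexity argument, whereas the paper simply asserts that $\hat R_{i,s}(t)$ ``is convex w.r.t.\ $\|\bm v_k(t)-\bm x_{i,s}(t)\|_2^2$'' and proceeds; your log-convexity argument is the right way to close what the paper leaves as an observation, and it simultaneously delivers the concavity of $\tilde R_{ij,s}(t)$ in the $B_{ik,s}(t)$.
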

\begin{proof}
Please refer to Appendix D.% in the technical report \cite{yang2020Proactive}.
\end{proof}

{\textbf{Remark}}: As (\ref{eq:UAV_location_equal_problem_approximate}) is convex, the optimization tool MOSEK can be utilized to effectively solve it. Owing to the approximation, the feasible domain of (\ref{eq:UAV_location_equal_problem_approximate}) is smaller than that of (\ref{eq:MBB_subproblem}) with fixed ${\mathcal P}(t)$ and ${\mathcal A}(t)$; thus, the value of (\ref{eq:UAV_location_equal_problem_approximate}a) is the upper bound of the opposite of (\ref{eq:MBB_subproblem}a) with given ${\mathcal P}(t)$ and ${\mathcal A}(t)$, if it exists.

%\subsubsection{UAV transmit power optimization}
\emph{c) UAV transmit power optimization:} For any given slice request acceptance set ${\mathcal A}(t)$, UAVs' locations ${\mathcal X}(t)$, the following proposition shows a method of optimizing the UAV transmit power.
%can be achieved via mitigating the following problem
\begin{proposition}\label{lemma:lemma_UAV_power}
{\rm By exploiting the SCA technique, the UAV transmit power at $t$ can be optimized by mitigating the following convex optimization problem:
\begin{subequations}\label{eq:UAV_power_problem_approx}
\begin{alignat}{2}
& \mathop {\rm Maximize }\limits_{{\mathcal {P}}(t),{\{ \eta_{i,s}(t) \}}} {\mkern 1mu} \text{ } - V\rho \sum\nolimits_{j \in {\mathcal J}} {{p_j}(t)}  - \sum\nolimits_{j \in {\mathcal J}} {{{[{H_j}(t)]}^ + }{p_j}(t)}  + {\sum\nolimits_{i,s} {\{ {{[{Q_{i,s}}(t)]}^ + } + {{[{Z_{i,s}}(t)]^+} }\} {\eta _{i,s}}(t)}} \\
& {\rm s.t:} \sum\limits_{j \in {\mathcal J}} {( {{a_{ij,s}}(t)W^e(t)({{\hat R}_{i,s}}(t) - F_{ij,s}^{(r)}(t))} )}  - \sum\limits_{j \in {\mathcal J}} \left ( {a_{ij,s}}(t)W^e(t) \right.  \sum\limits_{k \in {\mathcal J}\backslash \{ j\} } {G_{ik,}^{(r)}(t)( {{p_k}(t) - p_k^{(r)}(t)} )}  ) \ge {\eta _{i,s}}(t),\forall i,s,t  \allowdisplaybreaks[4]  \\
& {\rm (\ref{eq:original_problem}d) \text{ } is \text{ } satisfied,}
\end{alignat}
\end{subequations}
where ${\hat R_{i,s}}(t) = {\log _2}( {{N_0W^e(t)} + \sum\limits_{k \in {\mathcal J}} {\frac{{{{p_k}(t)\theta _{ik,s}(t)}}}{{{|g_k(t)-g_{i,s}|^2} + ||{{\bm v}_k}(t) - {{\bm x}_{i,s}}|{|^2}}}} } )$, $F_{ij,s}^{(r)}(t) = {\log _2}( {N_0W^e(t)} + \sum\limits_{k \in {\mathcal J}\backslash \{ j\} } {p_k^{(r)}(t){h_{ik,s}}(t)}  )$, $G_{ik,s}^{(r)}(t) = \frac{{{h_{ik,s}}(t)}}{2^{F_{ij,s}^{(r)}(t)}\ln 2}$, and $p_k^{(r)}(t)$ is the given transmit power of UAV $k$ at the $r$-th iteration.}
\end{proposition}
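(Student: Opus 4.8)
Proof proposal for Proposition \ref{lemma:lemma_UAV_power}.

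The plan is to mirror the treatment used for UAV movement control in Proposition \ref{lemma:lemma_uav_location}, but now linearizing with respect to the transmit-power vector $\mathcal{P}(t)$. First I would isolate the only source of non-convexity in the power subproblem of (\ref{eq:MBB_subproblem}) with $\mathcal{A}(t)$ and $\mathcal{X}(t)$ held fixed. Writing $1 + {\rm SINR}_{ij,s}(t)$ as a ratio of two affine functions of $\mathcal{P}(t)$ via (\ref{eq:MBB_sinr}), the achievable rate (\ref{eq:achievable_data_rate}) decomposes as $u_{i,s}(t) = \sum_{j \in {\mathcal J}} a_{ij,s}(t) W^e(t)[\hat R_{i,s}(t) - F_{ij,s}(t)]$, where $\hat R_{i,s}(t) = \log_2(N_0 W^e(t) + \sum_k p_k(t) h_{ik,s}(t))$ and $F_{ij,s}(t) = \log_2(N_0 W^e(t) + \sum_{k \neq j} p_k(t) h_{ik,s}(t))$ are both concave in $\mathcal{P}(t)$, each being the logarithm of a positive affine function once $\mathcal{X}(t)$ is frozen (so that every $h_{ik,s}(t) = \theta_{ik,s}(t)/D_{ik,s}^2(t)$ is a fixed positive constant). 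Hence $u_{i,s}(t)$ is a difference of concave functions of $\mathcal{P}(t)$, which is exactly the structure the SCA framework of Assumption \ref{assumption_1} is designed to handle.

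Next I would introduce the slack variables $\{\eta_{i,s}(t)\}$ and observe that the maximization form of (\ref{eq:MBB_subproblem}) weighs each $u_{i,s}(t)$ by the nonnegative coefficient $[Q_{i,s}(t)]^+ + [Z_{i,s}(t)]^+$, so that replacing $u_{i,s}(t)$ by $\eta_{i,s}(t)$ subject to $\eta_{i,s}(t) \le u_{i,s}(t)$ leaves the optimal value unchanged. To convexify the still-nonconvex constraint $\eta_{i,s}(t) \le \sum_j a_{ij,s}(t) W^e(t)[\hat R_{i,s}(t) - F_{ij,s}(t)]$, I would keep the concave term $\hat R_{i,s}(t)$ intact and replace the subtracted concave term $F_{ij,s}(t)$ by its first-order Taylor expansion about the current iterate $\mathcal{P}^{(r)}(t)$. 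Concavity gives the gradient inequality $F_{ij,s}(t) \le F_{ij,s}^{(r)}(t) + \sum_{k \neq j} G_{ik,s}^{(r)}(t)(p_k(t) - p_k^{(r)}(t))$, and differentiating $\log_2(N_0 W^e(t) + \sum_{l\neq j} p_l(t) h_{il,s}(t))$ in $p_k$ and evaluating at $\mathcal{P}^{(r)}(t)$ yields precisely $G_{ik,s}^{(r)}(t) = h_{ik,s}(t)/(2^{F_{ij,s}^{(r)}(t)}\ln 2)$, because $2^{F_{ij,s}^{(r)}(t)} = N_0 W^e(t) + \sum_{l \neq j} p_l^{(r)}(t) h_{il,s}(t)$. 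One then checks that this affine surrogate satisfies all the requirements of Assumption \ref{assumption_1}: continuity, convexity (it is affine), value and gradient consistency at $\mathcal{P}^{(r)}(t)$, and the global upper-bound property. Substituting it for $F_{ij,s}(t)$ produces constraint (\ref{eq:UAV_power_problem_approx}b). Since the surrogate upper-bounds $F_{ij,s}(t)$, it yields a concave lower bound on $u_{i,s}(t)$, so the feasible region of (\ref{eq:UAV_power_problem_approx}) is an inner (conservative) approximation of the exact reformulation; its objective is linear, and every remaining constraint — (\ref{eq:UAV_power_problem_approx}b) together with the linear total-power cap (\ref{eq:original_problem}d) — is convex, so (\ref{eq:UAV_power_problem_approx}) is a convex program solvable by MOSEK. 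Finally I would invoke the standard SCA convergence result (\cite{scutari2017parallel}) to argue that iterating on $r$ drives the solution of (\ref{eq:UAV_power_problem_approx}) to a stationary point of the original transmit-power subproblem.

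The main obstacle is not a single hard estimate but the bookkeeping around the difference-of-concave reformulation: one must verify carefully that (i) $\hat R_{i,s}(t)$ is genuinely concave in the full vector $\mathcal{P}(t)$ and is left untouched while only $F_{ij,s}(t)$ is linearized, (ii) the sign conventions are consistent so that the linearization is conservative, i.e., produces a valid lower bound on $u_{i,s}(t)$ and hence a restriction of the feasible set rather than a relaxation, and (iii) the compact closed form claimed for $G_{ik,s}^{(r)}(t)$ indeed follows from correctly differentiating $\log_2$ of the interference-plus-noise term at $\mathcal{P}^{(r)}(t)$. Once these are in place, checking the six conditions of Assumption \ref{assumption_1} for the affine surrogate is routine.
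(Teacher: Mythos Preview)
Your proposal is correct and follows essentially the same route as the paper: fix $\mathcal{A}(t)$ and $\mathcal{X}(t)$, introduce the slacks $\eta_{i,s}(t)$, exhibit the rate as a difference of concave functions $\hat R_{i,s}(t)-F_{ij,s}(t)$ in $\mathcal{P}(t)$, keep $\hat R_{i,s}(t)$ and upper-bound $F_{ij,s}(t)$ by its first-order Taylor expansion at $\mathcal{P}^{(r)}(t)$ to obtain the constants $F_{ij,s}^{(r)}(t)$ and $G_{ik,s}^{(r)}(t)$, and then check Assumption \ref{assumption_1} to conclude convexity of (\ref{eq:UAV_power_problem_approx}). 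The only (harmless) extra you add is the explicit appeal to SCA convergence from \cite{scutari2017parallel}, which the paper defers to Lemma \ref{lemma:lemma_convergent} rather than including in this proposition.
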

\begin{proof}
Please refer to Appendix E.% in the technical report \cite{yang2020Proactive}.
\end{proof}

{\textbf{Remark}}: MOSEK tool can now be leveraged to effectively solve the convex (\ref{eq:UAV_power_problem_approx}). Likewise, the utilization of SCA technique results in the feasible domain of (\ref{eq:UAV_power_problem_approx}) being smaller than that of (\ref{eq:MBB_subproblem}) with given ${\mathcal A}(t)$ and ${\mathcal X}(t)$. Therefore, the optimal value of (\ref{eq:UAV_power_problem_approx}a) is the upper bound of the opposite of (\ref{eq:MBB_subproblem}a) with fixed ${\mathcal A}(t)$ and ${\mathcal X}(t)$, if it exists.

%\subsubsection{Iterative Acceptance, Location and Power Optimization}
Based on the above derivation, we next propose an iterative algorithm, named iterative request, location and power optimization, for (\ref{eq:MBB_subproblem}) that is summarized as below.
\begin{algorithm}
\caption{Iterative request, location and power optimization}
\label{alg:alg1}
\begin{algorithmic}[1]
\STATE \textbf{Initialization:} Initialize ${\mathcal X}^{(0)}(t) = {\mathcal X}(t-1)$, ${\mathcal P}^{(0)}(t) = {\mathcal P}(t-1)$, and {$\bar r_{\rm max}$}, let $r = 0$.
\REPEAT
\STATE Given ${{\mathcal X}^{(r)}(t), {\mathcal P}^{(r)}(t)}$, solve (\ref{eq:UE_association_problem}) to obtain the optimal solution ${{\mathcal A}^{(r+1)}(t)}$.
\STATE Given ${{\mathcal A}^{(r+1)}(t), {\mathcal X}^{(r)}(t), {\mathcal P}^{(r)}(t)}$, solve (\ref{eq:UAV_location_equal_problem_approximate}) to generate the optimal solution ${{\mathcal X}^{(r+1)}(t)}$.
\STATE Given ${{\mathcal A}^{(r+1)}(t), {\mathcal X}^{(r+1)}(t), {\mathcal P}^{(r)}(t)}$, solve (\ref{eq:UAV_power_problem_approx}) to obtain the optimal solution ${{\mathcal P}^{(r+1)}(t)}$. Update $r = r + 1$.
\UNTIL {Convergence or $r = \bar r_{\rm max}$.}
\end{algorithmic}
\end{algorithm}

%\subsection{Repetitive Energy-Efficient and Fair MBB Service}
Finally, we can summarize the energy-efficient and fair algorithm of mitigating the UAV network slicing problem in Algorithm \ref{alg:final_algorithm}.
%can be summarized to mitigate the joint slice request, UAV location and transmit power optimization problem (\ref{eq:original_problem}).
% what about the UAV location ${\mathcal X}(t-1)$
\begin{algorithm}
\caption{Repeatedly Energy-Efficient and Fair Service coverage, RE$^2$FS}
\label{alg:final_algorithm}
\begin{algorithmic}[1]
\STATE \textbf{Initialization:} Randomly initialize UAVs' locations, and run initialization steps of Algorithms \ref{alg_user_loc_prediction} and \ref{alg_BtU_channel_gain}.
\STATE \textbf{Initialization:} Let $Q_{i,s}(1) = 0$, $Z_{i,s}(1) = 0$, $H_j(1) = 0$ for all $i \in {\mathcal N}_s^e$, $s \in {\mathcal S}^e$, $j \in {\mathcal J}$.
\STATE {\textbf{Pre-train ESN models and DNNs:}}
\FOR {each episode $\hat t = 1, 2, \ldots, 500$}
\STATE Steps 3-19 of Algorithm \ref{alg_user_loc_prediction}.
\ENDFOR
\FOR {each episode $\hat t = 1, 2, \ldots, 3000$}
\STATE Steps 4-8 of Algorithm \ref{alg_BtU_channel_gain}. Pre-train the DNN for UtG channel gain coefficient estimation for 1000 episodes.
\ENDFOR
\FOR {each time slot $t = 1, 2, \ldots, T$}
\STATE \textbf{Predict users' locations using the distributed ESN learning method:}
\STATE Step 19 of Algorithm \ref{alg_user_loc_prediction} to obtain the predicted locations $\hat {\bm y}_{i,s}^{\rm B}(t+K)$ of user $i \in {\mathcal N}_s^e$ for all $s \in {\mathcal S}^e$.
\STATE \textbf{Estimate BtU and UtG channel gain coefficients using the DNNs:}
\STATE Observe the state $\bm s^{\rm ul}(t+K)$ and $\bm s^{\rm dl}(t+K)$. Use DNNs to estimate the corresponding channel gain coefficients $\theta_{ij,s}(t+K)$ and $\theta_j^{\rm B}(t+K)$. Then, calculate channel gains $h_{ij,s}(t+K)$, $\forall i \in {\mathcal N}_s^e$, $j\in {\mathcal J}$, $s\in {\mathcal S}^e$ and $h_{j,s}^{\rm B}(t+K)$, $\forall j \in {\mathcal N}_s^u$, $s \in {\mathcal S}^u$ using (\ref{eq:UtG_channel_gain}) and (\ref{eq:BtU_channel_gain}), respectively.
\STATE \textbf{Slice resource allocation:}
%\STATE Observe the virtual queues $Q_{i,s}(t+K)$, $Z_{i,s}(t+K)$, and $H_j(t+K)$.
\STATE Compute $\gamma_{i,s}(t+K)$ using (\ref{eq:compute_gamma}) for all $i$ and $s$.
\STATE Call the binary search method to obtain the optimal $W^u(t+K)$, and calculate $W^e(t+K)$ using (\ref{total_bandwidth}).
\STATE Find the acceptance set of slice requests ${\mathcal A}(t+K)$, UAVs' locations ${\mathcal X}(t+K)$, and UAV transmit power ${\mathcal P}(t+K)$ using Algorithm \ref{alg:alg1}.
\STATE \textbf{Update the ESN models and DNNs:}
\STATE Steps 5-19 of Algorithm \ref{alg_user_loc_prediction}.
\STATE Steps 4-8 of Algorithm \ref{alg_BtU_channel_gain}. Likewise, train the DNN for UtG channel gain coefficient estimation.
\STATE \textbf{Update virtual queues:}
\STATE Calculate $u_{i,s}(t+K)$ for all $i$ and $s$ using (\ref{eq:achievable_data_rate}). Calculate $p_j^{\rm tot}(t+K)$ for all $j$ using (\ref{eq:power_total_at_t}).
\STATE Update $Q_{i,s}(t+K+1)$, $Z_{i,s}(t+K+1)$, and $H_j(t+K+1)$ for all $i$, $s$ and $j$ using (\ref{eq:Queue_ui}), (\ref{eq:Queue_Z}), and (\ref{eq:Queue_H}).
\ENDFOR
\end{algorithmic}
\end{algorithm}
% Analyze the convergence and the complexity of Algorithm 1
% The convergence and computational complexity of Algorithm 1,

\section{Implementation and performance analysis of Algorithm \ref{alg:final_algorithm}}
In this section, we first summarize the implementation of Algorithm \ref{alg:final_algorithm}. Then, we analyze the convergence and computational complexity of Algorithm \ref{alg:final_algorithm}.

In Algorithm \ref{alg:final_algorithm}, to ensure that accurate users' predicted locations and estimated channel gain coefficients can be inputted when calling steps 11-18, we perform steps 3-9 before executing the communication task.
Fig. \ref{fig_algorithm_flow} depicts the logical flow of Algorithm \ref{alg:final_algorithm}, where \textcircled{1} is firstly called for ESN model and DNN pre-training and then the logical flow \textcircled{2} $ \to $ \textcircled{3} $\to $ \textcircled{4} $\to$ \textcircled{5} $\to$ \textcircled{6} is executed.
Besides, Algorithm \ref{alg:final_algorithm} can effectively tackle the mismatch issue of slice supply and demand.
Although the process of virtually isolating the UAV network resources and functions is time-consuming, this process is desired to be completed within the time interval $(t, t+K)$ based on the predicted users' locations $\{\hat {\bm y}_{i,s}^{\rm B}(t+K)\}$, $\forall i,s,t$. At time slot $t+K$, the well-created and configured network slices will be utilized to accommodate the QoS requirements of UAV control and non-payload links and to serve ground mobile users.
Summarily, Algorithm \ref{alg:final_algorithm} allows to partition network slices in advance; thus, we call it \emph{proactive UAV network slicing}.
\begin{figure}[!t]
\centering
\includegraphics[width=5 in]{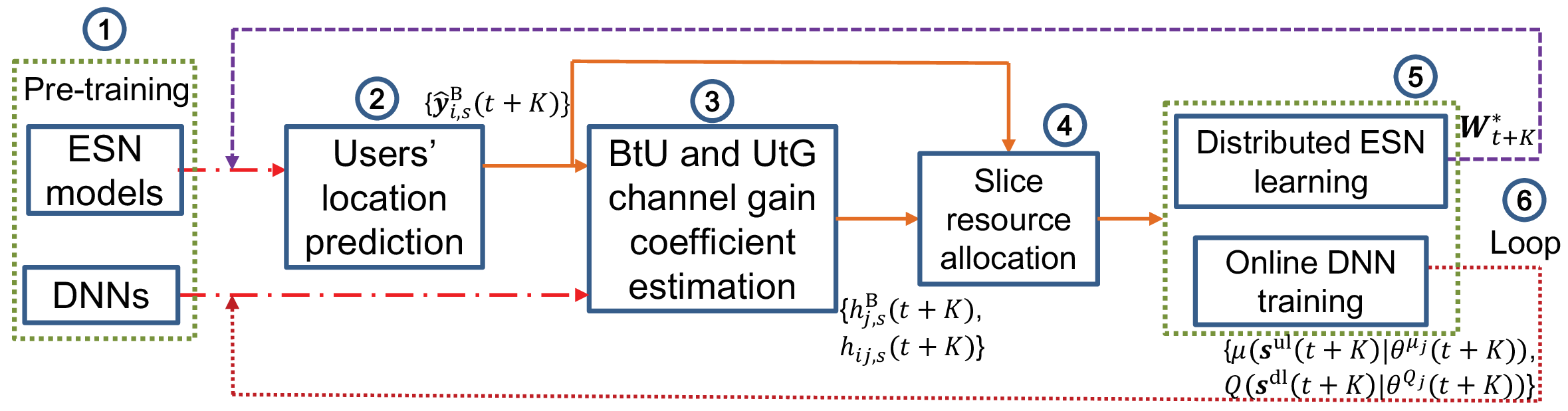}
\caption{Logical flow of Algorithm \ref{alg:final_algorithm}.}
\label{fig_algorithm_flow}
\end{figure}

The following lemma shows that the performance of Algorithms \ref{alg:alg1} and \ref{alg:final_algorithm} can be guaranteed.
\begin{lemma}\label{lemma:lemma_convergent}
{\rm Algorithm \ref{alg:alg1} is convergent, and Algorithm \ref{alg:final_algorithm} can make all virtual queues mean-rate stabilize.}
\end{lemma}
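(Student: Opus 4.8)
The plan is to establish Lemma~\ref{lemma:lemma_convergent} in two parts: first the convergence of the inner alternating optimization (Algorithm~\ref{alg:alg1}), and then the mean-rate stability of all virtual queues under Algorithm~\ref{alg:final_algorithm}. For the first part, I would exploit the block-coordinate structure of Algorithm~\ref{alg:alg1}: at each iteration it updates the slice-acceptance set $\mathcal{A}(t)$ by solving the linear integer program (\ref{eq:UE_association_problem}) exactly, then updates the UAV locations $\mathcal{X}(t)$ by solving the convex SCA surrogate (\ref{eq:UAV_location_equal_problem_approximate}), then updates the transmit powers $\mathcal{P}(t)$ via the convex surrogate (\ref{eq:UAV_power_problem_approx}). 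The key observation is that the SCA surrogates constructed in Propositions~\ref{lemma:lemma_uav_location} and \ref{lemma:lemma_UAV_power} satisfy Assumption~\ref{assumption_1}: function-value consistency and the upper-bound property together imply that each surrogate step produces an iterate whose \emph{true} objective value in (\ref{eq:MBB_subproblem}) is no worse than at the previous iterate. Hence the sequence of objective values of (\ref{eq:MBB_subproblem}) generated across iterations of Algorithm~\ref{alg:alg1} is monotone (non-increasing, since (\ref{eq:MBB_subproblem}) is a minimization after sign change), and it is bounded below because transmit powers lie in a compact set and the data rates $u_{i,s}(t)$ are bounded by $u_{i,s}^{\max}(t)$. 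A bounded monotone sequence converges, which gives convergence of the objective; I would then note (as is standard for SCA with the gradient-consistency condition) that any limit point is a stationary point of (\ref{eq:MBB_subproblem}) with respect to the continuous blocks.

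For the second part, I would invoke the drift-plus-penalty machinery already set up in Lemmas~\ref{lem:lemma_equivalent} and \ref{lemma:1}. The central idea is that at each time slot $t$, Algorithm~\ref{alg:final_algorithm} chooses $\bm\gamma(t)$, $\mathcal{A}(t)$, $\mathcal{X}(t)$, $\mathcal{P}(t)$ so as to (approximately) minimize the right-hand side of the drift-plus-penalty bound (\ref{eq:upper_bound}). I would first argue that minimizing this upper bound yields a per-slot value no larger than what an \emph{arbitrary feasible stationary randomized policy} would achieve; in particular, under the assumption that the constraints of (\ref{eq:Jensen_problem}) admit a feasible point with some slack, there exists a policy achieving $\mathbb{E}\{C_s^{\rm th} - u_{i,s}(t)\} \le -\delta$, $\mathbb{E}\{\gamma_{i,s}(t) - u_{i,s}(t)\} \le -\delta$, and $\mathbb{E}\{p_j^{\rm tot}(t) - \tilde p_j\} \le -\delta$ for some $\delta > 0$. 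Plugging such a policy into (\ref{eq:upper_bound}) and telescoping the resulting drift inequality over $t = 1, \ldots, T$, with all queues initialized at zero, gives
\begin{equation}\label{eq:drift_telescope}
\mathbb{E}\{L(T+1)\} - \mathbb{E}\{L(1)\} \le BT - V T G^{\ast} - \delta \sum\nolimits_{t=1}^{T}\left( \sum\nolimits_{i,s}\left([Q_{i,s}(t)]^+ + [Z_{i,s}(t)]^+\right) + \sum\nolimits_{j\in\mathcal{J}}[H_j(t)]^+ \right),
\end{equation}
where $G^{\ast}$ denotes the optimal objective value of (\ref{eq:Jensen_problem}) and $B$ is the constant from Lemma~\ref{lemma:1}. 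Since $L(T+1) \ge 0$ and $L(1) = 0$, rearranging shows that the time-averaged sum of the queue backlogs is $O(1)$ in $T$, i.e.\ $\frac{1}{T}\sum_{t=1}^{T}\mathbb{E}\{[Q_{i,s}(t)]^+ + [Z_{i,s}(t)]^+ + [H_j(t)]^+\}$ is uniformly bounded. A standard lemma (a bounded running average of nonnegative increments whose expectations are summable after division by $T$) then yields $\lim_{t\to\infty}\mathbb{E}\{[Q_{i,s}(t)]^+\}/t = 0$ and likewise for $Z_{i,s}$ and $H_j$, which is exactly the mean-rate stability condition (\ref{eq:Queue_EQ}); by Lemma~\ref{lem:lemma_equivalent} all time-average constraints of (\ref{eq:Jensen_problem}) — hence of the original problem (\ref{eq:original_problem}) — are satisfied.

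The main obstacle, and the point requiring the most care, is that Algorithm~\ref{alg:final_algorithm} does \emph{not} minimize the true right-hand side of (\ref{eq:upper_bound}) exactly: the URLLC block uses a binary search that only approximately attains $p(W^u(t)) = p_{\rm B}^{\max}$, and more importantly the MBB block is solved by Algorithm~\ref{alg:alg1}, which (per the Remarks after Propositions~\ref{lemma:lemma_uav_location} and \ref{lemma:lemma_UAV_power}) optimizes over SCA-restricted feasible regions and therefore returns a point that is stationary but possibly suboptimal for (\ref{eq:MBB_subproblem}). I would handle this by absorbing the per-slot optimality gap into an additive constant: if the gap is bounded by some $\epsilon \ge 0$ at every slot, then $BT$ in (\ref{eq:drift_telescope}) is replaced by $(B+\epsilon)T$, which does not affect the $O(1)$-backlog conclusion and hence leaves mean-rate stability intact — the stability guarantee is robust to bounded per-slot suboptimality, only the optimality-gap bound ($G^{\ast} - $ achieved utility $\le (B+\epsilon)/V$) degrades. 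I would state this robustness explicitly, since it is what lets the ``provably performance-guaranteed'' claim survive the use of SCA. A secondary technical point is verifying that the per-slot decision variables indeed lie in a compact set so that $B$ (and $\epsilon$) are finite; this follows from the instantaneous power cap $p_j^{\rm tot}(t) \le \hat p_j$, the bandwidth budget $W^{\rm tot}$, and the bounded movement constraint (\ref{eq:original_problem}e), all of which I would cite rather than re-derive.
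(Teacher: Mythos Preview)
Your argument for the convergence of Algorithm~\ref{alg:alg1} is essentially the paper's: monotone improvement of the true objective across the three block updates (exact for $\mathcal{A}$, SCA-majorized for $\mathcal{X}$ and $\mathcal{P}$ via the function-value-consistency and upper-bound properties of Assumption~\ref{assumption_1}), plus boundedness of the objective, gives convergence. The paper stops at convergence of the objective value; your additional remark about stationarity of limit points is standard SCA folklore and not claimed in the lemma, so it is fine but optional.

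For the mean-rate stability of Algorithm~\ref{alg:final_algorithm}, your route and the paper's diverge noticeably. The paper's proof is a two-sentence sketch: it simply invokes Lemma~\ref{lemma:1} to say the drift-plus-penalty is upper-bounded each slot, asserts that ``the time average of $L(t)$ tends to zero,'' and declares mean-rate stability. You instead run the full Neely-style argument: assume a Slater-type slack $\delta>0$, compare against a stationary randomized policy, telescope (\ref{eq:upper_bound}) over $t=1,\ldots,T$, extract a uniform bound on time-averaged queue backlogs, and deduce (\ref{eq:Queue_EQ}). Your argument is the more rigorous one and actually \emph{derives} what the paper asserts; it also makes explicit an ingredient the paper leaves implicit, namely strict feasibility of the time-average constraints. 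Your treatment of the per-slot suboptimality gap from SCA (absorbing it as an additive $\epsilon$ into $B$) is likewise absent from the paper but is exactly the right patch to make the Lyapunov guarantee survive the approximate inner solver. In short: same skeleton for Part~1, and for Part~2 you supply the standard machinery that the paper elides; nothing in your proposal is wrong, and it is strictly more complete than the paper's own Appendix~F.
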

\begin{proof}
Please refer to Appendix F.% in the technical report \cite{yang2020Proactive}.
\end{proof}

%The computational complexity of the algorithm has two main contributors, i.e., the complexity of Algorithm \ref{alg_user_loc_prediction} and complexities of solving both (\ref{eq:gamma_related_problem}) and (\ref{eq:subproblem_BPX}).
%{Explain why the DNN computational complexity is not analysis or take it as O(D).}
%do not analyze the computational complexity of Algorithm \ref{alg_BtU_channel_gain}
%Most of the time, we focus on the estimation accuracy of DNNs rather than computational complexities of training these DNNs. Actually, it is highly challenging to analyze the computational complexity of training DNNs, if not impossible.
We focus on the analysis of the computational complexity of Algorithm \ref{alg:final_algorithm} from the complexity perspective of two main contributors: Algorithm \ref{alg_user_loc_prediction} and the procedures of solving both (\ref{eq:gamma_related_problem}) and (\ref{eq:subproblem_BPX}).
For Algorithm \ref{alg_user_loc_prediction}, (\ref{eq:evolution_wj}), (\ref{eq:evolution_hatw}), (\ref{eq:lagrangian_aj}) must be iteratively computed until converge or reach the maximum iteration times $r_{\rm max}$. Besides, (\ref{eq:output}) should be called for $K$ times to obtain users' predicted locations. Therefore, the worst-case complexity of calling Algorithm \ref{alg_user_loc_prediction} is $O(f_1)=O(N^e(r_{\rm max}((N_i+N_r)^3N_o+(N_i+N_r)^2Q+(N_i+N_r)N_oQ)+(N_i+N_r)N_oK))$.
For the second contributor, it can be further divided into five parts, i.e., the complexities of solving (\ref{eq:gamma_related_problem}), (\ref{eq:URLLC_subproblem}), (\ref{eq:UE_association_problem}), (\ref{eq:UAV_location_equal_problem_approximate}),
(\ref{eq:UAV_power_problem_approx}). As a closed-form solution is derived to solve (\ref{eq:gamma_related_problem}), its complexity is $O(7N^e)$ in the worst case. Binary search methods with the complexity of $O(\log_2(\frac{W^{\rm tot}}{\epsilon})+\log_2(\frac{W_{ub}^u(t)}{\epsilon}))$ are leveraged to solve (\ref{eq:URLLC_subproblem}) to obtain the optimal $W^u(t)$, where $\epsilon$ represents the error tolerance. The complexity of mitigating the linear integer programming problem (\ref{eq:UE_association_problem}) is $O((N^e+1)^{J})$. The complexities of solving convex problems (\ref{eq:UAV_location_equal_problem_approximate}),
(\ref{eq:UAV_power_problem_approx}) are $O((J + 4N^e)^{3.5})$, $O((J + N^e)^{3.5})$, respectively. Besides, (\ref{eq:UE_association_problem}), (\ref{eq:UAV_location_equal_problem_approximate}), (\ref{eq:UAV_power_problem_approx}) must be alternatively solved until converge or reach the maximum iteration times. Therefore, the complexity of solving (\ref{eq:subproblem_BPX}) is $O(f_2) = O(\log_2(\frac{W^{\rm tot}}{\epsilon}) + \log_2(\frac{W_{ub}^u(t)}{\epsilon}) + \bar r_{\rm max}( (N^e+1)^{J} + (J + 4N^e)^{3.5} + (J + N^e)^{3.5}))$ in the worst case. In summary, the total complexity of Algorithm \ref{alg:final_algorithm} is $O(T(f_1+7N^e+f_2))$ in the worst case. The computational complexity of the total complexity is exponential to the number of UAVs that is small. Moreover, the actual complexity is usually much smaller than the worst case.
%Further, as the predicted users' locations are utilized in Algorithm \ref{alg:final_algorithm}, it is tolerable if the training of ESN models, DNNs and the resource allocation can be finished during the period of $K$ time slots.

%We can imagine that there are two types of time slots, the actual time slots $\{t\}$ and virtual time slots $\{\hat \}$. At {The implementation of the algorithm: why $(t+K)$ location is used, there are two types of time slots, the real time slot and the virtual time slot. From $t$ to $t+K$, the slice configuration and optimization can be conducted. In this way, there is enough time to train and optimize and perform the data sharing process. especially from the data sharing for centrally solving the problem. Whether the huge amount of data should be shared between the UAVs and the BS when solving the optimization problem. Besides, what information should be transmitted between the BS and UAVs, and the communication overheads of the total algorithm.
%Additionally, in the UAV network slicing system, the mismatch issue between the slice supply and slice demand is non-trivial and is worth considering. The creation and configuration of UAV slices (slice supply), which require the , are time consuming; however, users (especially URLLC users) cannot tolerate the delay of creating and configuring the slices.

\section{Simulation Results}
%In this section, extensive simulations are conducted to validate the effectiveness of the proposed algorithm.
%Subsection \uppercase\expandafter{\romannumeral5}-A presents comparison algorithms and the parameter setting and subsection \uppercase\expandafter{\romannumeral5}-B is responsible for analyzing results.

\subsection{Comparison algorithms and parameter setting}
To verify the effectiveness of the proposed algorithm, we compare it with two benchmark algorithms: 1) \textbf{Static UAV-based (SUAV) algorithm:} The difference between SUAV and RE$^2$FS lies in the scheme of controlling the movement of UAVs. For SUAV, it randomly deploys $J$ hovering UAVs with the similar deployment altitude (50 m) over the area of interest;
2) \textbf{CirCular Trajectory-based (CCT) algorithm:} In this algorithm, each UAV flies in a circular trajectory with a speed of 10 m/s. At the beginning of the simulation, UAVs (with an altitude of 50 m) are deployed in a line. The distance between two adjacent UAVs is $1/(2N)$ km. The horizontal locations of the first and the last UAVs are $[1/2+1/ (4N); 1/2]$ km and $[1-1/ (4N); 1/2]$ km, respectively, and turning radiuses of them are $1/ {(4N)} $ km and $1/2-1/ (4N) $ km. Besides, it adopts the similar slice request acceptance and UAV transmit power optimization schemes as RE$^2$FS.
%The horizontal locations of the first and the last UAVs are $(1/2+1/\left\lfloor 4N \right\rfloor,1/2)$ km and $(1-1/\left\lfloor 4N \right\rfloor,1/2)$ km, respectively, and turning radiuses of them are $1/\left\lfloor {4N} \right\rfloor$ km and $1/2-1/\left\lfloor 4N \right\rfloor $ km. Each UAV transmits signals with the maximum power, and USER $i \in {\mathcal I}$ will access to an available UAV network with an equal probability.

We consider an urban area of size $1 \times 1$ km$^2$ with highrise buildings in the simulation. This scenario corresponds to the most challenging environment for slicing the UAV network, since the LoS/NLoS links may alter frequently as UAVs fly.
To accurately simulate the UtG and BtU channel gains in the environment, we generate the building locations and heights based on a realization of a local building model suggested by International Telecommunication Union (ITU) \cite{Propagation2012ITU} with statistical parameters $\alpha = 0.3$, $\beta = 300$ buildings/km$^2$, $\gamma$ being modelled as a Rayleigh distribution with the mean value $\sigma = 30$ m. The heights of all buildings are clipped to not exceed 40 m for convenience.
%{Fig. X} shows a 2D view of the particular realization of the local building model.
The BS antenna model follows the 3GPP specification \cite{Study20173GPP}, where an eight-element uniform linear array is placed vertically. Each array element is directional with half-power beamwidths along both vertical and horizontal dimensions equaling to $65^{\circ}$.
%Furthermore, pre-determined phase shifts are applied to the vertical antenna array so that its main lobe is electrically downtilted towards the ground by 10.
To simulate the signal strength measured by UAVs, the presence/absence of LoS link between a UAV and a ground user is firstly checked based on the building realization.
Meanwhile, we determine whether there exists an LoS link between the BS and a UAV to simulate the signal strength measured by the BS. Then, we generate the BtU and UtG path losses using the 3GPP model for urban Macro \cite{Technical20173GPP}.
%with LoS path loss being $28 + 12\log_{10} (d_{3D}) + 20\log_{10} (f_c)$ (dB) and NLoS path loss being $-17.5 + (26 -7 \log_{10} (g_j(t)) ) \log_{10}(d_{3D}) + 20\log_{10} (40 \pi f_c/3)$ (dB), where $d_{3D}$ denotes the three-dimensional distance and $f_c$ is the carrier frequency.
The small-scale fading coefficient is added assuming Rayleigh fading for the NLoS case and Rician fading with 15 dB Rician factor for the LoS case \cite{zeng2020simultaneous}.

To test the practicality of the distributed ESN learning method, the realistic pedestrian movement dataset is extracted from a Github website\footnote{https://github.com/pswf/Twitter-Dataset/blob/master/Dataset. Our algorithm accommodates other realistic pedestrian movement datasets.} and utilized in the simulation.
The dataset depends on 12000 pieces of twitter information collected near Oxford street, in London on the 14th, March 2018. In this dataset, GPS-related position information of $N^e$ mobile users who tweeted more than two times were recorded.
Besides, to obtain more user information to describe users' movement more specifically, a linear interpolation method was used to make sure that the position information of each user was recorded every 200 seconds. After that, the 2D trajectory of each user was linearly zoomed into the simulated area of size $1 \times 1$ km$^2$.
In this case, the trajectory of each user was obtained. A turntable game in \cite{yang20193} was also used to set the required data rates of ground users with $C_s^{\rm th} \in \{1, 2, 4\}$ Mbps.

Additionally, the parameters related to URLLC slices are listed as follows: we consider one URLLC slice and set $\tau_{s,u}^{req} = 5$ milliseconds, $\epsilon_{s,u}^{req} = 1e$-7, $b_{s,u}^{req} = 160$ bits, $p_{\rm B}^{\rm max} = 50000$ mW, the user height $g_{i,s} = 1.8$ m, $\forall i, s$.
MBB slice related parameters are shown as below: we consider three types of MBB slices and set the UAV altitude $g_j(t) = 50$ m, the circuit power $p_j^c = 20$ mW, $\hat p_j = 1650$ mW, $\tilde p_j = 1500$ mW, $\forall j$. Besides, let $e_{\max} = 50$ m, $d_{\min} = 5$ m, $u_{i,s}^{\rm max}(t)$ is approximated as $W^{\rm tot}\log_2(1 + (\hat{p}_j - p_j^c)\theta_{ij,s}(t)/(N_0W^{\rm tot} |g_j(t)-g_{i,s}|^2))$.
Set other learning-correlated parameters as below: $r_{\rm max} = 100$, $\bar r_{\rm max} = 1000$, the sample number $Q = 6$, the number of future time slots $K = 10$, $N_i = 2$, $N_o = 2$, $N_r = 300$, $\lambda = 0.001$, the step size $\eta = 0.01$, $\xi = 0.001$. For each DNN, its first hidden layer has 512 neurons, and its second hidden layer has 256 neurons. The learning rate of each DNN is $0.001$, the minibatch size $|{\mathcal T}_t| = 64$, $\forall t$, the buffer capacity $C = 1e$+6.
More system parameters are listed as follows: $T_p = 1$, the carrier frequency {{$f_c = 2.0$} GHz}, light of speed {$c = 3.0e$+8 m/s}, $G_j = G_r = 1$ dBi, $\forall j$, total bandwidth $W^{\rm tot} = 10$ MHz, noise power spectral density $N_0 = -235$ dBm/Hz, {$T = 500$}, the coefficients $\rho = 0.01$ and $V = 2$, and the 3D location of the BS is $\bm x_{\rm B}^{\rm 3D} = [25; 37.5; 25]$ m.

\subsection{Performance evaluation}
To comprehensively understand the accuracy and the availability of the developed learning methods and optimization framework, we illustrate the performance results of the distributed learning method, online channel gain coefficient learning methods, Lyapunov-based optimization framework, respectively. In this simulation, we first let the UAV number $J = 3$ and the mobile user number $N^e = 64$.
%\begin{figure}[!t]
%\centering
%\subfigure[Comparison of x-coordinate]{\includegraphics[width=2.0in]{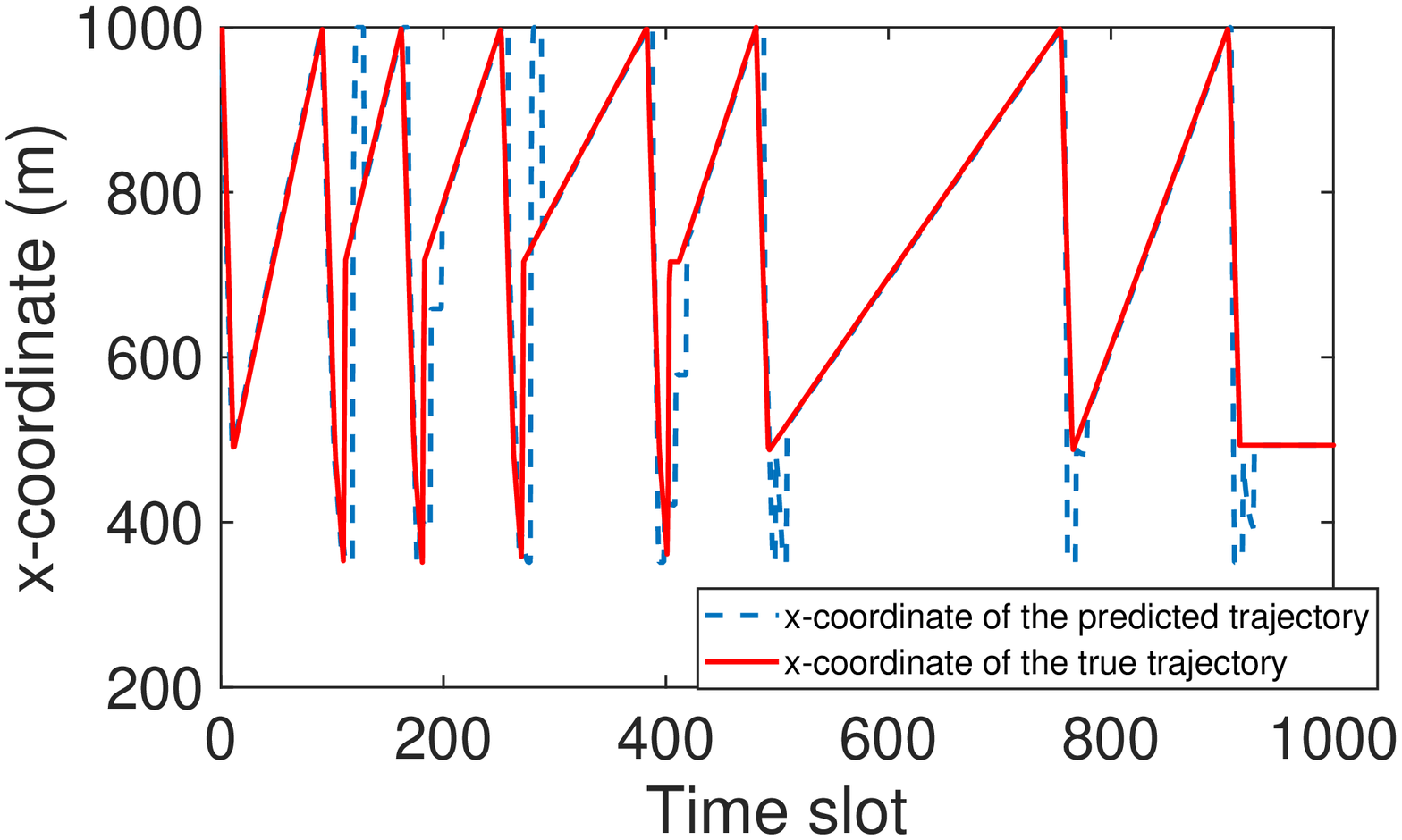}%
%\label{fig_x_coordinate}}
%%\hfil
%\subfigure[Comparison of y-coordinate]{\includegraphics[width=2.0in]{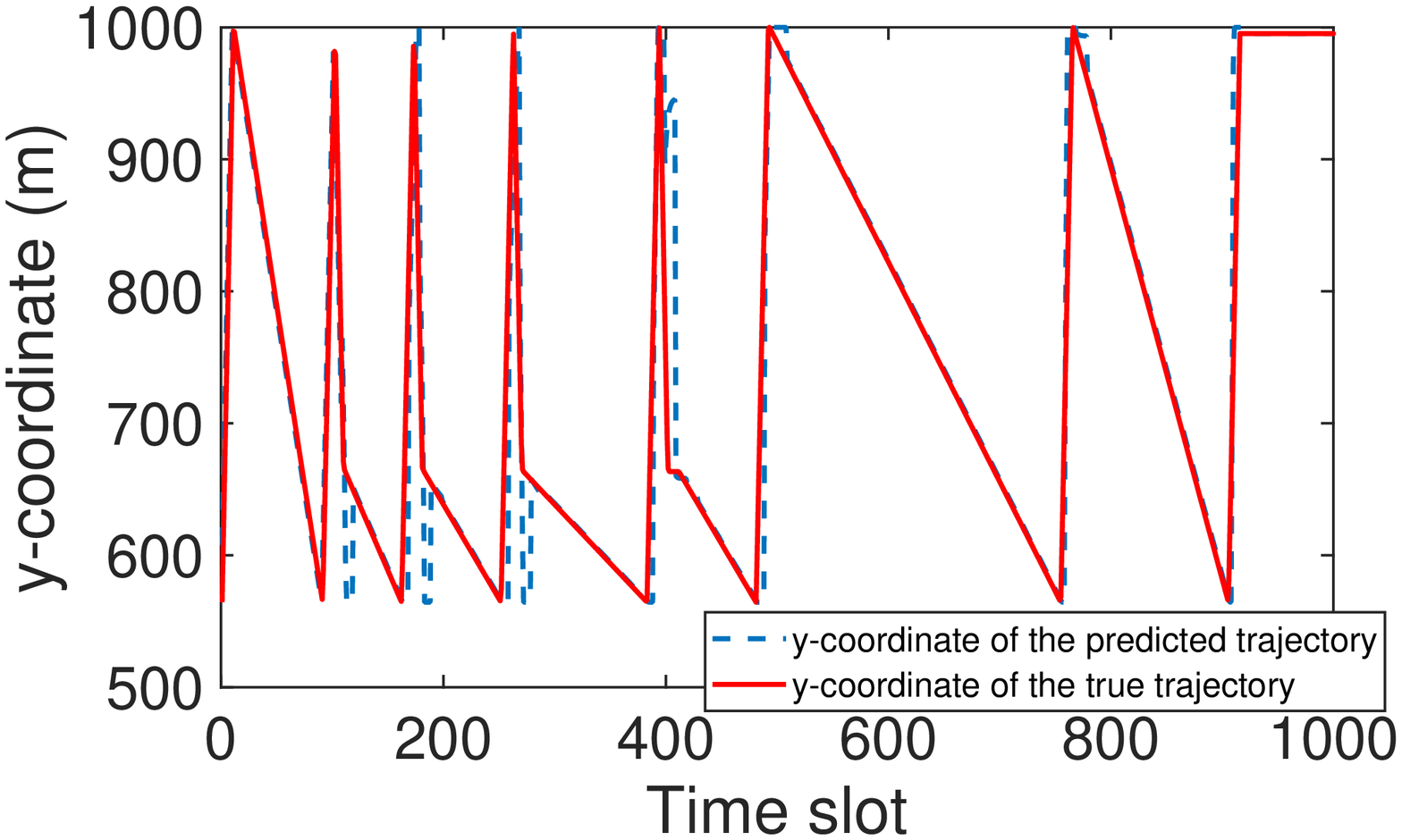}%
%\label{fig_y_coordinate}}
%\caption{Comparison of true and predicted trajectories of a user.}
%\label{fig_x_y_coordinate}
%\end{figure}

\begin{figure}[!t]


\centering
%\begin{minipage}[t]{0.65\textwidth}
%\centering
\subfigure[Comparison of x-coordinate]{\includegraphics[width=2.3 in]{x_coordinate.eps}%
\label{fig_x_coordinate}}
\hfil
\subfigure[Comparison of y-coordinate]{\includegraphics[width=2.3 in]{y_coordinate.eps}%
\label{fig_y_coordinate}}
\caption{Comparison of true and predicted trajectories of a user.}
\label{fig_x_y_coordinate}
%\end{minipage}
%\hspace{0.0005\linewidth}
\end{figure}
%\begin{minipage}[t]{0.3\textwidth}
\begin{figure}[!t]
\centering
\includegraphics[width=3.0 in]{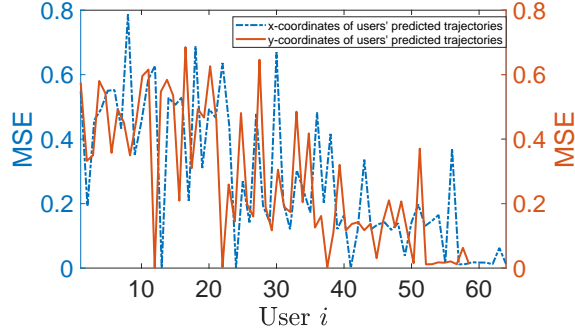}
\caption{Prediction accuracy of the distributed ESN learning method.}
\label{fig_esn_accuracy}
%\end{minipage}
\end{figure}

First, to validate the accuracy of the distributed learning method on predicting users' locations, we plot the actual trajectory of a randomly selected user and its correspondingly predicted trajectory in Fig. \ref{fig_x_y_coordinate}. The accuracy, which is measured by the mean square error (MSE), of predicted trajectories of 64 mobile users is plotted in Fig. \ref{fig_esn_accuracy}.
From these figures, we can observe that:
1) when the heading directions of users will not change fast, this method can exactly predict their locations. When users change their moving directions quickly, the method loses their future locations. However, the method will re-capture the future locations of users after training ESN models based on newly collected users' location samples;
2) the obtained MSE of the predicted trajectories and actual trajectories of 64 mobile users will not be greater than $0.8$. Therefore, we may conclude that the developed distributed learning method can be utilized to predict users' locations.
\begin{figure}[!t]
\centering
\begin{minipage}[t]{0.45\textwidth}
\centering
\includegraphics[width=3.0 in]{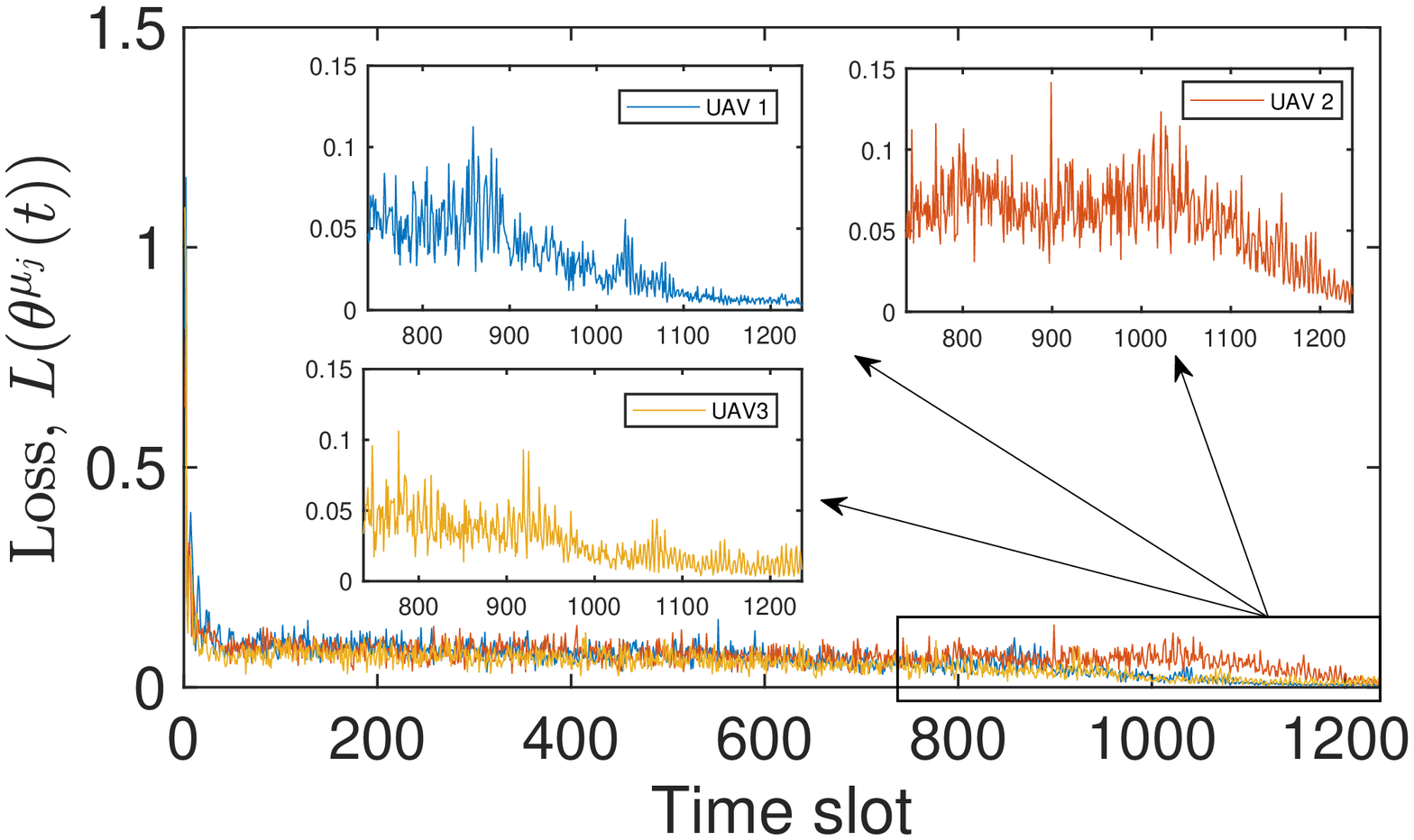}
\caption{Loss values of DNNs for UtG channel gain coefficient estimation versus time slot.}
\label{fig_downlink_DNN}
\end{minipage}
\hspace{0.05\linewidth}
\begin{minipage}[t]{0.45\textwidth}
\centering
\includegraphics[width=3.0 in]{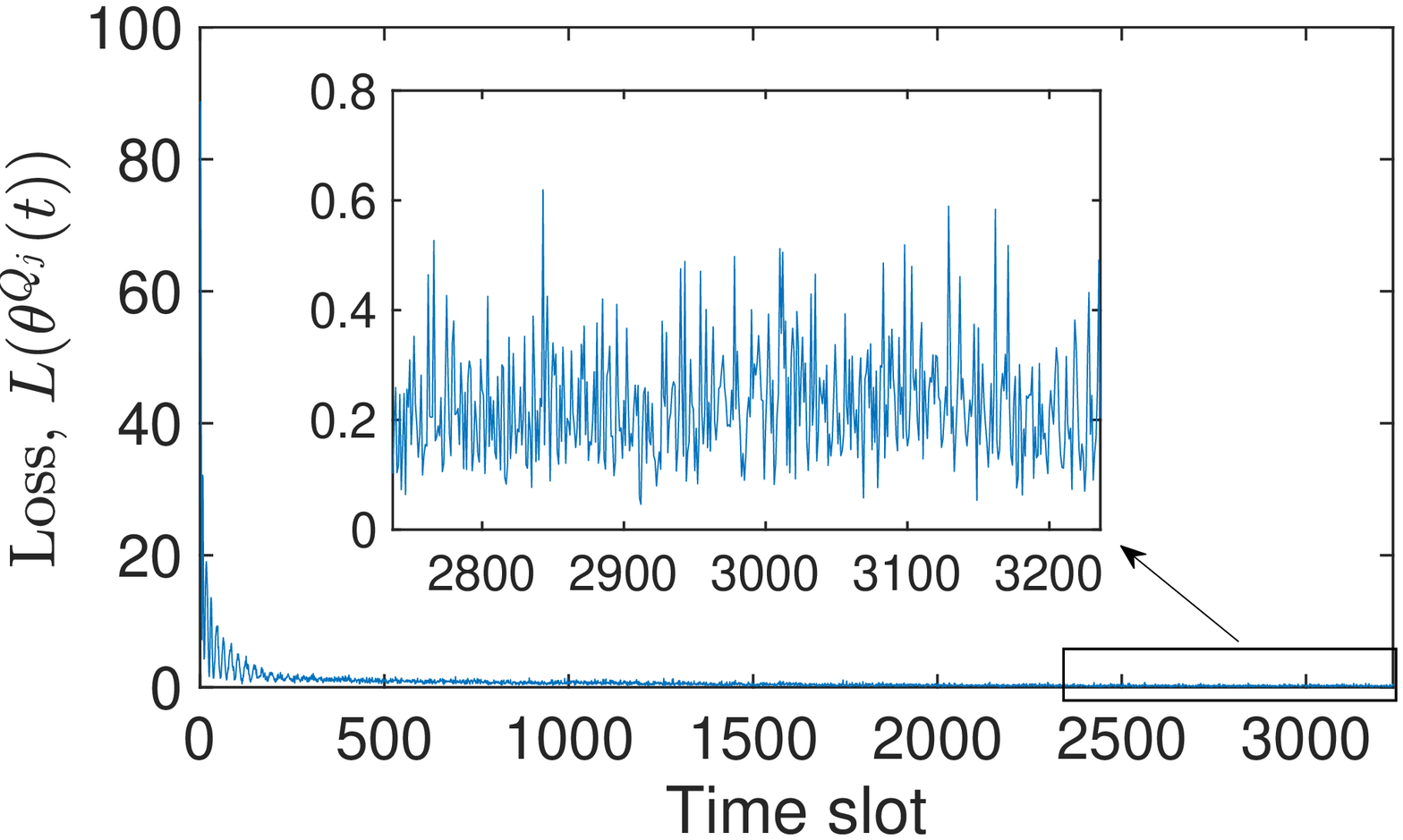}
\caption{Loss values of the DNN for BtU channel gain coefficient estimation versus time slot.}
\label{fig_uplink_DNN}
\end{minipage}
\end{figure}

Next, to testify the accuracy of online channel gain coefficient estimation methods, we plot the corresponding tendency of loss, which is calculated by (\ref{eq:mse}), between the estimated coefficient values and its target coefficient values in Figs. \ref{fig_downlink_DNN} and \ref{fig_uplink_DNN}.
Fig. \ref{fig_downlink_DNN} shows the loss values of the DNNs for UtG channel gain coefficient estimation. In this figure, loss values in the first 736 time slots, where initial values of 264 time slots are forgotten to alleviate the impact of noise, reflect the pre-training accuracy, and loss value in the later 500 time slots reflect the online learning accuracy.
Fig. \ref{fig_uplink_DNN} illustrates the loss results of the DNN for BtU channel gain coefficient estimation between the BS and the first UAV. Similarly, the values in the first 2736 time slots and the later 500 time slots reflect the pre-training and online learning accuracy.
From these figures, we can observe that:
1) the estimation error is great initially, but it quickly decreases with the increase of time slots as more experience is accumulated. For example, the obtained loss value is smaller than 0.2 after twenty time slots.
Besides, after a number of time slots, DNNs for BtU and UtG channel gain coefficient estimation can converge;
2) although actual BtU and UtG channel gain coefficients vary fast due to the movement of users and UAVs, the estimation method can achieve good estimation results. For example, during the online learning period, the loss values of DNNs for U2G channel gain coefficient estimation reach an order of less than $1.5e$-1, and the loss value of the DNN for BtU channel gain coefficient estimation is smaller than $0.65$.

Third, to verify the availability of the Lyapunov-based optimization framework, we plot the tendency of the virtual queue stability values, defined as ${S_Q} = \mathop {\max }\nolimits_{i,s} \  {[{Q_{i,s}}(t)]^ + }/t$, ${S_Z} = \mathop {\max }\nolimits_{i,s} \  {[{Z_{i,s}}(t)]^ + }/t$, and ${S_H} = \mathop {\max }\nolimits_{j \in {\mathcal J}} \  {[{H_j}(t)]^ + }/t$ in Fig. \ref{fig_queue_stability}. Besides, the trajectories of three UAVs in the first 50 time slots and their final 2D locations are plotted in Fig. \ref{fig_uav_track}.
\begin{figure}[!t]
\centering
\begin{minipage}[t]{0.45\textwidth}
\centering
\includegraphics[width=2.5 in]{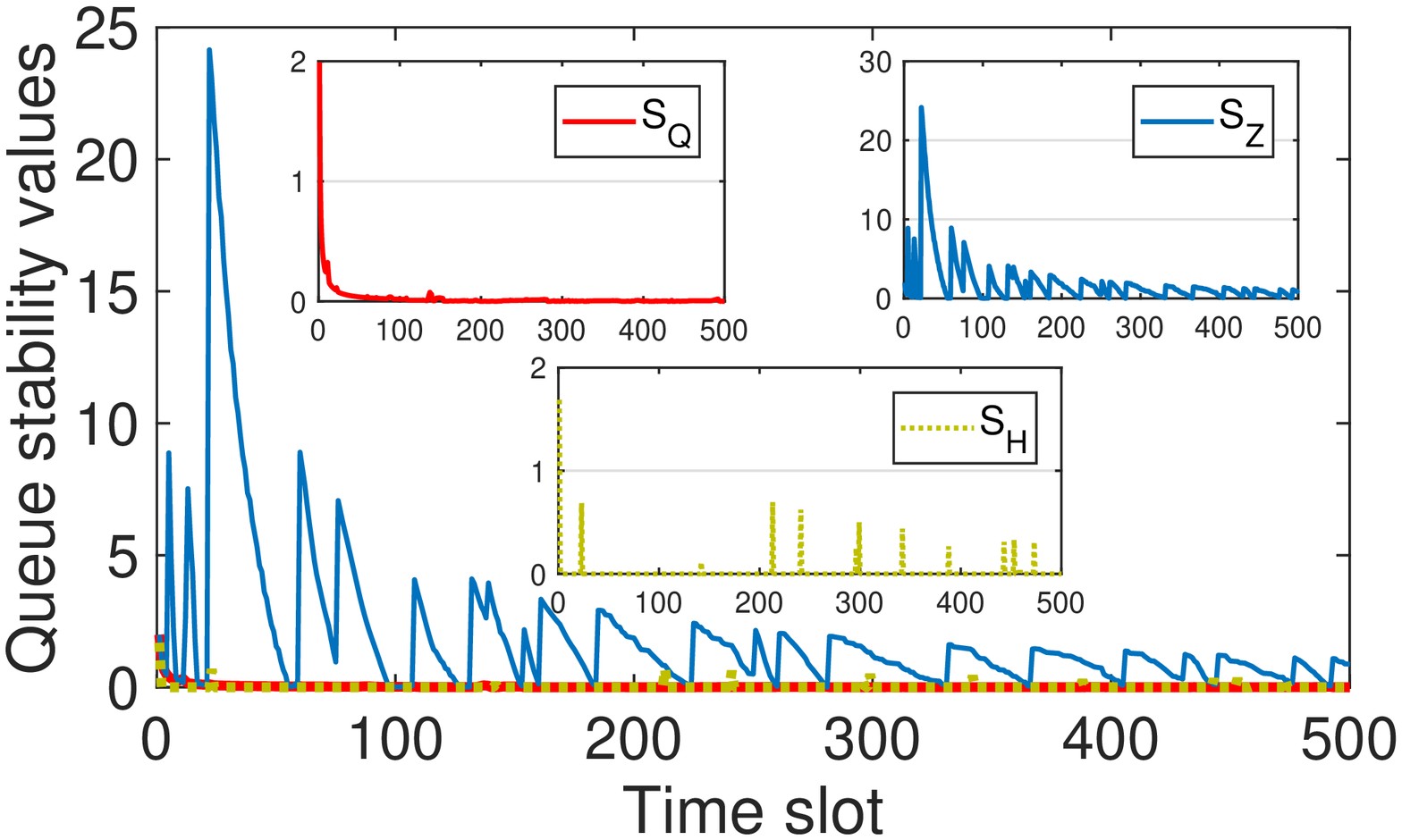}
\caption{Trend of virtual queue stability values versus time slot.}
\label{fig_queue_stability}
\end{minipage}
\hspace{0.05\linewidth}
\begin{minipage}[t]{0.45\textwidth}
\centering
\includegraphics[width=2.5 in]{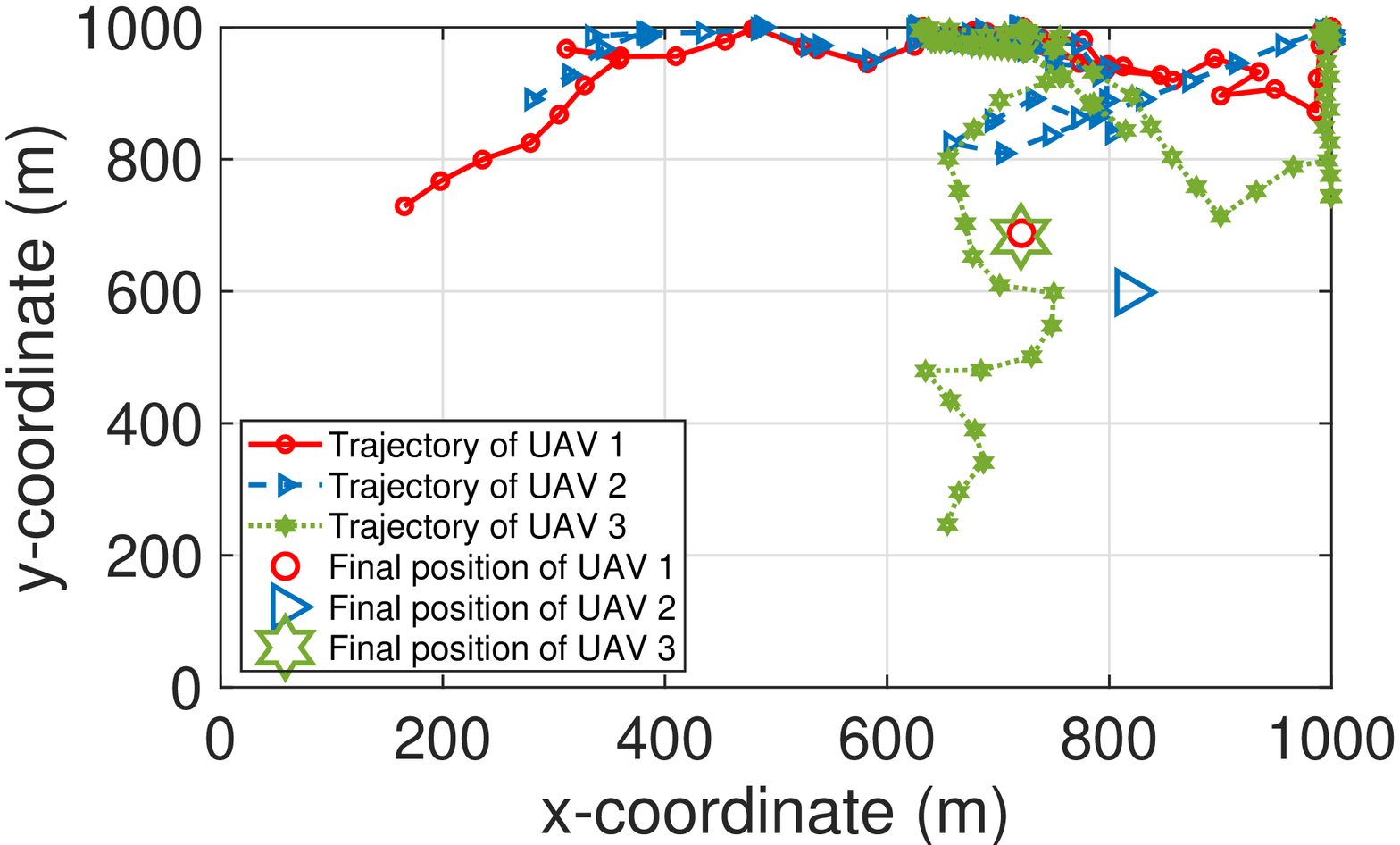}
\caption{Trajectories of three UAVs projected in a 2D space.}
\label{fig_uav_track}
\end{minipage}
\end{figure}

The following observations can be achieved from these figures: 1) during the learning period, the obtained queue stability values are bounded;
2) the obtained queue stability values tend to zero with an increasing time slot; as a result, all introduced virtual queues are mean-rate stable according to (\ref{eq:Queue_EQ}), i.e., all time average constraints in (\ref{eq:Jensen_problem}) can be satisfied. This result verifies the effectiveness of the Lyapunov-based optimization framework;
3) since the value of $S_Q$ tends to zero, the rate requirements of served users can be satisfied, which means that $W^e(t)$ is non-zero. If URLLC requirements of UAV control and non-payload information delivery are not satisfied, then $W^{\rm tot}$ will be allocated to URLLC slices. In this case, all MBB slices will be released and $S_Q$ will be monotonously increase with $t$, which is not shown in Fig. \ref{fig_queue_stability}. Therefore, URLLC requirements of UAV control and non-payload information transmission can be accommodated;
4) UAV movement constraints can be met at each time slot;
5) as users frequently appear in the upper right corner of the considered area, UAVs tend to move to this corner. In this way, QoS requirements of ground users can be met while the UAV transmit power can be reduced.

Next, we proceed to the verification of the effectiveness of the proposed RE$^2$FS algorithm by comparing it with other two benchmark algorithms.
To measure the effectiveness, the following two key performance indicators are introduced: the energy efficiency that is computed by (\ref{eq:original_problem}a) and the Jain's fairness index, defined as ${{{\left( \sum\nolimits_{i,s}{{{{\bar{u}}}_{i,s}}} \right)}^{2}}}/{N^e\sum\nolimits_{i,s}{\bar{u}_{i,s}^{2}}}$ with ${{\bar u}_{i,s}} = \frac{1}{T}\sum\nolimits_{t = 1}^T {{u_{i,s}}(t)} $.
\begin{figure}[!t]
\centering
\begin{minipage}[t]{0.45\textwidth}
\centering
\includegraphics[width=2.3 in]{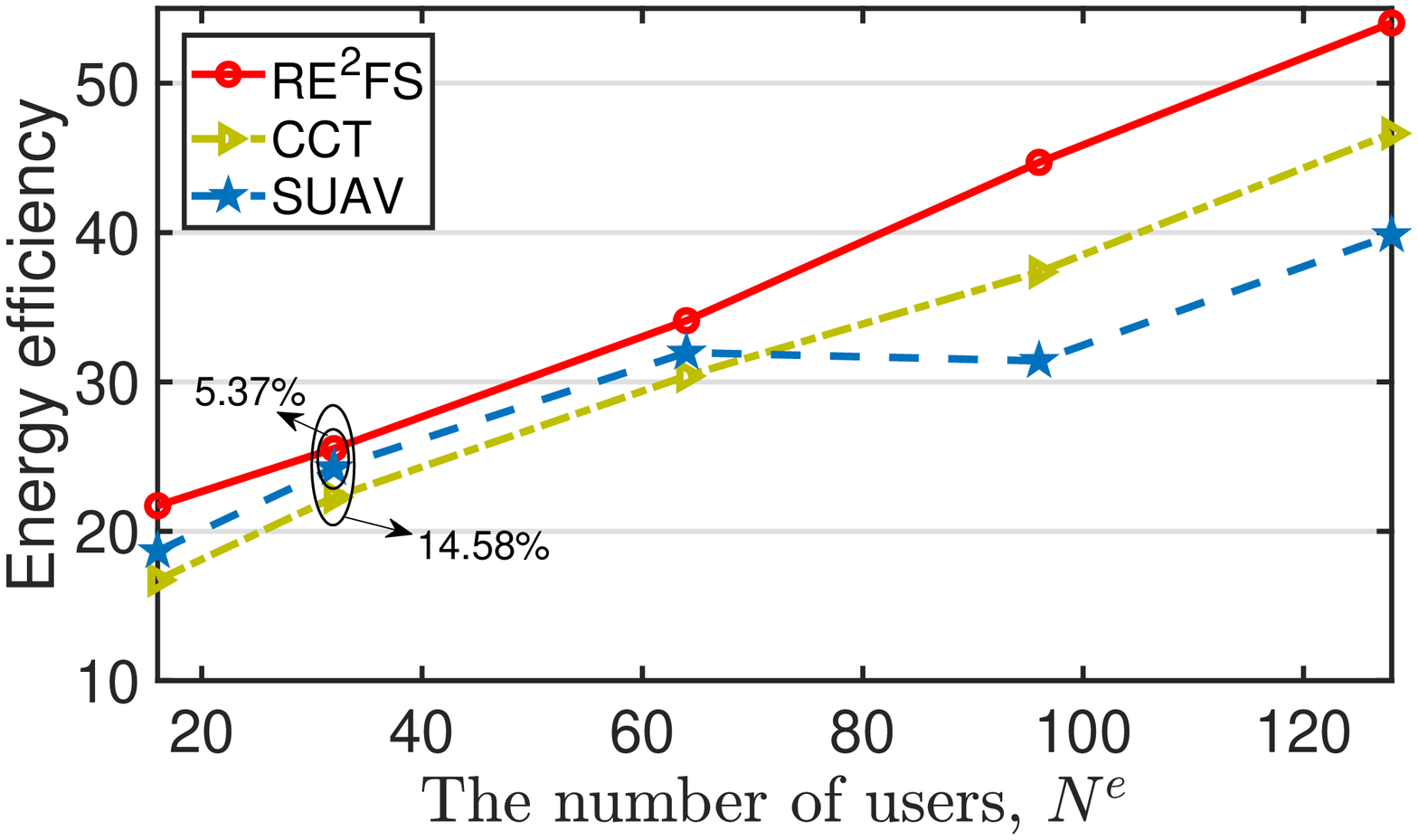}
\caption{Energy efficiency versus the number of users.}
\label{fig_EE_user_num}
\end{minipage}
\hspace{0.05\linewidth}
\begin{minipage}[t]{0.45\textwidth}
\centering
\includegraphics[width=2.3 in]{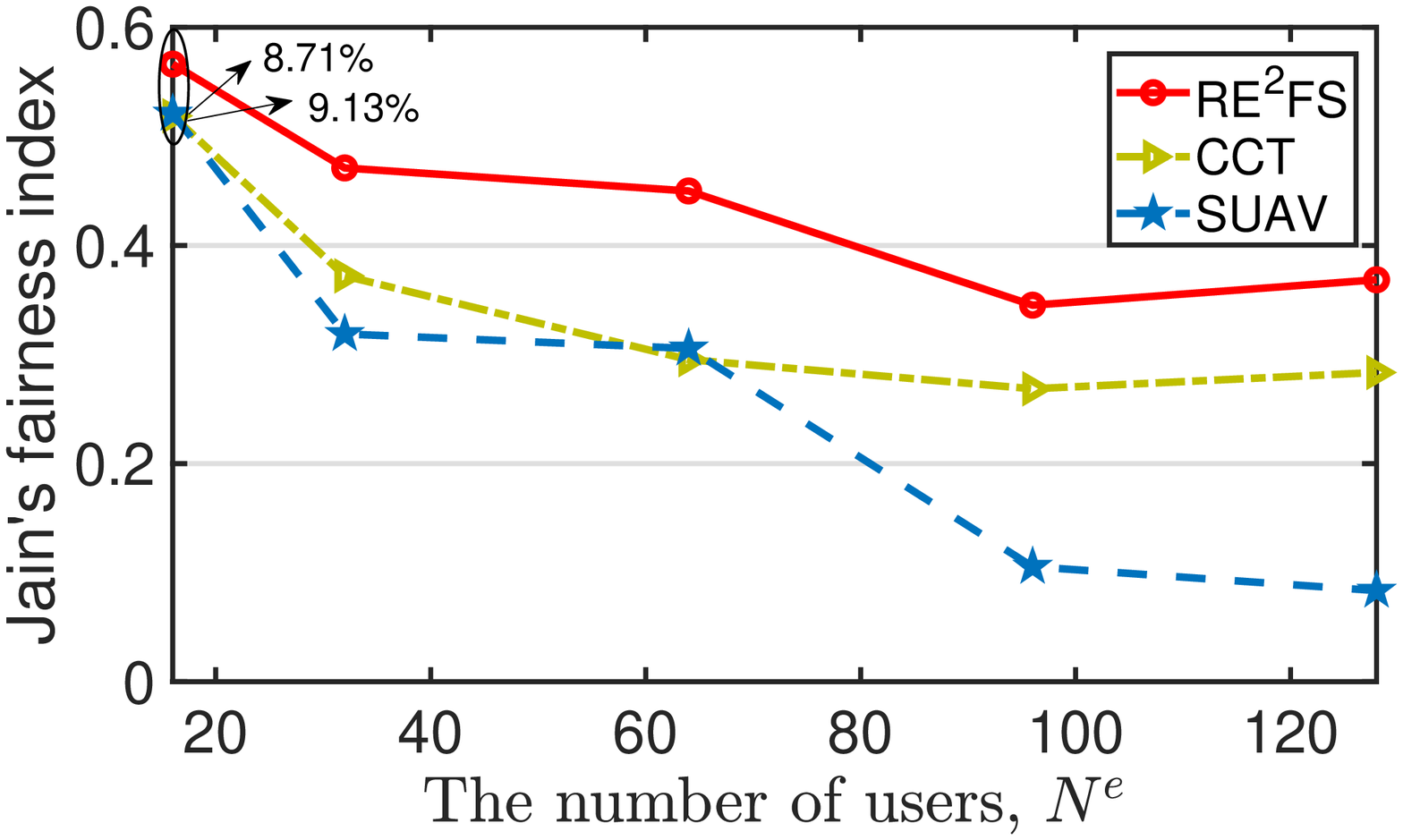}
\caption{Jain's fairness index versus the number of users.}
\label{fig_fairness_user_num}
\end{minipage}
\end{figure}

We first plot the impact of the number of users on the obtained energy efficiency and the Jain's fairness index of all comparison algorithms. Figs. \ref{fig_EE_user_num} and \ref{fig_fairness_user_num} show the tendency of the obtained energy efficiency and the Jain's fairness indexes of all algorithms, respectively, when $J = 3$, $N^e \in \{16, 32, 64, 96, 128\}$.

From these figures, we can observe that: 1) the proposed RE$^2$FS achieves the highest energy efficiency. For example, compared with the SUAV and CCT algorithms, RE$^2$FS improves the energy efficiency by $5.37\%$ and $14.58\%$, respectively, when $N^e = 32$;
2) when $N^e \le 64$, the achieved energy efficiency of SUAV is greater than that of CCT. However, CCT performs better in terms of the energy efficiency than SUAV when $N^e > 64$. The communication coverage of a UAV is limited. When the number of users is small (e.g., not more than 64), regular UAV trajectories may generate more coverage holes. On the contrary, mobile UAVs can serve more users when the number of users is great (e.g., greater than 64). This reason can also be utilized to explain the result that RE$^2$FS can significantly improve the energy efficiency when $N^e > 64$;
3) the achieved energy efficiency values of all comparison algorithms will increase with an increasing number of users as more users can be served;
4) the proposed RE$^2$FS achieves the highest Jain's fairness index. For instance, when $N^e = 16$, compared with the SUAV and CCT algorithms, RE$^2$FS improves the energy efficiency by $8.71\%$ and $9.13\%$, respectively;
5) when $N^e > 64$, owing to the movement of UAVs, CCT achieves greater fairness index than SUAV;
6) as $N^e$ is part of the denominator of the definition of the Jain's fairness index, the obtained fairness indexes of all comparison algorithms with $N^e = 16$ are greater than their achieved fairness indexes when $N^e = 128$.
%\begin{figure}[!t]
%\centering
%\includegraphics[width=2.3 in]{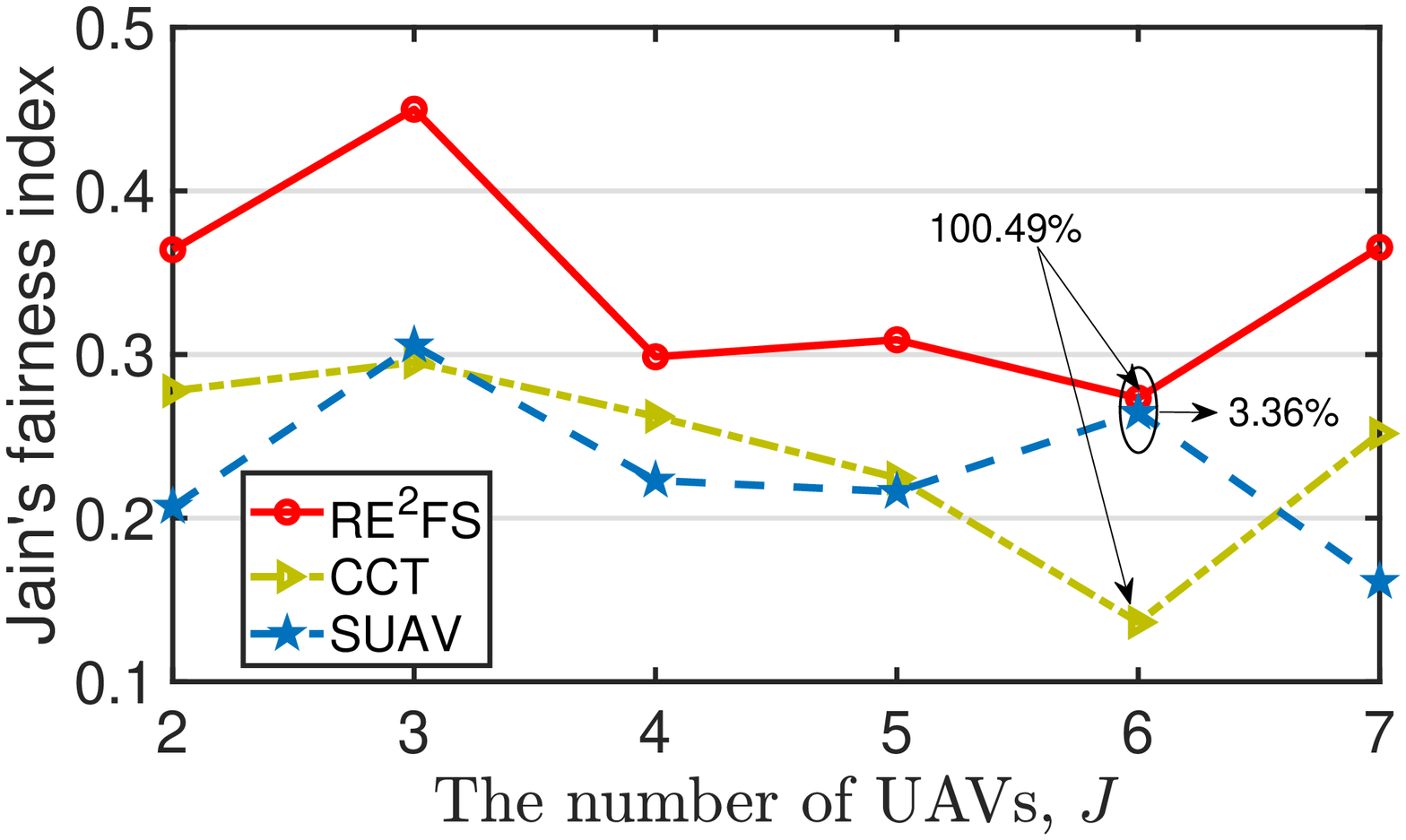}
%% where an .eps filename suffix will be assumed under latex,
%% and a .pdf suffix will be assumed for pdflatex; or what has been declared
%% via \DeclareGraphicsExtensions.
%\caption{Jain's fairness index versus the number of UAVs.}
%\label{fig_fairness_uav_num}
%\end{figure}

We then illustrate the impact of the number of UAVs on the energy efficiency and Jain's fairness indexes of all comparison algorithms. Figs. \ref{fig_EE_uav_num} and \ref{fig_fairness_uav_num} show the tendency of the obtained performance indicators when $N^e = 64$.
\begin{figure}[!t]
\centering
\begin{minipage}[t]{0.45\textwidth}
\centering
\includegraphics[width=2.3 in]{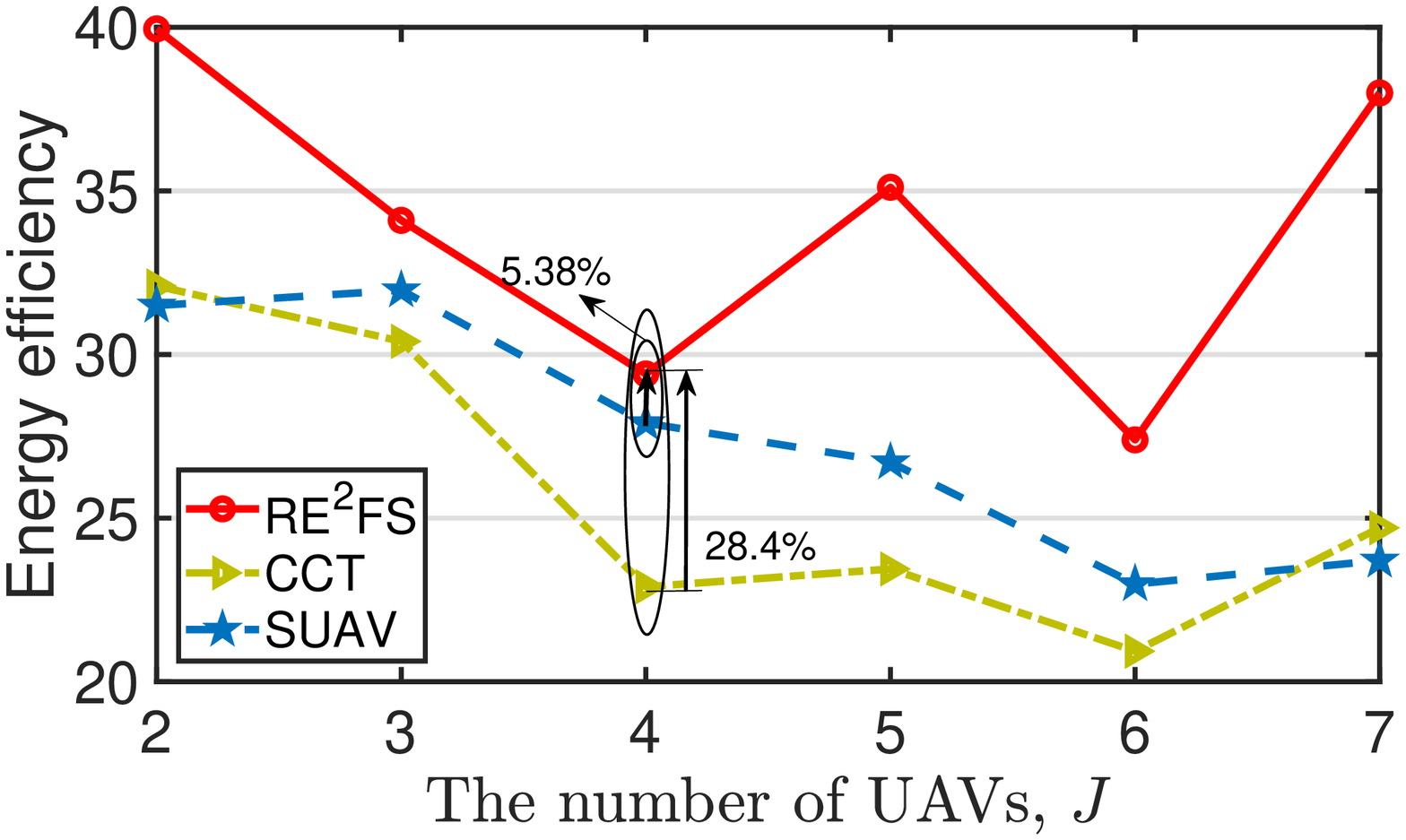}
\caption{Energy efficiency versus the number of UAVs.}
\label{fig_EE_uav_num}
\end{minipage}
\hspace{0.05\linewidth}
\begin{minipage}[t]{0.45\textwidth}
\centering
\includegraphics[width=2.3 in]{fairness_vs_uav_num.eps}
\caption{Jain's fairness index versus the number of UAVs.}
\label{fig_fairness_uav_num}
\end{minipage}
\end{figure}

From these figures, we can observe that: 1) owing to the effective UAV movement control, the proposed RE$^2$FS achieves the highest energy efficiency and Jain's fairness index. For instance, compared with the benchmark algorithms, the minimum improvement of RE$^2$FS on the energy efficiency and Jain's fairness index is $5.38\%$ and $3.36\%$, respectively;
2) we cannot conclude that the obtained performance indicators of all comparison algorithms will increase or decrease with an increasing number of UAVs. More mobile users can be simultaneously served when the number of UAVs is increased, which indicates that the energy efficiency value and the fairness index may be raised. However, more UAVs will result in greater interference and consume more energy, which will decrease the energy efficiency value; meanwhile, owing to the strong interference some mobile users may experience coverage interruptions, which will lower the fairness index;
3) when varying the number of UAVs, the performance of the SUAV algorithm, where UAVs are hovering, does not outperform the CCT algorithm, where UAVs follow circular trajectories. Summarily, the above results show that the URLLC requirements of UAV control and non-payload information delivery can be accommodated and the UAV network can provide energy-efficient and fair MBB services for ground mobile users by exploiting the proposed RE$^2$FS algorithm.
\section{Conclusion}
This paper investigated a proactive UAV network slicing problem and formulated the problem as a sequential decision problem with a goal of providing energy-efficient and fair service coverage for MBB users while satisfying the URLLC requirements of UAV control and non-payload signal delivery.
This problem was confirmed to be a mixed-integer-non-convex optimization problem, the solution of which also required accurate mobile users' locations and channel gain models.
To solve this problem, we proposed a new approach using learning and optimization methods.
Specifically, we first developed a distributed learning method to predict mobile users' locations, with which we built analytically tractable DNN-based channel gain models. Then, we proposed a Lyapunov-based optimization framework to decompose the original problem into several repeated optimization subproblems based on the learned results. Finally, these subproblems were optimized by exploiting an SCA technique and an alternative optimization scheme.
%To mitigate this problem, we proposed a Lyapunov-based framework that decomposed the problem into two time-dependent  subproblems.
%By exploring an SCA technique and an alternative optimization scheme, an algorithm with provable performance guarantees was then developed to mitigate the two subproblems repeatedly.
Simulation results were provided to show the accuracy of the learning methods and to verify the effectiveness of the Lyapunov-based optimization framework.

\appendix
\subsection{Proof of Lemma \ref{lem:ADMM_ESN}}
For all $j \in {\mathcal J}$, to prove that ${{\mathcal L}(\bm W_{j,t}^{(r)}, \hat {\bm W}_t^{(r)})} $ is bounded, we should prove that the matrices $\bm W_{j,t}^{(r)}$ and $\hat {\bm W}_t^{(r)}$ are bounded. Next, we should prove that there exist non-positive coefficients $\varepsilon _{j}$ and $\hat \varepsilon _{j}$ such that $|{{\mathcal L}^{(r+1)}(\cdot)}  -  {{\mathcal L}^{(r)}(\cdot)}| \le  {\varepsilon_{j}}|| \bm W_{j,t}^{(r+1)} - \bm W_{j,t}^{(r)} ||_2  +  {{\hat \varepsilon_{j}}|| {\hat {\bm W}_t^{(r+1)} - \hat {\bm W}_t^{(r)}} ||_2} $. Since the ADMM is exploited to solve (\ref{eq:least_square_prob}), the convergence of the distributed ESN method is determined by that of the ADMM. Then the convergence of the ADMM should be proved. However, we omit the detail proof as it is able to be found in the convergence proof of the ADMM in \cite{boyd2011distributed,wang2019global}. This completes the proof.

\subsection{Proof of Lemma \ref{lem:lemma_equivalent}}
By referring to the Jensen's inequality we can achieve $\bar{g}(t)\le \phi ({{\bar{\bm \gamma }}}(t))$. Next, suppose all limits in (\ref{eq:Jensen_problem}) exist, we have ${{\bar \gamma }_{i,s}}(t) \le {{\bar u}_{i,s}}(t)$. Besides, as $ \phi(\bar {\bm \gamma}(t))$ monotonously increases with ${{\bar \gamma }_{i,s}}(t)$ $\forall i,s$, we can obtain $\bar g(t) \le \phi ({{\bar u}_{1,1}}(t), \ldots ,{{\bar u}_{N_{|{\mathcal S}^e|}^e,|{\mathcal S}^e|}}(t))$.
% That is, the constraint (\ref{eq:Jensen_problem}b) is equivalent to ${{\bar u}_{i,s}}(t) = {{\bar \gamma }_{i,s}}(t)$.
It means that the maximum value of the objective function of (\ref{eq:Jensen_problem}) is not greater than that of (\ref{eq:original_problem}). Besides, the maximum value of the objective function of (\ref{eq:original_problem}) can be obtained through letting ${{\bar \gamma }_{i,s}}(t) = {\bar u}_{i,s}^{\star}(t)$ for all $i \in {\mathcal N}_s^e$, $s \in {\mathcal S}^e$ and $t \in \{1, 2, \ldots\}$ with $({\bar u}_{1,1}^{\star}(t), \ldots, {\bar u}_{N_{|{\mathcal S}^e|}^e, |{\mathcal S}^e|}^{\star}(t))$ being the optimal time average achievable data rates of all MBB users for (\ref{eq:original_problem}). Therefore, (\ref{eq:Jensen_problem}) and (\ref{eq:original_problem}) are equivalent.

% The critic of this problem, the equilibrium between the (Jensen problem and original problem).
% 1) the introduction of $\gamma_{i,s}(t)$ incurs that the feasible region of Jensen problem is widen.
% 2) when $\gamma_{i,s}(t)$
We can observe that (\ref{eq:Jensen_problem}) only includes time average terms; thus, a drift-plus-penalty technique \cite{Neely2014A} is explored to alleviate (\ref{eq:Jensen_problem}).
Specifically, to enforce the constraint (\ref{eq:Jensen_problem}c), we use (\ref{eq:Queue_ui}) to define the virtual queue ${{Q}_{i,s}}(t)$ for all $i$, $s$, $t$. Then we can conclude that the constraint (\ref{eq:Jensen_problem}c) is satisfied if the mean-rate stability condition (\ref{eq:Queue_EQ}) holds \cite{Neely2014A}.

Similarly, we define other two virtual queues $Z_{i,s}(t)$, ${{H}_{j}}(t)$ for all $i$, $s$ and $j$ using (\ref{eq:Queue_Z}) and (\ref{eq:Queue_H}) to enforce constraints (\ref{eq:Jensen_problem}b) and (\ref{eq:Jensen_problem}d), respectively. Then, the constraints (\ref{eq:Jensen_problem}b) and (\ref{eq:Jensen_problem}d) are satisfied if the corresponding mean-rate stability conditions in (\ref{eq:Queue_EQ}) hold, for all $i$, $s$, and $j$. %%It is noteworthy that the $[\cdot]^+$ operation is not required in (\ref{eq:Queue_EZ}) as the (\ref{eq:Jensen_problem}b) is a family of active constraints.
%Since (\ref{eq:original_problem}) and (\ref{eq:Jensen_problem}) are equivalent and time average constraints in (\ref{eq:Jensen_problem}) can be enforced if the defined virtual queues are mean-rate stable, we can conclude that the time average constraints in (\ref{eq:original_problem}) can also be satisfied by stabilizing these virtual queues.
This completes the proof.

\subsection{Proof of Lemma \ref{lemma:1}}
We discuss the upper bound of $\frac{1}{2}{({[ {Q_{i,s}(t + 1)} ]^ + })^2}$ in three cases. According to {(\ref{eq:Queue_ui})} and the non-negative operation,

Case 1: when ${Q_{i,s}(t + 1)}  \ge 0$ and ${Q_{i,s}(t)}  \ge 0$, we can obtain
\begin{equation}\label{eq:31}
\begin{array}{*{20}{l}}
{\frac{1}{2}{{({{[{Q_{i,s}}(t + 1)]}^ + })}^2}} = \frac{1}{2}{{({{[{Q_{i,s}}(t)]}^ + })}^2} +
{{[{Q_{i,s}}(t)]}^ + }(C_s^{th} - {u_{i,s}}(t))
{ + \frac{1}{2}{{(C_s^{th} - {u_{i,s}}(t))}^2}}.
\end{array}
\end{equation}

Case 2: when ${Q_{i,s}(t + 1)} \ge 0$ and ${Q_{i,s}(t)} < 0$, we can achieve $C_s^{th}-{{u}_{i,s}}(t)>{{Q}_{i,s}}(t+1)\ge 0$, ${{[ Q_{i,s}(t) ]}^{+}}=0$ and
\begin{equation}\label{eq:31}
\begin{array}{*{20}{l}}
{\frac{1}{2}{{({{[{Q_{i,s}}(t + 1)]}^ + })}^2} < \frac{1}{2}{{(C_s^{th} - {u_{i,s}}(t))}^2}}
{ = \frac{1}{2}{{({{[{Q_{i,s}}(t)]}^ + })}^2} + {{[{Q_{i,s}}(t)]}^ + }(C_s^{th} - {u_{i,s}}(t))}
{ + \frac{1}{2}{{(C_s^{th} - {u_{i,s}}(t))}^2}}.
\end{array}
\end{equation}

Case 3: when ${{Q_{i,s}(t + 1)} } < 0$, we can obtain
\begin{equation}\label{eq:31}
\begin{array}{*{20}{l}}
{\frac{1}{2}{{({{[{Q_{i,s}}(t + 1)]}^ + })}^2} = 0}\\
\qquad \qquad { \le \frac{1}{2}({{[{Q_{i,s}}(t)]}^ + } + {{(C_s^{th} - {u_{i,s}}(t))}^2}}\\
\qquad \qquad { = \frac{1}{2}{{({{[{Q_{i,s}}(t)]}^ + })}^2} + {{[{Q_{i,s}}(t)]}^ + }(C_s^{th} - {u_{i,s}}(t))}
{ + \frac{1}{2}{{(C_s^{th} - {u_{i,s}}(t))}^2}}.
\end{array}
\end{equation}

Therefore, we have
\begin{equation}\label{eq:L_Q}
\begin{array}{*{20}{l}}
{\frac{1}{2}{{({{[{Q_{i,s}}(t + 1)]}^ + })}^2} \le \frac{1}{2}{{({{[{Q_{i,s}}(t)]}^ + })}^2}}
{ + {{[{Q_{i,s}}(t)]}^ + }(C_s^{th} - {u_{i,s}}(t)) + \frac{1}{2}{{(C_s^{th} - {u_{i,s}}(t))}^2}}.
\end{array}
\end{equation}

Similarly, according to {(\ref{eq:Queue_Z}), (\ref{eq:Queue_H})} and the non-negative operation, we have
\begin{equation}\label{eq:L_Z}
\begin{array}{*{20}{l}}
\begin{array}{l}
\frac{1}{2}{({[{Z_{i,s}}(t + 1)]^+ })^2} = \frac{1}{2}{({[{Z_{i,s}}(t)]^+ })^2} +
\end{array}
{{{[{Z_{i,s}}(t)]}^+ }({\gamma _{i,s}}(t) - {u_{i,s}}(t)) + \frac{1}{2}{{({\gamma _{i,s}}(t) - {u_{i,s}}(t))}^2}},
\end{array}
\end{equation}
and
\begin{equation}\label{eq:L_H}
\begin{array}{*{20}{l}}
{\frac{1}{2}{{({{[{H_j}(t + 1)]}^ + })}^2} \le \frac{1}{2}{{({{[{H_j}(t)]}^ + })}^2}}
{ + {{[{H_j}(t)]}^ + }({p_j}(t) - {{\tilde p}_j} + p_j^c) + \frac{1}{2}{{({p_j}(t) - {{\tilde p}_j} + p_j^c)}^2}}.
\end{array}
\end{equation}

With inequalities (\ref{eq:L_Q})-(\ref{eq:L_H}), we can obtain a new inequality by utilizing the definition of Lyapunov drift. Next, we can achieve (\ref{eq:upper_bound}) by adding $-V\left( g(t)-\rho \sum\nolimits_{j=1}^{N}{{{p}_{j}^{\rm tot}}(t)} \right)$ to both sides of the new inequality. This completes the proof.

\subsection{Proof of Proposition \ref{lemma:lemma_uav_location}}
For any given UAV transmit power ${\mathcal P}(t)$, UAVs' locations ${{\mathcal {X}}(t-1)}$ at the previous time slot $t-1$, and the acceptance set of slice requests ${\mathcal A}(t)$, the variables ${{\mathcal {X}}(t)}$ in (\ref{eq:MBB_subproblem}) can be optimized via mitigating the following problem:
%Problem: $\{{{[{{Q}_{i}}(t)]}^{+}}+{{Z}_{i}}(t)\}$ Positive and negative
\begin{subequations}\label{eq:UAV_location_problem}
\begin{alignat}{2}
& \mathop {\rm Maximize }\limits_{{{\mathcal {X}}(t)}} {\mkern 1mu} \text{ } {\sum\nolimits_{i,s} {\{ {{[{Q_{i,s}}(t)]}^ + } + {{[{Z_{i,s}}(t)]^+} }\} {u_{i,s}}(t)} }  \\
&{\rm s.t: \quad } {{\rm (\ref{eq:original_problem}e),(\ref{eq:original_problem}f).}}
\end{alignat}
\end{subequations}
%where ${\mathcal I}_s(t) \buildrel \Delta \over = \{i | \exists a_{ij,s}(t) = 1, {i \in \mathcal I}\}$ represents the admitted MBB USER set at time slot $t$.

To simplify (\ref{eq:UAV_location_problem}a), we introduce slack variables $\{ \eta_{i,s}\}$, with which (\ref{eq:UAV_location_problem}) can be reformulated as
\begin{subequations}\label{eq:UAV_location_equal_problem}
\begin{alignat}{2}
& \mathop {\rm Maximize }\limits_{{{{\mathcal {X}}(t)}, {\{ \eta_{i,s}}(t) \} }} {\mkern 1mu} \text{ } {\sum\nolimits_{i,s} {\{ {{[{Q_{i,s}}(t)]}^ + } + {{[{Z_{i,s}}(t)]^+} }\} {\eta _{i,s}}(t)} }  \\
&{\rm s.t:} \quad {u_{i,s}}(t) \ge {\eta _{i,s}}(t),\forall i,s,t \\
& \qquad {\rm (\ref{eq:original_problem}e),(\ref{eq:original_problem}f).}
\end{alignat}
\end{subequations}

If $\eta_{i,s}^{\star}$ is the optimal solution to (\ref{eq:UAV_location_equal_problem}) such that the constraint (\ref{eq:UAV_location_equal_problem}b) is satisfied with strict inequality, we can then decrease $u_{i,s}(t)$ to make (\ref{eq:UAV_location_equal_problem}b) active without changing the value of (\ref{eq:UAV_location_equal_problem}a). Therefore, (\ref{eq:UAV_location_equal_problem}) is equivalent to (\ref{eq:UAV_location_problem}).

As ${h_{ij,s}}(t) = \frac{{{\theta _{ij,s}(t)}}}{{{|g_j(t)-g_{i,s}|^2} + ||{{\bm v}_j}(t) - {{\bm x}_{i,s}(t)}|{|_2^2}}}$, the achievable data rate of user $i$ can be expressed as ${u_{i,s}}(t) = \sum\nolimits_{j \in \mathcal J} {{a_{ij,s}}(t)W^e(t){R_{ij,s}}(t)}$ with
\begin{equation}\label{eq:R_ijs}
\begin{array}{l}
{{R_{ij,s}}(t) = }
{\log _2}\left( {1 + \frac{{{p_j}(t){\theta _{ij,s}}(t)/(|{g_j}(t) - {g_{i,s}}{|^2} + ||{\bm v_j}(t) - \bm {x_{i,s}}(t)||_2^2)}}{{{N_0}{W^e}(t) + \sum\limits_{k \in {\mathcal J}\backslash \{ j\} } {\frac{{{p_k}(t){\theta _{ik,s}}(t)}}{{|{g_k}(t) - {g_{i,s}}{|^2} + ||{\bm v_k}(t) - {\bm x_{i,s}}(t)||_2^2}}} }}} \right)\\
= {\log _2}\left( {{N_0}{W^e}(t) + \sum\limits_{k \in {\mathcal J}} {\frac{{{p_k}(t){\theta _{ik,s}}(t)}}{{|{g_k}(t) - {g_{i,s}}{|^2} + ||{\bm v_k}(t) - {\bm x_{i,s}}(t)||_2^2}}} } \right) -
{\log _2}\left( {{N_0}{W^e}(t) + \sum\limits_{k \in {\mathcal J}\backslash \{ j\} } {\frac{{{p_k}(t){\theta _{ik,s}}(t)}}{{|{g_k}(t) - {g_{i,s}}{|^2} + ||{\bm v_k}(t) - {\bm x_{i,s}}(t)||_2^2}}} } \right).
\end{array}
\end{equation}

(\ref{eq:UAV_location_equal_problem}) is not convex due to the non-convex constraints (\ref{eq:original_problem}f), and (\ref{eq:UAV_location_equal_problem}b). Therefore, we may not find efficient methods to obtain the optimal solution to (\ref{eq:UAV_location_equal_problem}).
Although (\ref{eq:UAV_location_equal_problem}b) is not concave with respect to (w.r.t) ${\bm v}_j(t)$, we can observe that the first term on the right-hand-side of (\ref{eq:R_ijs}) is convex w.r.t ${||{{\bm v}_k}(t) - {{\bm x}_{i,s}(t)}||_2^2}$, $\forall k \in {\mathcal J}$.
%and the second term is concave w.r.t ${||{{\bm v}_k}(t) - {{\bm x}_{i,s}(t)}||_2^2}$, $\forall k \in {\mathcal J}$.
Accordingly, a slack variable ${{B}_{ik,s}(t)} = {||{{\bm v}_k}(t) - {{\bm x}_{i,s}(t)}||_2^2}$, $\forall i, k \ne j$ is involved to transform (\ref{eq:UAV_location_equal_problem}) into the following new problem
\begin{subequations}\label{eq:UAV_location_equal_problem_S}
\begin{alignat}{2}
& \mathop {\rm Maximize }\limits_{{\mathcal {X}}(t), \{ \eta_{i,s}(t) \}, \{ B_{ik,s}(t) \}} {\mkern 1mu} \text{ } {\sum\nolimits_{i,s} {\{ {{[{Q_{i,s}}(t)]}^ + } + {{[{Z_{i,s}}(t)]^+} }\} {\eta _{i,s}}(t)} }  \\
& {\rm s.t:} \text{ } \sum\nolimits_{j \in {\mathcal J}} {a_{ij,s}}(t)W^e(t)( {{\hat R}_{i,s}}(t) + {{{\tilde R}_{ij,s}}(t)})  \ge {\eta _{i,s}}(t), \forall i,s,t \\
& \quad {B_{ik,s}(t)} \le ||{{\bm v}_k}(t) - {{\bm x}_{i,s}(t)}||{^2},\forall i,s,k \ne j,t \\
& \quad {\rm (\ref{eq:original_problem}e),(\ref{eq:original_problem}f).}
\end{alignat}
\end{subequations}
where
\begin{equation}
{{\hat R}_{i,s}}(t) ={{{\log }_2}\left( {{N_0}{W^e}(t) + \sum\limits_{k \in {\cal J}} {\frac{{{p_k}(t){\theta _{ik,s}}(t)}}{{|{g_k}(t) - {g_{i,s}}{|^2} + {||{{\bm v}_k}(t) - {{\bm x}_{i,s}(t)}||_2^2}}}} } \right)},
\end{equation}
and
\begin{equation}
    {\tilde R_{ij,s}}(t) =  - {\log _2}\left( {{N_0W^e(t)} + \sum\limits_{k \in {\mathcal J}\backslash \{ j\} } {\frac{{{{p_k}(t)\theta _{ik,s}(t)}}}{{{|g_k(t)-g_{i,s}|^2} + {B_{ik,s}}(t)}}} } \right).
\end{equation}

Similar to (\ref{eq:UAV_location_equal_problem}), after introducing the slack variable $B_{ik,s}(t)$, (\ref{eq:UAV_location_equal_problem_S}) is equivalent to (\ref{eq:UAV_location_equal_problem}). Unfortunately, (\ref{eq:UAV_location_equal_problem_S}) is still non-convex as (\ref{eq:original_problem}f), (\ref{eq:UAV_location_equal_problem_S}b), and (\ref{eq:UAV_location_equal_problem_S}c) are non-convex.

To handle the non-convexity of (\ref{eq:UAV_location_equal_problem_S}), the SCA technique is explored. It can be observed that $\hat R_{i,s}(t)$, $\forall i,s,j$ is convex w.r.t $B_{ik,s}(t)$ and will be globally lower-bounded by its first-order Taylor expansion at any local point \cite{Stephen2004Convex}. Therefore, for a given local point at the $(r+1)$-th iteration ($r \ge 0$), denoted by ${{\bm v}_k^{(r)}(t)}$, $\hat R_{i,s}(t)$ is lower-bounded by
\begin{equation}\label{eq:Rij}
\begin{array}{l}
{{\hat R}_{i,s}}(t) \ge \tilde f_{i,s}(t) = {\log _2}\left( {{N_0W^e(t)} + \sum\nolimits_{k \in {\mathcal J}} {\frac{{{{p_k}(t)\theta _{ij,s}(t)}}}{{{|g_k(t)-g_{i,s}|^2} + ||{\bm v}_k^{(r)}(t) - {{\bm x}_{i,s}(t)}|{|_2^2}}}} } \right)\\
 - \sum\limits_{k \in {\mathcal J}} {\frac{{\frac{{{{p_k}(t)\theta _{ij,s}(t)}}}{{{{\left( {{|g_k(t)-g_{i,s}|^2} + ||{\bm v}_k^{(r)}(t) - {{\bm x}_{i,s}(t)}|{|_2^2}} \right)}^2}}}\left( {||{{\bm v}_k}(t) - {{\bm x}_{i,s}(t)}|{|_2^2} - ||{\bm v}_k^{(r)}(t) - {{\bm x}_{i,s}(t)}|{|_2^2}} \right)}}{{\left( {{N_0W^e(t)} + \sum\limits_{k \in {\mathcal J}} {\frac{{{{p_k}(t)\theta _{ij,s}(t)}}}{{{|g_k(t)-g_{i,s}|^2} + ||{\bm v}_k^{(r)}(t) - {{\bm x}_{i,s}(t)}|{|_2^2}}}} } \right)\ln 2}}} \\
 = D_{i,s}^{(r)}(t) - \sum\limits_{k \in {\mathcal J}} {E_{ik,s}^{(r)}(t)( {||{{\bm v}_k}(t) - {{\bm x}_{i,s}(t)}|{|_2^2} -}} {{||{\bm v}_k^{(r)}(t) - {{\bm x}_{i,s}(t)}|{|_2^2}} )},
\end{array}
\end{equation}
where
\begin{equation}
    D_{i,s}^{(r)}(t) = {\log _2}\left( {{N_0W^e(t)} + \sum\limits_{k \in {\mathcal J}} {\frac{{{{p_k}(t)\theta _{ij,s}(t)}}}{{{|g_k(t)-g_{i,s}|^2} + ||{\bm v}_k^{(r)}(t) - {{\bm x}_{i,s}(t)}|{|_2^2}}}} } \right),
\end{equation}
and
\begin{equation}
    E_{ik,s}^{(r)}(t) = {{\frac{{{{p_k}(t)\theta _{ij,s}(t)}}}{{{{\left( {{|g_k(t)-g_{i,s}|^2} + ||{\bm v}_k^{(r)}(t) - {{\bm x}_{i,s}(t)}|{|_2^2}} \right)}^2{{2^{D_{i,s}^{(r)}(t)}\ln 2}}}}}}}.
\end{equation}

Besides, for a given location point $({\bm v}_{j}^{(r)}(t), {\bm v}_{k}^{(r)}(t))$, we can obtain the lower-bound of ${{\left\| {{{\bm v}}_{j}}(t)-{{{\bm v}}_{k}}(t) \right\|}^{2}}$ via the first-order Taylor expansion as described below:
\begin{equation}\label{eq:xjk}
\begin{array}{*{20}{l}}
{|| {{{\bm v}_j}(t) - {{\bm v}_k}(t)} ||^2} \ge  - {|| {{\bm v}_j^{(r)}(t) - {\bm v}_k^{(r)}(t)} ||^2} +
2{( {{\bm v}_j^{(r)}(t) - {\bm v}_k^{(r)}(t)} )^{\rm T}}\left( {{{\bm v}_j}(t) - {{\bm v}_k}(t)} \right).
\end{array}
\end{equation}

Similarly, for a given location point ${\bm v}_k^{r}(t)$, $||{{\bm v}_k}(t) - {{\bm x}_{i,s}(t)}||{^2}$ is lower-bounded by
\begin{align}\label{eq:xk}
||{{\bm v}_k}(t) - {{\bm x}_{i,s}(t)}||{^2} \ge  {|| {{\bm v}_k^{(r)}(t) - {\bm x}_{i,s}(t)} ||^2} + 2{( {{\bm v}_k^{(r)}(t) - {\bm x}_{i,s}(t)} )^{\rm T}} \notag \left( {{{\bm v}_k}(t) - {{\bm x}_{i,s}(t)}} \right).
\end{align}

Besides, it can be proved that all the first-order Taylor expansions mentioned above satisfy the conditions in Assumption \ref{assumption_1}.

For any local point ${\mathcal X}^{(r)}(t) = \{{\bm v}_k^{(r)}(t)\}$, by referring to (\ref{eq:Rij})-(\ref{eq:xk}), (\ref{eq:UAV_location_equal_problem_S}) is approximated as (\ref{eq:UAV_location_equal_problem_approximate}). Till now, all non-convex constraints are transformed into convex ones, and (\ref{eq:UAV_location_equal_problem_S}a) is linear. Therefore, the transformed problem is convex. This completes the proof.

\subsection{Proof of Proposition \ref{lemma:lemma_UAV_power}}
For any given slice request acceptance set ${\mathcal A}(t)$ as well as UAVs' locations ${\mathcal X}(t)$, the UAV transmit power of (\ref{eq:MBB_subproblem}) can be optimized via mitigating the following problem
\begin{subequations}\label{eq:UAV_power_problem}
\begin{alignat}{2}
& \mathop {\rm Maximize }\limits_{{\mathcal {P}}(t),\{ \eta_{i,s}(t) \}} {\mkern 1mu} \text{ } - V\rho \sum\nolimits_{j \in {\mathcal J}} {{p_j}(t)}  - \sum\nolimits_{j \in {\mathcal J}} {{{[{H_j}(t)]}^ + }{p_j}(t)}  + \sum\nolimits_{i,s} {\{ {{[{Q_{i,s}}(t)]}^ + } + {{[{Z_{i,s}}(t)]^+} }\} {\eta _{i,s}}(t)} \\
& {\rm s.t:} \text{ } \sum\limits_{j \in {\mathcal J}} {{a_{ij,s}}(t)W^e(t){{\log }_2}\left( {1 + \frac{{{p_j}(t){h_{ij,s}}(t)}}{{{N_0W^e(t)} + \sum\nolimits_{k \in {\mathcal J}\backslash \{ j\} } {{p_k}(t){h_{ik,s}}(t)} }}} \right)} \ge {\eta _{i,s}}(t), \quad \forall i,s,t \\
& \quad {\rm (\ref{eq:original_problem}d) \text{ } is \text{ } satisfied.}
\end{alignat}
\end{subequations}

Owing to the non-convex constraint (\ref{eq:UAV_power_problem}b), (\ref{eq:UAV_power_problem}) is non-convex; as a result, it is challenging to achieve its optimal solution. However, we observe that (\ref{eq:UAV_power_problem}b) is a difference of two concave functions w.r.t $p_j(t)$, $\forall j$. Accordingly, we adopt the SCA technique again to approximate (\ref{eq:UAV_power_problem}b). Specifically,
$R_{ij,s}(t)$ can be rewritten as ${R_{ij,s}}(t) = {\hat R_{i,s}}(t) - {{\overset{\scriptscriptstyle\smile}{R}}_{ij,s}}(t)$, where ${{\overset{\scriptscriptstyle\smile}{R}}_{ij,s}}(t) = {\log _2}\left( {{N_0W^e(t)} + \sum\nolimits_{k \in {\mathcal J}\backslash \{ j\} } {{p_k}(t){h_{ik,s}}(t)} } \right)$. For any local point ${\mathcal P}^{(r)}(t) = \{p_j^{(r)}(t)\}$, via the first-order Taylor expansion ${{\overset{\scriptscriptstyle\smile}{R}}_{ij,s}}(t)$ is upper-bounded by
\begin{equation}\label{eq:8}
\begin{array}{l}
{{\mathord{\buildrel{\lower3pt\hbox{$\scriptscriptstyle\smile$}}
\over R} }_{ij,s}}(t) \le {\log _2}\left( {{N_0W^e(t)} + \sum\nolimits_{k \in {\mathcal J}\backslash \{ j\} } {p_k^{(r)}(t){h_{ik,s}}(t)} } \right)\\
\qquad \quad + \sum\nolimits_{k \in {\mathcal J}\backslash \{ j\} } {\frac{{{h_{ik,s}}(t)}}{{\left( {{N_0W^e(t)} + \sum\nolimits_{k \in {\mathcal J}\backslash \{ j\} } {p_k^{(r)}(t){h_{ik,s}}(t)} } \right)\ln 2}}} \left( {{p_k}(t) - p_k^{(r)}(t)} \right)\\
\qquad  = F_{ij,s}^{(r)}(t) + \sum\nolimits_{k \in {\mathcal J}\backslash \{ j\} } {G_{ik,s}^{(r)}(t)} \left( {{p_k}(t) - p_k^{(r)}(t)} \right),
\end{array}
\end{equation}
where $F_{ij,s}^{(r)}(t) = {\log _2}\left( {{N_0W^e(t)} + \sum\nolimits_{k \in {\mathcal J}\backslash \{ j\} } {p_k^{(r)}(t){h_{ik,s}}(t)} } \right)$ and $G_{ik,s}^{(r)}(t) = \frac{{{h_{ik,s}}(t)}}{{2^{F_{ij,s}^{(r)}(t)}\ln 2}}$.

%To handle the non-convex constraint of (\ref{eq:UAV_power_problem}b), Note that the LHS of (\ref{eq:UAV_power_problem}b), i.e., ${{R}_{_{ij,s}}}(t)$, can be written as a difference of two concave functions w.r.t the power control variables, i.e.,
%\begin{equation}\label{eq:8}
%{R_{ij,s}}(t) = {\log _2}\left( {{N_0W^e(t)} + \sum\nolimits_{k \in {\mathcal J}} {{p_k}(t){h_{ik}}(t)} } \right) - {\log _2}\left( {{N_0W^e(t)} + \sum\nolimits_{k \in {\mathcal J}\backslash \{ j\} } {{p_k}(t){h_{ik}}(t)} } \right)
%\end{equation}
%\begin{equation}\label{eq:8}
%{\mathord{\buildrel{\lower3pt\hbox{$\scriptscriptstyle\smile$}}
%\over R} _{ij,s}}(t) = {\log _2}\left( {{N_0W^e(t)} + \sum\nolimits_{k \in J\backslash \{ j\} } {{p_k}(t){h_{ik}}(t)} } \right)
%\end{equation}
%To handle the non-convex contraint of (29.1), we apply the successive convex optimization technique to approximate ${{\overset{\scriptscriptstyle\smile}{R}}_{ij,s}}(t)$ with a convex function in each iteration. Specifically, define ${{P}^{(r)}}=\{p_{j}^{(r)}(t)\}$ as the given transmit power of UAV m in the r -th iteration. Recall that any concave function is globally upper-bounded by its first-order Taylor expansion at any point [32]. Thus, we have the following convex upper bound at the given local point pr

We can thus write the lower-bound of $R_{ij,s}(t)$ as ${R_{ij,s}}(t) \ge {\hat R}_{i,s}(t) - F_{ij,s}^{(r)}(t) - \sum\nolimits_{k \in {\mathcal J}\backslash \{ j\} } {G_{ik,s}^{(r)}(t)} ( {{p_k}(t) - p_k^{(r)}(t)} )
$. Besides, it can be proved that the first-order Taylor expansion satisfies the conditions in Assumption \ref{assumption_1}.

In summary, for any local point ${\mathcal P}^{(r)}(t)$, (\ref{eq:UAV_power_problem}) can be approximated as (\ref{eq:UAV_power_problem_approx}). Besides, all non-convex constraints are transformed into convex ones, and (\ref{eq:UAV_power_problem}a) is linear. Therefore, the transformed problem is convex. This completes the proof.

\subsection{Proof of Lemma \ref{lemma:lemma_convergent}}
Denote the opposite of the objective function value of (\ref{eq:MBB_subproblem}) by $\Gamma({{\mathcal A}^{(r)}(t)},{{\mathcal X}^{(r)}(t)},{{\mathcal P}^{(r)}(t)})$ at a local point $({{\mathcal A}^{(r+1)}(t)}, {\mathcal X}^{(r)}(t), {\mathcal P}^{(r)}(t))$.
Then, given a local point $({\mathcal X}^{(r)}(t), {\mathcal P}^{(r)}(t))$, the obtained $\Gamma({{\mathcal A}^{(r+1)}(t)},{{\mathcal X}^{(r)}(t)},{{\mathcal P}^{(r)}(t)})$ via optimizing (\ref{eq:UE_association_problem}) at the $(r+1)$-th iteration is not greater than $\Gamma({{\mathcal A}^{(r)}(t)},{{\mathcal X}^{(r)}(t)},{{\mathcal P}^{(r)}(t)})$. Given a point $({\mathcal A}^{(r+1)}(t), {\mathcal P}^{(r)}(t))$, we have, $\Gamma({{\mathcal A}^{(r+1)}(t)},{{\mathcal X}^{(r)}(t)},{{\mathcal P}^{(r)}(t)}) \ge \Gamma({{\mathcal A}^{(r+1)}(t)},{{\mathcal X}^{(r+1)}(t)},{{\mathcal P}^{(r)}(t)})$ due to the minimization of the upper-bounded problem of (\ref{eq:UAV_location_equal_problem}). Likewise, the inequality $\Gamma({{\mathcal A}^{(r+1)}(t)},{{\mathcal X}^{(r+1)}(t)},{{\mathcal P}^{(r+1)}(t)}) \ge \Gamma({{\mathcal A}^{(r+1)}(t)},{{\mathcal X}^{(r+1)}(t)},{{\mathcal P}^{(r)}(t)})$ can be obtained at $({\mathcal A}^{(r+1)}(t), {\mathcal X}^{(r+1)}(t))$. Besides, $\Gamma({{\mathcal A}^{(r)}(t)},{{\mathcal X}^{(r)}(t)},{{\mathcal P}^{(r)}(t)})$ is bounded at each iteration. Therefore, Algorithm \ref{alg:alg1} is convergent.

%The computational complexity of Algorithm 2 relies on the complexity of Algorithm 1 and total time slots $T$. Since Algorithm 1 only needs to mitigate convex problems the complexities of which are polynomial in the worst case, the computational complexity of Algorithm 2 is also polynomial.
Lemma \ref{lemma:1} points out that $\Delta (t)-V( g(t)-\rho \sum\nolimits_{j \in {\mathcal J}}{{{p}_{j}^{\rm tot}}(t)} )$ is upper-bounded at each time slot $t$. The time average of $L(t)$ tends to zero when $t \to \infty$. Therefore, Algorithm \ref{alg:final_algorithm} can make all virtual queues mean-rate stable. This completes the proof.

\ifCLASSOPTIONcaptionsoff
  \newpage
\fi

% trigger a \newpage just before the given reference
% number - used to balance the columns on the last page
% adjust value as needed - may need to be readjusted if
% the document is modified later
%\IEEEtriggeratref{8}
% The "triggered" command can be changed if desired:
%\IEEEtriggercmd{\enlargethispage{-5in}}

% references section

% can use a bibliography generated by BibTeX as a .bbl file
% BibTeX documentation can be easily obtained at:
% http://www.ctan.org/tex-archive/biblio/bibtex/contrib/doc/
% The IEEEtran BibTeX style support page is at:
% http://www.michaelshell.org/tex/ieeetran/bibtex/

% argument is your BibTeX string definitions and bibliography database(s)
%\bibliography{IEEEabrv,../bib/paper}
%
% <OR> manually copy in the resultant .bbl file
% set second argument of \begin to the number of references
% (used to reserve space for the reference number labels box)
\bibliographystyle{IEEEtran}
\bibliography{Globecom_RAN_slicing}

% Generated by IEEEtran.bst, version: 1.13 (2008/09/30)
\begin{thebibliography}{10}
\providecommand{\url}[1]{#1}
\csname url@samestyle\endcsname
\providecommand{\newblock}{\relax}
\providecommand{\bibinfo}[2]{#2}
\providecommand{\BIBentrySTDinterwordspacing}{\spaceskip=0pt\relax}
\providecommand{\BIBentryALTinterwordstretchfactor}{4}
\providecommand{\BIBentryALTinterwordspacing}{\spaceskip=\fontdimen2\font plus
\BIBentryALTinterwordstretchfactor\fontdimen3\font minus
  \fontdimen4\font\relax}
\providecommand{\BIBforeignlanguage}[2]{{%
\expandafter\ifx\csname l@#1\endcsname\relax
\typeout{** WARNING: IEEEtran.bst: No hyphenation pattern has been}%
\typeout{** loaded for the language `#1'. Using the pattern for}%
\typeout{** the default language instead.}%
\else
\language=\csname l@#1\endcsname
\fi
#2}}
\providecommand{\BIBdecl}{\relax}
\BIBdecl

\bibitem{yang2020repeatedly}
P.~Yang, X.~Xi, Q.~S.~T. Quek, J.~Chen, and X.~Cao, ``Repeatedly
  energy-efficient and fair service coverage: {UAV} slicing,'' in \emph{2020
  IEEE Global Communications Conference (Globecom)}.\hskip 1em plus 0.5em minus
  0.4em\relax IEEE, 2020, pp. 1--6.

\bibitem{Gupta2016Survey}
L.~Gupta, R.~Jain, and G.~Vaszkun, ``Survey of important issues in {UAV}
  communication networks,'' \emph{IEEE Communications Surveys \& Tutorials},
  vol.~18, no.~2, pp. 1123--1152, 2016.

\bibitem{Cao2018Airborne}
X.~Cao, P.~Yang, M.~Alzenad, X.~Xi, D.~Wu, and H.~Yanikomeroglu, ``Airborne
  communication networks: {A} survey,'' \emph{IEEE Journal on Selected Areas in
  Communications}, vol.~36, no.~9, pp. 1907--1926, 2018.

\bibitem{garcia2019performance}
A.~E. Garcia, S.~Hofmann, C.~Sous, L.~Garcia, A.~Baltaci, C.~Bach, R.~Wellens,
  D.~Gera, D.~Schupke, and H.~E. Gonzalez, ``Performance evaluation of network
  slicing for aerial vehicle communications,'' in \emph{2019 IEEE International
  Conference on Communications Workshops (ICC Workshops)}.\hskip 1em plus 0.5em
  minus 0.4em\relax IEEE, 2019, pp. 1--6.

\bibitem{Sarah2018Autonomous}
W.~Sarah, ``Autonomous drone services tested over intercontinental {5G},''
  \lowercase{
  https://5g.co.uk/news/drone-services-over-intercontinental-5g/4374/}, 2018.

\bibitem{hellaoui2018aerial}
H.~Hellaoui, O.~Bekkouche, M.~Bagaa, and T.~Taleb, ``Aerial control system for
  spectrum efficiency in {UAV}-to-cellular communications,'' \emph{IEEE
  Communications Magazine}, vol.~56, no.~10, pp. 108--113, 2018.

\bibitem{xilouris2018uav}
G.~K. Xilouris, M.~C. Batistatos, G.~E. Athanasiadou, G.~Tsoulos, H.~B.
  Pervaiz, and C.~C. Zarakovitis, ``{UAV}-assisted {5G} network architecture
  with slicing and virtualization,'' in \emph{2018 IEEE Globecom Workshops (GC
  Wkshps)}.\hskip 1em plus 0.5em minus 0.4em\relax IEEE, 2018, pp. 1--7.

\bibitem{faraci2020design}
G.~Faraci, C.~Grasso, and G.~Schembra, ``Design of a {5G} network slice
  extension with {MEC UAVs} managed with reinforcement learning,'' \emph{IEEE
  Journal on Selected Areas in Communications}, vol.~38, no.~10, pp.
  2356--2371, 2020.

\bibitem{ren2019achievable}
H.~Ren, C.~Pan, K.~Wang, Y.~Deng, M.~Elkashlan, and A.~Nallanathan,
  ``Achievable data rate for {URLLC-enabled UAV systems with 3-D channel
  model},'' \emph{IEEE Wireless Communications Letters}, vol.~8, no.~6, pp.
  1587--1590, 2019.

\bibitem{abbasi2020trajectory}
O.~Abbasi, H.~Yanikomeroglu, A.~Ebrahimi, and N.~Mokari, ``Trajectory design
  and power allocation for drone-assisted {NR-V2X} network with dynamic
  {NOMA/OMA},'' \emph{IEEE Transactions on Wireless Communications}, 2020, in
  press. DOI: 10.1109/TWC.2020.3008568.

\bibitem{pan2019joint}
C.~Pan, H.~Ren, Y.~Deng, M.~Elkashlan, and A.~Nallanathan, ``Joint blocklength
  and location optimization for {URLLC-enabled UAV} relay systems,'' \emph{IEEE
  Communications Letters}, vol.~23, no.~3, pp. 498--501, 2019.

\bibitem{ranjha2019quasi}
A.~Ranjha and G.~Kaddoum, ``Quasi-optimization of distance and blocklength in
  {URLLC aided multi-hop UAV relay links},'' \emph{IEEE Wireless Communications
  Letters}, vol.~9, no.~3, pp. 306--310, 2019.

\bibitem{wang2020packet}
K.~Wang, C.~Pan, H.~Ren, W.~Xu, L.~Zhang, and A.~Nallanathan, ``Packet error
  probability and effective throughput for ultra-reliable and low-latency {UAV}
  communications,'' \emph{IEEE Transactions on Communications}, 2020, in Press.
  DOI 10.1109/TCOMM.2020.3025578.

\bibitem{chen2020power}
K.~Chen, Y.~Wang, Z.~Fei, and X.~Wang, ``Power limited ultra-reliable and
  low-latency communication in {UAV}-enabled {IoT} networks,'' in \emph{2020
  IEEE Wireless Communications and Networking Conference (WCNC)}.\hskip 1em
  plus 0.5em minus 0.4em\relax IEEE, 2020, pp. 1--6.

\bibitem{ding20203d}
R.~Ding, F.~Gao, and X.~Shen, ``{3D UAV} trajectory design and frequency band
  allocation for energy-efficient and fair communication: A deep reinforcement
  learning approach,'' \emph{IEEE Transactions on Wireless Communications},
  2020, in press. DOI: 10.1109/TWC.2020.3016024.

\bibitem{wu2020optimal}
H.~Wu, F.~Lyu, C.~Zhou, J.~Chen, L.~Wang, and X.~Shen, ``Optimal uav caching
  and trajectory in aerial-assisted vehicular networks: A learning-based
  approach,'' \emph{IEEE Journal on Selected Areas in Communications}, 2020, in
  Press. DOI: 10.1109/JSAC.2020.3005469.

\bibitem{abbasfar2019}
A.~Al-Hilo, M.~Samir, C.~Assi, S.~Sharafeddine, and D.~Ebrahimi,
  ``{UAV}-assisted content delivery in intelligent transportation systems-joint
  trajectory planning and cache management,'' \emph{IEEE Transactions on
  Intelligent Transportation Systems}, 2020, in press. DOI:
  10.1109/TITS.2020.3020220.

\bibitem{mudumbai2009medium}
R.~Mudumbai, S.~K. Singh, and U.~Madhow, ``Medium access control for 60 {GHz}
  outdoor mesh networks with highly directional links,'' in \emph{IEEE
  Infocom}, 2009, pp. 2871--2875.

\bibitem{alhourani2014optimal}
A.~Alhourani, S.~Kandeepan, and S.~Lardner, ``Optimal lap altitude for maximum
  coverage,'' \emph{IEEE Wireless Communications Letters}, vol.~3, no.~6, pp.
  569--572, 2014.

\bibitem{azari2018ultra}
M.~M. Azari, F.~Rosas, K.~Chen, and S.~Pollin, ``Ultra reliable {UAV}
  communication using altitude and cooperation diversity,'' \emph{IEEE
  Transactions on Communications}, vol.~66, no.~1, pp. 330--344, 2018.

\bibitem{chen2017caching}
M.~Chen, M.~Mozaffari, W.~Saad, C.~Yin, M.~Debbah, and C.~S. Hong, ``Caching in
  the sky: Proactive deployment of cache-enabled unmanned aerial vehicles for
  optimized quality-of-experience,'' \emph{IEEE Journal on Selected Areas in
  Communications}, vol.~35, no.~5, pp. 1046--1061, 2017.

\bibitem{zeng2020simultaneous}
Y.~Zeng, X.~Xu, S.~Jin, and R.~Zhang, ``Simultaneous navigation and radio
  mapping for cellular-connected uav with deep reinforcement learning,''
  \emph{arXiv preprint arXiv:2003.07574}, 2020.

\bibitem{5G20155G}
G.~PPP, ``{5G} architecture white paper,'' {The 5G Infrastructure Public
  Private Partnership}, Tech. Rep. Version 3.0, Feb. 2020,
  \url{https://5g-ppp.eu/white-papers/}.

\bibitem{devlic2017nesmo}
A.~Devlic, A.~Hamidian, D.~Liang, M.~Eriksson, A.~Consoli, and J.~Lundstedt,
  ``{NESMO}: Network slicing management and orchestration framework,'' in
  \emph{IEEE International Conference on Communications (ICC)}, 2017, pp.
  1202--1208.

\bibitem{rost2017network}
P.~Rost, C.~Mannweiler, D.~S. Michalopoulos, C.~Sartori, V.~Sciancalepore,
  N.~Sastry, O.~Holland, S.~Tayade, B.~Han, D.~Bega \emph{et~al.}, ``Network
  slicing to enable scalability and flexibility in {5G} mobile networks,''
  \emph{IEEE Communications magazine}, vol.~55, no.~5, pp. 72--79, 2017.

\bibitem{Yang2020Joint}
H.~Yang, K.~Zhang, K.~Zheng, and Y.~Qian, ``Joint frame design and resource
  allocation for ultra-reliable and low-latency vehicular networks,''
  \emph{IEEE Transactions on Wireless Communications}, vol.~19, no.~5, pp.
  3607--3622, 2020.

\bibitem{Ren2020Joint}
H.~Ren, C.~Pan, Y.~Deng, M.~Elkashlan, and A.~Nallanathan, ``Joint pilot and
  payload power allocation for massive-{MIMO}-enabled {URLLC IIoT} networks,''
  \emph{IEEE Journal on Selected Areas in Communications}, vol.~38, no.~5, pp.
  816--830, 2020.

\bibitem{schiessl2015delay}
S.~Schiessl, J.~Gross, and H.~Al-Zubaidy, ``Delay analysis for wireless fading
  channels with finite blocklength channel coding,'' in \emph{Proceedings of
  the 18th ACM International Conference on Modeling, Analysis and Simulation of
  Wireless and Mobile Systems}.\hskip 1em plus 0.5em minus 0.4em\relax ACM,
  2015, pp. 13--22.

\bibitem{ni2019end}
R.~Ni, X.~Li, J.~Chen, S.~Chen, E.~Wang, M.~Zhu, W.~Zhang, and Y.~Chen, ``An
  end-to-end demonstration for {5G} network slicing,'' in \emph{2019 IEEE 89th
  Vehicular Technology Conference (VTC2019-Spring)}.\hskip 1em plus 0.5em minus
  0.4em\relax IEEE, 2019, pp. 1--5.

\bibitem{scardapane2016a}
S.~Scardapane, D.~Wang, and M.~Panella, ``A decentralized training algorithm
  for echo state networks in distributed big data applications,'' \emph{Neural
  Networks}, vol.~78, pp. 65--74, 2016.

\bibitem{boyd2011distributed}
S.~Boyd, N.~Parikh, E.~Chu, B.~Peleato, J.~Eckstein \emph{et~al.},
  ``Distributed optimization and statistical learning via the alternating
  direction method of multipliers,'' \emph{Foundations and
  Trends{\textregistered} in Machine learning}, vol.~3, no.~1, pp. 1--122,
  2011.

\bibitem{glorot2010understanding}
X.~Glorot and Y.~Bengio, ``Understanding the difficulty of training deep
  feedforward neural networks,'' in \emph{Proceedings of the thirteenth
  international conference on artificial intelligence and statistics}, 2010,
  pp. 249--256.

\bibitem{mnih2015human}
V.~Mnih, K.~Kavukcuoglu, D.~Silver, A.~A. Rusu, J.~Veness, M.~G. Bellemare,
  A.~Graves, M.~Riedmiller, A.~K. Fidjeland, and G.~Ostrovski, ``Human-level
  control through deep reinforcement learning,'' \emph{Nature}, vol. 518, no.
  7540, pp. 529--541, 2015.

\bibitem{Kingma2014Adam}
D.~P. Kingma and J.~Ba, ``Adam: A method for stochastic optimization,''
  \emph{Computer Science}, 2014.

\bibitem{sun2018optimizing}
C.~Sun, C.~She, C.~Yang, T.~Q. Quek, Y.~Li, and B.~Vucetic, ``Optimizing
  resource allocation in the short blocklength regime for ultra-reliable and
  low-latency communications,'' \emph{IEEE Transactions on Wireless
  Communications}, vol.~18, no.~1, pp. 402--415, 2018.

\bibitem{scutari2017parallel}
G.~Scutari, F.~Facchinei, and L.~Lampariello, ``Parallel and distributed
  methods for constrained nonconvex optimization¡ªpart {I}: Theory,''
  \emph{IEEE Transactions on Signal Processing}, vol.~65, no.~8, pp.
  1929--1944, 2017.

\bibitem{Propagation2012ITU}
ITU-R, ``Propagation data and prediction methods required for the design of
  terrestrial broadband radio access systems operating in a frequency range
  from 3 to 60 {GHz},'' {International Telecommunication Union}, Tech. Rep.
  P.1410-5, Feb. 2012.

\bibitem{Study20173GPP}
3GPP, ``Study on {3D channel model for LTE}, v12.7.0,'' {The 3rd Generation
  Partnership Project}, Tech. Rep. 36.873, Dec. 2017.

\bibitem{Technical20173GPP}
------, ``Technical specification group radio access network: study on enhanced
  {LTE} support for aerial vehicles, v15.0.0,'' {The 3rd Generation Partnership
  Project}, Tech. Rep. 36.777, Dec. 2017.

\bibitem{yang20193}
P.~Yang, X.~Cao, X.~Xi, W.~Du, Z.~Xiao, and D.~O. Wu, ``Three-dimensional
  continuous movement control of drone cells for energy-efficient communication
  coverage,'' \emph{IEEE Transactions on Vehicular Technology}, vol.~68, no.~7,
  pp. 6535--6546, 2019.

\bibitem{wang2019global}
Y.~Wang, W.~Yin, and J.~Zeng, ``Global convergence of {ADMM} in nonconvex
  nonsmooth optimization,'' \emph{Journal of Scientific Computing}, vol.~78,
  no.~1, pp. 29--63, 2019.

\bibitem{Neely2014A}
M.~J. Neely, ``A {Lyapunov} optimization approach to repeated stochastic
  games,'' in \emph{2013 51st Annual Allerton Conference on Communication,
  Control, and Computing (Allerton)}.\hskip 1em plus 0.5em minus 0.4em\relax
  IEEE, 2013, pp. 1082--1089.

\bibitem{Stephen2004Convex}
B.~Stephen and V.~Lieven, \emph{Convex Optimization}.\hskip 1em plus 0.5em
  minus 0.4em\relax Cambridge University Press, 2004.

\end{thebibliography}

\end{document}